\renewcommand{\baselinestretch}{1.3}
\definecolor{cpb}{rgb}{0,0,0}
\newcommand {\pb}[1] {{\color{cpb}{#1}}}
\definecolor{SPAColor}{rgb}{0.0,0.0,0}
\definecolor{SPCColor}{rgb}{0,0,0}
\definecolor{PBColor}{rgb}{0,0.0,0}
\definecolor{RHTcolor}{rgb}{0.9,0.9,0.9}
\definecolor{EVMcolor}{rgb}{0.9,0.9,0.9}
\definecolor{OURcolor}{rgb}{1.0,1.0,1.0}
\newcommand {\arc}[2]   {{\mathcal A}^{\mathbb R}_{#1,#2}} 
\newcommand {\darc}[2]  {{\mathcal A}^{\mathbb Z}_{#1,#2}} 
\newcommand {\rcir}     {{\mathcal C}^\mathbb R}
\newcommand {\dcir}     {{\mathcal C}^\mathbb Z}
\newcommand {\inz}      {\in \mathbb Z}
\newcommand {\zz}       {\mathbb Z^2}
\newcommand {\rr}       {\mathbb R^2}
\newcommand {\sm}       {\smallsetminus}
\newcommand {\care}[3] {\varepsilon_{#1\in#2}^{#3}}
\newcommand{\remove}[1]{}
\newtheorem{theorem}{Theorem}
\newtheorem{corollary}{Corollary}
\begin{document}

\begin{frontmatter}

\title{On Chord and Sagitta in $\zz$: An Analysis towards Fast and Robust Circular Arc Detection\tnoteref{ltitle}}
\tnotetext[ltitle]{A preliminary version of the paper was presented at GREC~2009~\cite{bera_10}.}

\author[isi]{Sahadev Bera}
\ead{sahadevbera@gmail.com}
\author[isi]{Shyamosree Pal}
\ead{shyamosree.pal@gmail.com}
\author[iit]{Partha Bhowmick\corref{corr}}
\ead{pb@cse.iitkgp.ernet.in, bhowmick@gmail.com}
\author[isi]{Bhargab B. Bhattacharya}
\ead{bhargab@isical.ac.in}

\cortext[corr]{Author for correspondence.}
\address[isi]{Advanced Computing and Microelectronics Unit, Indian Statistical Institute, Kolkata, India}
\address[iit]{Computer Science and Engineering Department, Indian Institute Technology, Kharagpur, India\bigskip\\
{\em September 09, 2014}}

\begin{abstract}

Although chord and sagitta, when considered in tandem, may reflect many underlying geometric
properties of circles on the Euclidean plane, their implications on the digital plane are not yet
well-understood. In this paper, we explore some of their fundamental properties on the digital plane
that have a strong bearing on the unsupervised detection of circles and circular arcs in a digital
image. We show that although the chord-and-sagitta properties of a real circle do not readily
migrate to the digital plane, they can indeed be used for the analysis in the discrete domain based
on certain bounds on their deviations, which are derived from the real domain. In particular, we
derive an upper bound on the circumferential angular deviation of a point in the context of chord
property, and an upper bound on the relative error in radius estimation with regard to the sagitta
property. Using these two bounds, we design a novel algorithm for the detection and parameterization
of circles and circular arcs, which does not require any heuristic initialization or manual tuning.
The chord property is deployed for the detection of circular arcs, whereas the sagitta property is
used to estimate their centers and radii. Finally, to improve the accuracy of estimation, the notion
of restricted Hough transform is used. Experimental results demonstrate superior efficiency and
robustness of the proposed methodology compared to existing techniques.
\end{abstract}

\begin{keyword}
Chord property\sep Circle detection\sep Digital geometry\sep Hough transformation\sep Sagitta property.
\end{keyword}

\end{frontmatter}

\section{Introduction}
\label{sec:intr}
The properties of a digital circle are well-studied in discrete geometry and have found diverse real-life
applications in various fields of science and engineering.
Fast and accurate recognition of circles or circular arcs in a digital image is a challenging problem with
practical relevance in computer vision~\cite{davi_97, gonz_01, pratt_87, sonk_98},
physics~\cite{carter_05,cher_84,craw_83}, biology and medicine~\cite{bigg_72,paton_70}, and industrial 
engineering~\cite{kasa_76,thomas_89}.
Most of the existing algorithms for detection of circles and circular arcs are based on the properties of
circles on the real or Euclidean plane.
However, the properties of a real circle cannot be readily used for analyzing a digital circle, since the
latter essentially comprises a sequence of points on the integer plane. 
In this paper, we study some of these real-geometric properties of the circle, which can be used to detect
digital circles and circular arcs after considering their impact on $\zz$.

The paper is organized as follows.
A brief review of existing work related with digital (circular) arc segmentation is presented in
Sec.~\ref{ssec:relaWorks}.
The chord-and-sagitta properties a real circle are introduced in Sec.~\ref{sec:chordsagi}.
We derive some important results related to these properties for digital circles, in the respective
subsections.
Sec.~\ref{sec:proposed} describes how these techniques are applied to recognize circular arcs and to estimate
their centers and radii. 
Experimental results are reported in Sec.~\ref{sec:results}. 
Finally, in Sec.~\ref{sec:conclu}, we draw the concluding notes and discuss future research issues.

\subsection{Related Work}
\label{ssec:relaWorks}
There exist several algorithms in the literature for the detection of circular arcs in a digital image.
Most of these algorithms are based on Hough transform (HT) or its variants~\cite{ chiu_05, coeurj_04,
davies_84, duda_75, illin_88, leav_93, kim_01, xu_93}.
Several other techniques have been proposed later to improve the performance of HT, such as Fast Hough
Transform (FHT)~\cite{guil_95, li_86}, Randomized Hough Transform (RHT)~\cite{xu_93, xu_90}, and Adaptive
Hough Transform (AHT)~\cite{illin_87}.
The main objective of these HT-based methods is either to reduce the computation or to reduce the memory requirement.
In one such method, the parameter space is decomposed into several lower dimension parameter spaces~\cite{yip_92}.
It then estimates the parameters of the circles based on local geometrical properties; however, they suffer
from poor consistency and location accuracy because of quantization error.
To overcome these disadvantages, Ho and Chen [6] used the global geometrical symmetry of circle to reduce the
dimension of the parameter space.
Xu {\it et~al.}~\cite{xu_93} presented a randomized Hough transform, which
reduces the storage requirement and computational time significantly compared to other methods based on the
conventional HT.

Li {\it et al.}~\cite{li_86} have developed a fast algorithm for the Hough transform that can be incorporated into the solutions to many problems in computer vision such as line detection, plane detection, segmentation, and motion estimation. 
The fast Hough transform (FHT) algorithm assumes that image space features “vote” for sets of points lying on hyperplanes in the parameter space. 
It recursively divides the parameter space into hypercubes from low to high resolution and performs the Hough
transform only on the hypercubes when their votes exceed a selected threshold.
The decision on whether a hypercube receives a vote from a hyperplane depends on whether the hyperplane intersects the hypercube. 
This hierarchical approach leads to a significant reduction of both computation and storage.
Based on CHT, a size-invariant circle-detection algorithm was proposed~\cite{athe_99}. 

Some methods use randomized selection of edge points and geometrical properties of circle instead of using the information of edge pixels and evidence histograms in the parameter space. 
Kim {\it et~al.}~\cite{kim_01} have proposed a two-step circle detection algorithm, given a pair of intersecting chords, in which the first step is to compute the center of the circle using 2D-HT.
In the second step, the 1D radius histogram is used to identify the circle and to compute its radius.

Although HT is robust against noise, clutter, object defect, and shape distortion, its main drawback is high computational time and space.
So, several other techniques have also been proposed, which do not use histograms in the parameter space, such as least-squares fitting~\cite{cher_05, thomas_89}, randomization and geometry~\cite{chen_01b, chun_07, ho_95}, and genetic algorithm~\cite{nagao_93}.
These non-HT-based algorithms can extract circles and arcs faster than the HT-based methods, as they do not use histograms in the parameter space.
Some of these algorithms are discussed below.

Xu {\it et al.}~\cite{xu_90} presented an approach that randomly selects three pixels. 
The method selects three non-collinear edge pixels and votes for the circle parameters which are found by using the circle equation. 
In order to improve the performance, Chen and Chung~\cite{chen_01b} proposed an efficient randomized
algorithm (RCD) for detecting circles that does not use HT as well as the accumulator needed for saving the
information of related parameters.
The underlying concept in RCD is to first select four edge pixels randomly in the image and then to use a distance criterion to determine whether there might exist a possible circle, and finally to collect further evidence for determining whether or not, it is indeed a circle.

Gradient information of each edge pixel is used in~\cite{rad_03} to reduce the computing time or the memory
requirement.
After computing the gradient of the whole image, the pair of anti-parallel vectors are searched to detect
circle.
Chiu {\it et~al.}~\cite{chiu_05} proposed an effective voting method for circle detection, which also does not
use HT.
Rather, it reduces the data set by a sampling technique; 
once the first two points are chosen, the third one is chosen following certain criteria of circularity.
\pb{The UpWrite method~\cite{mcla_98} works with local models of the pixels within small neighborhoods, 
computed by a spot algorithm, to classify these local models by their geometric features for circle detection.}

Recently Jia {\em et al}.~\cite{jia_11} present a fast randomized circle detection algorithm, which can be
applied to determine the centers and radii of circular components. 
Firstly, the gradient of each pixel in the image is computed using Gaussian template. 
Then, the edge map of the image, obtained by applying Canny edge detector, is tackled to acquire the curves
consisting of \pb{$8$-connected} edge points.  
Subsequently, for the detection of the center, edge points for each curve are picked up, and the point,
passed through by \pb{most of the} gradient lines of the edge points, corresponds to a center.  
The radius \pb{is estimated from the distances of} the corresponding edge points
\pb{from the center}. 
The algorithm performs much better in terms of efficiency compared to randomized circle detection algorithm
(RCD). 
\pb{More recent work related to line and circle detection can be seen in \cite{Kole12,Kole14}.
In~\cite{Kole12}, Kolesnikov proposed dynamic programming solutions to multi-model curve approximation
constrained by a given error threshold.
In~\cite{Kole14}, Kolesnikov and Kauranne have proposed a parameterized model of rate-distortion curve, which
produces a multi-model approximation by minimizing the associated cost function.}

\subsection{Our Contribution}
\label{ssec:conti}
All prior work use different properties of real circles while detecting circles and circular arcs in the
digital plane.
The {\em chord property}~\cite{weis_ch} and the {\em sagitta property}~\cite{weis_sa} are two most important
properties of real circles, whose deviation in $\zz$ has to be analyzed so that they can be efficiently used
in the detection of digital circles and arcs.
In this paper, we study, for the first time, some digital-geometric properties of chord and sagitta.
Based on these properties, we propose a novel technique for recognizing digital circles and circular arcs.
The chord property in $\zz$ is first used to identify the circular curve segments.
The sagitta property is then used to determine the radii and the centers of the circular curve segments.
The proposed technique not only detects isolated circles and circular arcs, but also joins the concentric arc
segments with the same or nearly same radius to make out a full circle or a larger arc.
For improving the accuracy of the radii and the centers, a {\em restricted Hough transform} (rHT) is applied.

\section{Chord and Sagitta Properties in $\zz$}
\label{sec:chordsagi}
In order to study the properties of chord and sagitta in $\zz$, we start with some definitions from the
literature~\cite{gonz_01,klette_04}.
A pixel $p$ is a point in $\zz$. 
Two pixels $p(x_p,y_p)$ and $q(x_q,y_q)$, $p\neq q$, are $8$-neighbors of each other if and only if $\max\{|x_p-x_q|,|y_p-y_q|\}\leq 1$.
A sequence of pixels defines a (8-connected) digital curve if and only if each pixel, excepting the first one and the last one, has exactly two neighbors in its $8$-neighborhood.
For an open digital curve, the first and the last pixel have one $8$-neighbor each;
for a closed digital curve, they have two $8$-neighbors each.

\subsection{Chord Property}
\label{ssec:chord}
\begin{figure}[t]\center
\includegraphics[width=\textwidth]{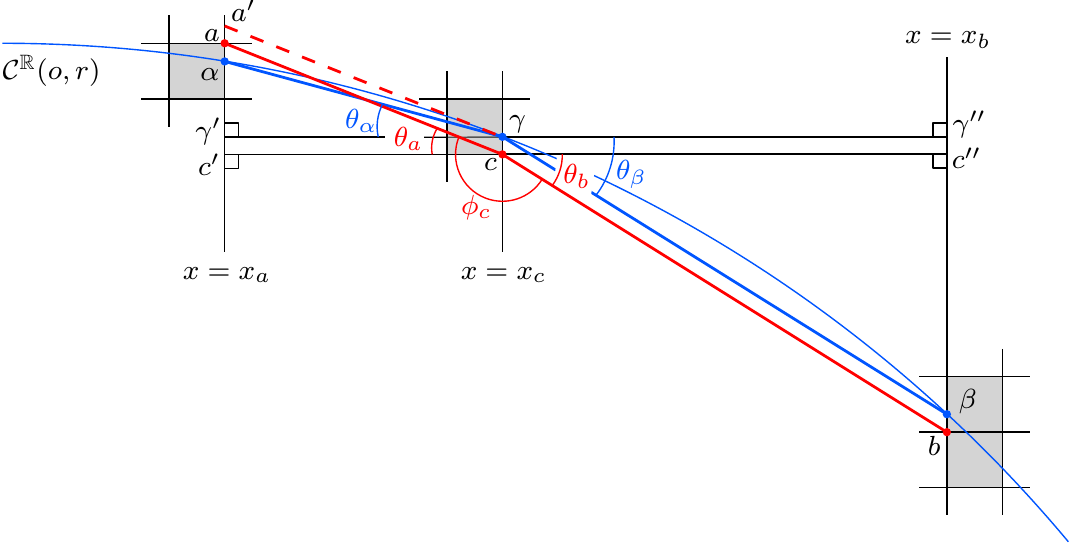}
\caption{Deviation of the chord property (Case~1).
Points in $\rr$ ($\alpha, \beta, \ldots$) or on the real circle $\rcir(o,r)$ are shown in blue, 
and points shown in red ($a, b, c$) belong to the digital circle, $\dcir(o,r)$.
$\arc{\alpha}{\beta}$ is an arc of $\rcir(o,r)$, which corresponds to the given 
digital (circular) arc $\darc{a}{b}$.
As $c$ changes its place along $\darc{a}{b}$ such that $\left|y_\gamma-y_c\right|<\frac12$, 
the  angle $\phi_c$ gets deviated by $\pm\delta_{\phi}$.}
\label{fig:chord-dev}
\end{figure}

We show here the deviation of chord property for digital circular arcs.
See Figure~\ref{fig:chord-dev} for an illustration.
Let $\rcir(o,r)$ be the real circle having center at $o(0,0)$ and radius $r\inz^{+}$.
Let $\arc{\alpha}{\beta}$ be an arc of $\rcir(o,r)$ having endpoints 
$\alpha(x_\alpha,y_\alpha)$ and $\beta(x_\beta,y_\beta)$.
Let $\gamma(x_\gamma,y_\gamma)$ be another point on $\arc{\alpha}{\beta}$.
Then, by the {\em chord property} of real circle, ${\alpha\beta}$ subtends a fixed angle 
$\phi_\gamma$ at $\gamma$, irrespective of its position on $\arc{\alpha}{\beta}\sm\{\alpha,\beta\}$~\cite{weis_ch}.
However, this is not true for a digital (circular) arc, as shown below. 

Let $\dcir(o,r)$ be the digital circle corresponding to $\rcir(o,r)$,
and $\darc{a}{b}$ be the digital arc corresponding to $\arc{\alpha}{\beta}$,
where $a(x_a,y_a)$, $b(x_b,y_b)$, $c(x_c,y_c)$ are the respective integer points corresponding to 
$\alpha,\beta,\gamma$;
$\phi_c$ be the angle subtended by the line segment ${ab}$ at $c$;
$c'$ and $\gamma'$ be the respective feet of the perpendiculars dropped from $c$ and $\gamma$ to the
line $x=x_a$, and $c''$ and $\gamma''$ be those from $c$ and $\gamma$ to $x=x_b$.
Let $\theta_a$, $\theta_\alpha$, $\theta_b$, and $\theta_\beta$ be the acute angles subtended at $c$
and $\gamma$ by the corresponding perpendiculars.
Then, the angle $\phi_c$ deviates from $\phi_\gamma$ by an amount 
depending on the position of $c$ in $\darc{a}{b}$.
In~particular, we have the following theorem.

\begin{theorem}
\label{thm:chord}
If $\darc{a}{b}$ is a digital circular arc with a point $c$ on it, then the
circumferential angular deviation of $c$ from its corresponding point $\gamma$ on the real arc is 
less than $\sin^{-1}\frac{1}{{ac}} + \sin^{-1}\frac{1}{{cb}}$.
\end{theorem}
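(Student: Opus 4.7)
The plan is to bound $|\phi_c - \phi_\gamma|$ by splitting the difference into two manageable perturbation terms at the vertex $c$, and then invoking the chord property to absorb the remaining vertex shift. I would introduce the auxiliary angle $\angle \alpha c \beta$ (same vertex as $\phi_c$, but with endpoints $\alpha,\beta$ in place of $a,b$) and apply the triangle inequality
\[
|\phi_c - \phi_\gamma| \;\le\; \bigl|\angle acb - \angle \alpha c \beta\bigr| \,+\, \bigl|\angle \alpha c \beta - \angle \alpha \gamma \beta\bigr|,
\]
so that the first piece measures the endpoint shifts $a\to\alpha$ and $b\to\beta$ as seen from $c$, while the second piece measures only the vertex shift $c\to\gamma$.

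For the first piece I would apply the law of sines in triangle $ac\alpha$. Because $\alpha$ is the real-plane preimage of the integer pixel $a$ on $\rcir(o,r)$, its displacement satisfies $|a\alpha|\le\tfrac{1}{\sqrt{2}}<1$, which yields
\[
\sin(\angle ac\alpha) \;=\; \frac{|a\alpha|\,\sin(\angle a\alpha c)}{|ac|} \;<\; \frac{1}{|ac|},
\]
and hence $\angle ac\alpha < \sin^{-1}(1/|ac|)$. An identical argument on triangle $bc\beta$ gives $\angle bc\beta < \sin^{-1}(1/|cb|)$, and their sum dominates the first piece of the triangle inequality.

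For the second piece I would invoke the chord property of $\rcir(o,r)$, which asserts that $\angle\alpha X\beta$ is constant as $X$ varies along the open arc $\arc{\alpha}{\beta}$. This would immediately equate $\angle\alpha c\beta$ to $\phi_\gamma$ if $c$ lay on the real arc; since $c$ is only the digital approximation of $\gamma$ (with $|y_\gamma-y_c|<\tfrac12$), I expect the perpendicular construction introduced before the theorem (with $c',c'',\gamma',\gamma''$ on the vertical lines $x=x_a$ and $x=x_b$, decomposing $\phi_c=\theta_a+\theta_b$ and $\phi_\gamma=\theta_\alpha+\theta_\beta$) to be the device that controls this residual shift side-by-side with the endpoint perturbation. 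The main obstacle, I expect, is exactly this bookkeeping: verifying that the strict slack $|a\alpha|,|b\beta|<1$ used in the sine-rule estimate leaves enough room to absorb the sub-pixel offset $c\neq\gamma$, so that the two contributions bundle into a single $\sin^{-1}(\cdot)$ per side and the \emph{strict} inequality in the theorem is preserved when the two per-side bounds are summed to give $\sin^{-1}(1/|ac|) + \sin^{-1}(1/|cb|)$.
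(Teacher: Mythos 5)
There is a genuine gap, and it is precisely the point you flag as ``the main obstacle'' and then leave unresolved. Your triangle-inequality split spends the \emph{entire} budget on the endpoint term: the first piece $\bigl|\angle acb-\angle\alpha c\beta\bigr|\le\angle ac\alpha+\angle bc\beta$ is already bounded by $\sin^{-1}\frac{1}{ac}+\sin^{-1}\frac{1}{cb}$, which is the full claimed bound, so nothing is left for the second piece $\bigl|\angle\alpha c\beta-\angle\alpha\gamma\beta\bigr|$, which is strictly positive whenever $c\neq\gamma$. That piece is never actually estimated in your proposal --- you only ``expect'' the perpendicular construction to absorb it --- so the argument does not close. (Two smaller issues: invoking the chord property of $\rcir(o,r)$ for the second piece is a red herring, since $\angle\alpha\gamma\beta=\phi_\gamma$ by definition and no constancy along the arc is needed; and the displacement bound $|a\alpha|\le\frac{1}{\sqrt2}$ is not the right fact for this digitization model --- the paper uses the result of Bhowmick--Bhattacharya that a circle with integer center and radius never meets a grid line at a midpoint, so the displacement is along one coordinate axis per octant and is strictly less than $\frac12$.)

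The paper closes the gap by \emph{sharing} the per-side numerator $1$ between the two perturbations rather than splitting the angle difference additively: each of the endpoint shift $a\alpha$ and the vertex shift $c\gamma$ contributes strictly less than $\frac12$, and the key device is a translation --- the segment $ac$ is slid to $\gamma a'$ (parallel and equal, so $aa'=c\gamma$), which folds both shifts into the single side $\alpha a'\le \alpha a+aa'<\frac12+\frac12=1$ of the one triangle $\Delta a'\gamma\alpha$; the sine rule then gives $\sin\theta_{a\alpha}\le \alpha a'/a'\gamma<1/ac$, i.e., one arcsine per side, with a case analysis over which coordinates of $\alpha$ and $\gamma$ are integers. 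If you want to keep your two-piece decomposition, it can be salvaged, but only with the correct $<\frac12$ bounds on both shifts: bound the first piece by $\sin^{-1}\frac{1}{2\,ac}+\sin^{-1}\frac{1}{2\,cb}$, bound the vertex piece by $\sin^{-1}\frac{1}{2\,\gamma\alpha}+\sin^{-1}\frac{1}{2\,\gamma\beta}$ via the same parallel-translation trick applied at $\gamma$, and then combine the two arcsines per side using the superadditivity of $\sin^{-1}$ on $[0,1]$ (it is convex with value $0$ at $0$, so $\sin^{-1}u+\sin^{-1}v\le\sin^{-1}(u+v)$), together with a comparison of $\gamma\alpha$ against $ac$. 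With your figure $\frac{1}{\sqrt2}$ per displacement this repair fails outright, since $\frac{2}{\sqrt2}=\sqrt2>1$ overshoots the numerator the theorem allows.
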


\begin{proof}
Let ${x_{ac}}=x_c-x_a$, ${y_{ca}}=y_a-y_c$, ${x_{\alpha\gamma}}=x_{\gamma}-x_{\alpha}$,  ${y_{\gamma\alpha}}=y_{\alpha}-y_{\gamma}$, ${x_{cb}}=x_b-x_c$, ${y_{bc}}=y_c-y_b$, 
${x_{\gamma\beta}}=x_{\beta}-x_{\gamma}$,  ${y_{\beta\gamma}}=y_{\gamma}-y_{\beta}$.

As shown in \cite{bhow_08}, a real circle with integer center and integer radius never intersects 
a grid line at the middle between two consecutive grid points.
Hence, we get 
\begin{equation}
\left|{x_{ac}}-{x_{\alpha\gamma}}\right|<1, \ \left|{y_{ac}}-{y_{\gamma\alpha}}\right|<1, \left|{x_{cb}}-{x_{\gamma\beta}}\right|<1, \left|{y_{bc}}-{y_{\beta\gamma}}\right|<1.
\label{eqn:y_bounds}\end{equation} 

Now, we have two possible cases as follows, which are illustrated in Figure~\ref{fig:chord-dev}
and Figure~\ref{fig:chord-dev1}.

\paragraph{Case 1} $x$-coordinates of both $\alpha$ and $\gamma$ are integers.
W.l.o.g., let $\alpha$ and $\gamma$ be in Octant~1. 
Then, $x_a=x_{\alpha}$ and $x_c=x_{\gamma}$, whence $x_{ac}=x_{\alpha\gamma}$. 
So, for maximum deviation, $y_a>y_{\alpha}$ and $y_{\gamma}>y_c$.
Now, consider the line $\gamma a'$ parallel to ${ac}$. 
The circumferential angular deviation at $c$ is $\theta_{a\alpha}=\angle a'\gamma\alpha$.
Using simple geometric property in $\Delta a'\gamma\alpha$, we get ${a'\gamma}\sin\theta_{a\alpha}\leq\alpha a'$.
By Eqn.~\ref{eqn:y_bounds}, we have $\alpha a'= \alpha a+aa'=\alpha a+c\gamma= |{y_{ac}}-{y_{\gamma\alpha}}|<1$.
As $a'\gamma=ac$, we get $\theta_{a\alpha}<\sin^{-1}\frac{1}{{ac}}$.

When $y$-coordinates of both $\alpha$ and $\gamma$ are integers, we also get $\theta_{a\alpha}<\sin^{-1}\frac{1}{{ac}}$, the proof being similar as above.
And when $\beta$ and $\gamma$ have integer $x$- or $y$-coordinates, the proof is again similar, and we get  ${\theta_{b\beta}}<\sin^{-1}\frac{1}{{bc}}$.

\begin{figure}[!t]\center
\includegraphics[width=\textwidth]{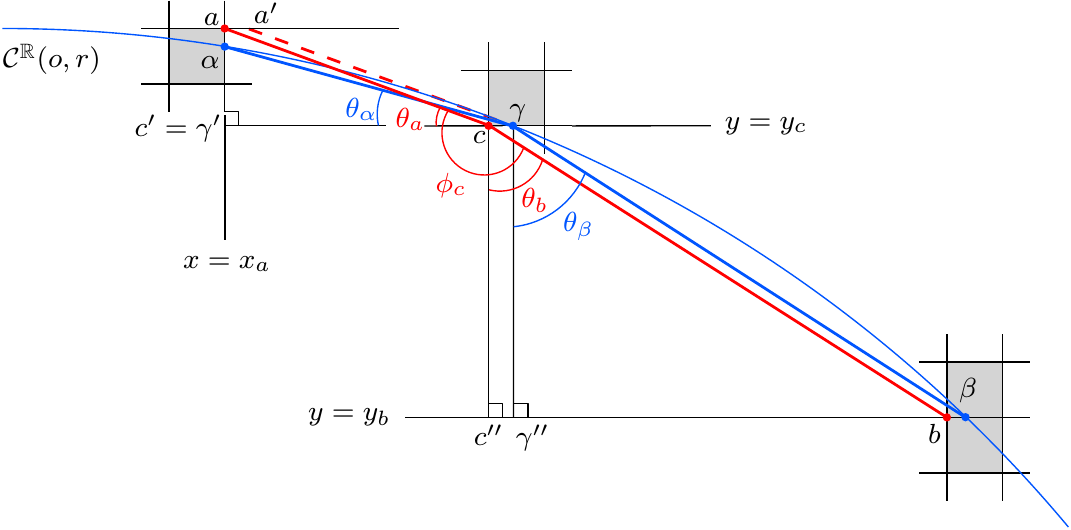}
\caption{Deviation of the chord property (Case~2).}
\label{fig:chord-dev1}
\end{figure}

\paragraph{Case 2} 
$x$-coordinate of $\alpha$ and $y$-coordinate of $\gamma$ are integers.
W.l.o.g., let $\alpha$ be in Octant~1 and $\gamma$ in Octant~2. 
Then, $x_a=x_{\alpha}$, $\left|y_a-y_{\alpha}\right|<\frac12$, $y_c=y_{\gamma}$, $\left|x_c-x_{\gamma}\right|<\frac12$.
For maximum deviation, $y_a>y_{\alpha}$ and $x_{\gamma}>x_c$.
Now, consider the line $\gamma a'$ parallel (and equal in length) to ${ac}$. 
The circumferential angular deviation at $c$ is $\theta_{a\alpha}=\angle a'\gamma\alpha$.
Similar to Case~1, in $\Delta a'\gamma\alpha$, we get ${a'\gamma}\sin\theta_{a\alpha}\leq\alpha a'$.
As $aa'=c'c''=c\gamma<\frac12$ and $\alpha a'<aa'+a\alpha<\frac12+\frac12=1$, we get $\theta_{a\alpha}<\sin^{-1}\frac{1}{{ac}}$.

For a different combination on integer values of $x$- and $y$-coordinates, the proof is similar.
\end{proof}

From the above theorem, it is clear that the circumferential angular deviation is dominated by the smaller value between $ac$ and $cb$.
We have the following corollary if they are equal.

\begin{corollary}
If $m$ is the middle pixel (one of the two if it is not unique) of $\darc{a}{b}$, then the maximum possible deviation of $\phi_m$ from $\phi_\gamma$ is given by
\begin{eqnarray}
\delta_{\phi}< 2\sin^{-1}\frac{1}{am}< 2\sin^{-1}\frac{2}{ab},\ \mbox{since ${am}+{mb}>{ab}$}.\end{eqnarray}
\end{corollary}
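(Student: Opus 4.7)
The corollary follows almost immediately from Theorem~\ref{thm:chord} together with the triangle inequality, so my plan is to treat it as a direct specialization.

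First, I would invoke Theorem~\ref{thm:chord} with the specific choice $c = m$, the middle pixel of $\darc{a}{b}$. This gives the bound
\[
\delta_\phi \; < \; \sin^{-1}\frac{1}{am} + \sin^{-1}\frac{1}{mb}.
\]
Since $m$ is (by definition) the middle pixel of the digital arc, we have $am = mb$ (in the case where the middle pixel is not unique, either of the two choices yields the same equality up to the pixel count, and one verifies that the chord lengths are equal by symmetry along $\darc{a}{b}$). Substituting this into the above inequality collapses the two terms and produces
\[
\delta_\phi \; < \; 2\sin^{-1}\frac{1}{am},
\]
which is the first half of the stated bound.

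For the second inequality, I would use the triangle inequality on $\triangle amb$. Because $a$, $m$, and $b$ are three distinct points on a digital arc (hence non-collinear in the relevant case), we have $am + mb > ab$. Combined with $am = mb$, this yields $2\, am > ab$, i.e.\ $\frac{1}{am} < \frac{2}{ab}$. Applying the monotonically increasing function $\sin^{-1}$ (valid here since both arguments lie in $[0,1]$ for arcs that are not degenerately short) and multiplying by $2$ gives
\[
2\sin^{-1}\frac{1}{am} \; < \; 2\sin^{-1}\frac{2}{ab},
\]
completing the chain of inequalities.

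There is no substantive obstacle. The only mild point of care is the middle-pixel convention: if $\darc{a}{b}$ contains an even number of pixels, one should fix either of the two central pixels and note that the argument is symmetric. Everything else is a transparent consequence of Theorem~\ref{thm:chord} and the triangle inequality on the chord $ab$.
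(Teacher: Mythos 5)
Your proof is correct and matches the paper's own (implicitly given) argument: the corollary is stated in the paper with exactly this justification, namely specializing Theorem~\ref{thm:chord} to $c=m$ with $am=mb$, then applying the triangle inequality $am+mb>ab$ and the monotonicity of $\sin^{-1}$. Your added remarks on the even-pixel-count convention and the validity of the arcsine arguments are harmless elaborations of the same route.
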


\begin{figure}[!t]
\centering
\includegraphics[width=.8\textwidth]{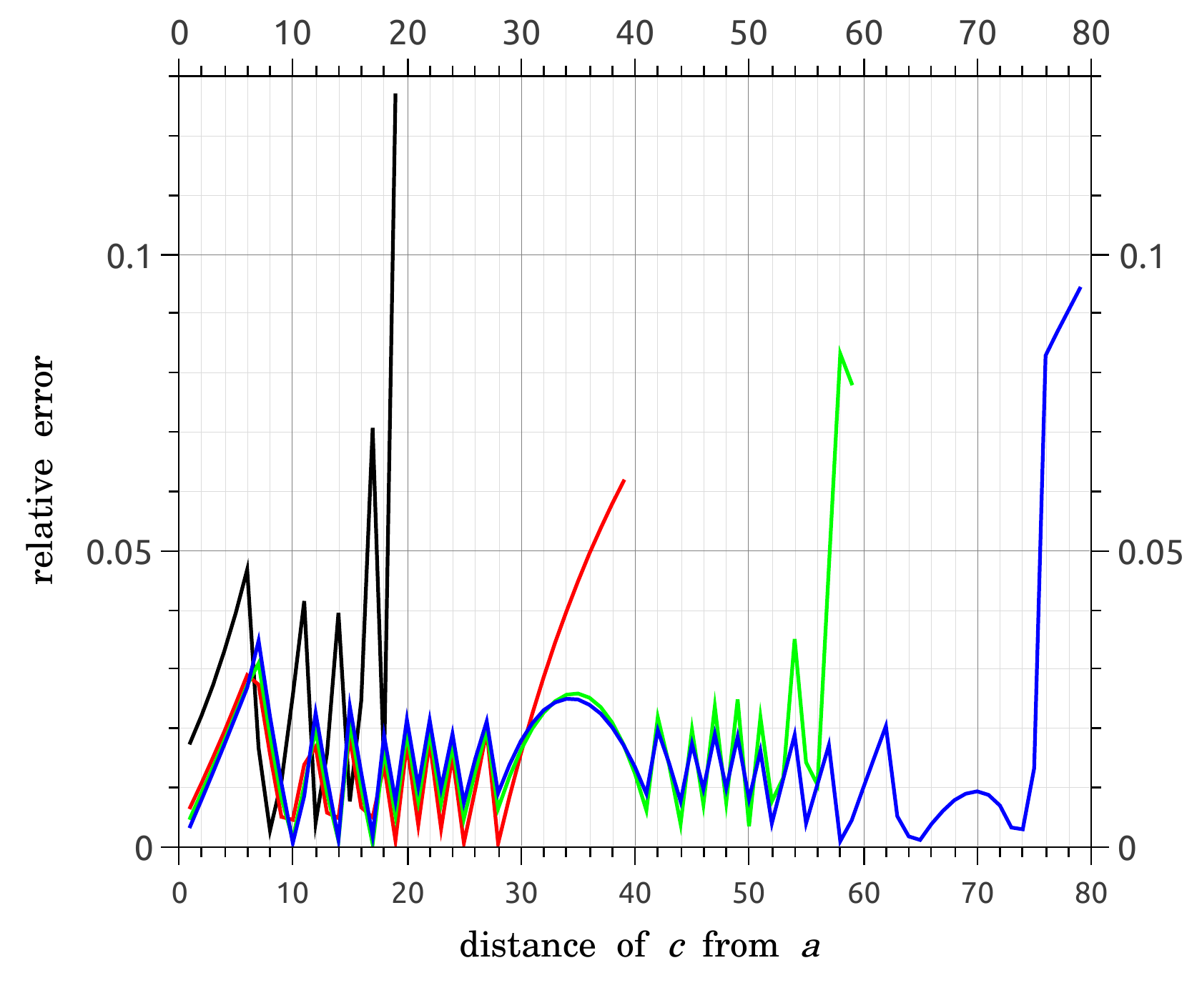}
\caption{Circumferential angle relative error $\care{c}{ab}{r}$ versus distance of $c$ from the endpoint $a$ of the digital arc $ab$ of circle having radius $r=50$. Four different digital arcs of length $20$ (black), $40$ (red), $60$ (green), and $80$ (blue)  are considered.}
\label{fig:chord:err}
\end{figure} 

Theorem~\ref{thm:chord} also implies that if ${ac}$ or ${cb}$ is sufficiently small, then the total error would be unacceptably large.
To measure this, we define {\em circumferential angle relative error} at the pixel $c$ lying on the digital circular arc $ab$ of radius $r$ as $\care{c}{ab}{r}=\frac{|\phi_\gamma-\phi_c|}{\phi_\gamma}$.
For a given digital arc $ab$, we vary the position of $c$ on $ab$ and compute the corresponding value of $\care{c}{ab}{r}$.
Figure~\ref{fig:chord:err} shows a set of four such plots of $\care{c}{ab}{r}$ for four different digital circular arcs of length $20$, $40$, $60$, and $80$, taken from a digital circle of radius $r=50$.
It is clear from these plots that the error is abnormally large at or near the two endpoints, whereas in the intermediate portion, it is much less.
Hence, such errors have to be kept in consideration for obtaining proper results.
In our algorithm, we have considered this, resulting to satisfactory performance in terms of both precision and robustness.

\subsection{Sagitta Property}
\label{ssec:sagi}

\begin{figure}[!t]
\centering
\includegraphics[width=\textwidth]{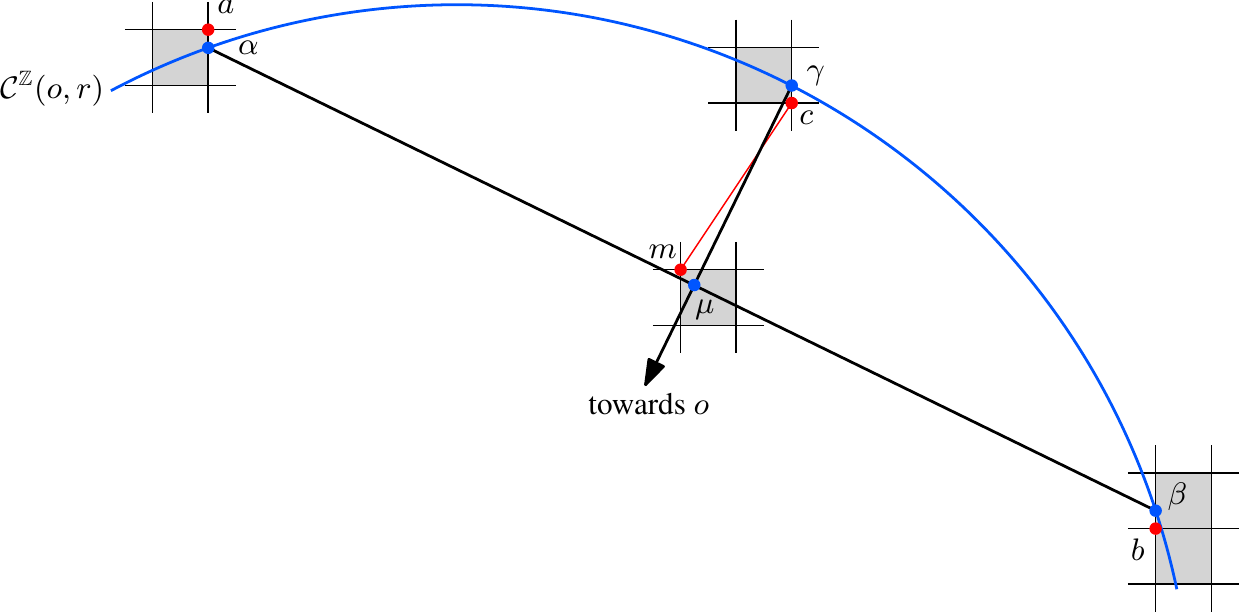}
\caption{Deviation of the sagitta property.
Points in $\rr$ or on $\rcir(o,r)$ are in blue, and those in $\zz$ or in $\dcir(o,r)$ are in red.
\pb{The real sagitta here is $\mu\gamma$, and its discrete counterpart is $mc$.}}
\label{fig:sagitta-dev}
\end{figure}

We show here the deviation of sagitta property for digital circular arcs.
See Figure~\ref{fig:sagitta-dev} for an illustration.
Let $\rcir(o,r)$ be the real circle having center at $o(0,0)$ and radius $r\inz^{+}$.
\pb{Let $\alpha(x_\alpha,y_\alpha)$ and $\beta(x_\beta,y_\beta)$ be two points on the boundary of 
$\rcir(o,r)$ such that $x_\alpha$ and $x_\beta$ are integers.
Let $\arc{\alpha}{\beta}$ be the arc of smaller length between the two arcs of $\rcir(o,r)$ with endpoints 
$\alpha$ and $\beta$.}
Then the {\em sagitta}~\cite{weis_sa} of the circular arc $\arc{\alpha}{\beta}$ is the straight line segment drawn perpendicular to the chord ${\alpha\beta}$, which connects the midpoint $\mu(x_\mu,y_\mu)$ of ${\alpha\beta}$ with $\arc{\alpha}{\beta}$, as shown in Figure~\ref{fig:sagitta-dev}.
Let the sagitta intersect the arc $\arc{\alpha}{\beta}$ at $\gamma$; then

\begin{eqnarray}
&r^2&=(r-d(\mu,\gamma))^2+d^2(\alpha,\mu)\nonumber\\
&\Rightarrow 2rd(\mu,\gamma)&=\left(\frac{d(\alpha,\beta)}{2}\right)^2+d^2(\mu,\gamma)\nonumber\\
&\Rightarrow r&=\frac{d^2(\alpha,\beta)}{8d(\mu,\gamma)} + \frac{d(\mu,\gamma)}{2}\label{eqn:sagitta}
\end{eqnarray}

The above equation is known as the {\em sagitta property} for real circle, where $d(\alpha,\beta)$ denotes the Euclidean distance between the points $\alpha$ and $\beta$, and $d(\mu,\gamma)$ is that between $\mu$ and $\gamma$.
The sagitta property gives the radius of the arc, irrespective of its length in real domain. 
However, this is not true for a digital (circular) arc, as shown below. 

Let $\dcir(o,r)$ be the digital circle corresponding to $\rcir(o,r)$,
and $\darc{a}{b}$ be the digital arc corresponding to $\arc{\alpha}{\beta}$,
where $a$, $b$, $c$, and $m$ are the respective integer points corresponding to $\alpha,\beta,\gamma$, and $\mu$.
The sagitta of the digital arc $\darc{a}{b}$ is $\tilde s={mc}$ and hence using the sagitta property the estimated radius of the arc is $\tilde r=\frac{d^2(a,b)}{8d(m,c)} + \frac{d(m,c)}{2}$.
Then, depending on the value of $d(a,b)$ and $d(m,c)$, the radius $\tilde r$ varies 
between  $r-\delta$ and $r+\delta$, \pb{$\delta$ being the {\em absolute error} in radius estimation.}
In~particular, we have the following theorem.

\begin{theorem}\label{thm:sagitta}
The relative error $\varepsilon_r$ in radius estimation using sagitta property is given by
\begin{eqnarray}
    \frac{|r-\tilde{r}|}{r}\leqslant\left|1-\frac{\tilde{s}}{2r}\right|,
\end{eqnarray}
where $\tilde{r}$ is the estimated radius using sagitta property and $\tilde{s}$ is the length of the \pb{discrete sagitta.}
\end{theorem}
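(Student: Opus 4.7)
The plan is to unfold the sagitta-based estimate and reduce the claim to a comparison of two nonnegative quantities. Recalling $\tilde r = d^2(a,b)/(8\tilde s) + \tilde s/2$, I would write
\[ \frac{r-\tilde r}{r} = \left(1 - \frac{\tilde s}{2r}\right) - \frac{d^2(a,b)}{8r\tilde s}. \]
Setting $A = 1 - \tilde s/(2r)$ and $B = d^2(a,b)/(8r\tilde s)$, the target bound reads $|A - B| \le |A|$. Since $B \ge 0$, this (under the natural regime $\tilde s < 2r$, where $A > 0$) reduces to $B \le 2A$, i.e.,
\[ d^2(a,b) \le 8\tilde s(2r - \tilde s). \]

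I would then verify this chord--sagitta inequality using the structure of the digital circle. For the real circle at radius $r$ with the chord $\alpha\beta$ at distance $r - s$ from the center, the Pythagorean relation gives the exact identity $d^2(\alpha,\beta) = 4 s(2r - s)$, which is exactly half of the target bound. Since $a, b$ lie on $\dcir(o,r)$ and $m, c$ are the discrete counterparts of $\mu, \gamma$, the half-integer non-intersection property from \cite{bhow_08} (already invoked in Theorem~\ref{thm:chord}) ensures all relevant coordinate offsets are strictly below $1/2$; hence $d(a,b)$ and $\tilde s = d(m,c)$ differ from their real counterparts $d(\alpha,\beta)$ and $d(\mu,\gamma)$ by small additive terms. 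Plugging these bounded perturbations into the exact real identity yields $d^2(a,b) \le 4\tilde s(2r - \tilde s) + O(1) \le 8\tilde s(2r - \tilde s)$, the factor-of-two slack absorbing the digital error for any arc whose sagitta is not pathologically small.

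The main obstacle is transferring the real chord--sagitta identity to the digital setting while keeping the additive error well within the slack factor; small sagitta is the regime where this control is most delicate. Once the inequality $d^2(a,b) \le 8\tilde s(2r - \tilde s)$ is in hand, the remainder is routine: it is equivalent to $\tilde r \le 2r - \tilde s/2$, which combined with the immediate lower bound $\tilde r \ge \tilde s/2$ (from $d^2(a,b) \ge 0$ in the sagitta formula) yields $|r - \tilde r| \le r - \tilde s/2$; dividing by $r$ gives the stated relative-error bound.
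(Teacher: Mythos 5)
Your reduction is sound: the claimed bound is equivalent to $\tilde{s}/2\leq\tilde{r}\leq 2r-\tilde{s}/2$, the lower half is immediate from $d^2(a,b)\geq 0$ in the sagitta formula, and the upper half is equivalent to your chord--sagitta inequality $d^2(a,b)\leq 8\tilde{s}(2r-\tilde{s})$. It is worth knowing that the paper's entire proof is just your easy half: it observes $2\tilde{r}\geq d(m,c)=\tilde{s}$ and stops. That one-liner bounds only the underestimation side, $r-\tilde{r}\leq r-\tilde{s}/2$; the overestimation case $\tilde{r}>r$ --- precisely what your inequality $d^2(a,b)\leq 8\tilde{s}(2r-\tilde{s})$ is designed to control --- is passed over in silence. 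So you have correctly identified the substantive content of the statement, which the paper itself does not prove.

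The gap is that your proposal does not prove it either. The perturbation step is quantitatively wrong: from $d(a,b)=d(\alpha,\beta)+O(1)$, squaring produces an error of order $d(\alpha,\beta)=O(\sqrt{r\tilde{s}})$, not $O(1)$; worse, the exact identity $d^2(\alpha,\beta)=4s(2r-s)$ involves the \emph{real} sagitta $s$, and trading $s$ for $\tilde{s}$ costs roughly $8r\,|s-\tilde{s}|$, where $|s-\tilde{s}|$ can be a fixed constant near $1$ (rounding of $\mu$ to $m$ compounded with rounding of $\gamma$ to $c$). The slack you rely on is $4\tilde{s}(2r-\tilde{s})\approx 8r\tilde{s}$, so for $\tilde{s}\in\{1,\sqrt{2},2\}$ the error and the slack are both of order $r$ with comparable constants, and the factor of two absorbs nothing without explicit bookkeeping. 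Concretely, with $r=50$, $s\approx 2.4$ (so $d(\alpha,\beta)\approx 30.6$) and $\tilde{s}=\sqrt{2}$, one gets $d^2(a,b)$ up to about $1.02\times 10^3$ against a bound $8\tilde{s}(2r-\tilde{s})\approx 1.12\times 10^3$: the inequality survives, but by a margin your argument never establishes. Your proviso ``sagitta not pathologically small'' excludes exactly the regime where $\tilde{r}=d^2(a,b)/(8\tilde{s})+\tilde{s}/2$ can overshoot $r$, i.e., the only regime in which the upper half is in doubt. To close the proof you would need to push the half-integer digitization bounds of Eqn.~\ref{eqn:y_bounds} through the algebra with named constants --- bounding $|d(a,b)-d(\alpha,\beta)|$ and $|\tilde{s}-s|$ explicitly and checking $d^2(a,b)\leq 8\tilde{s}(2r-\tilde{s})$ separately for the small values $\tilde{s}=1,\sqrt{2},2$ --- or else state the small-sagitta hypothesis you are implicitly assuming as part of the theorem.
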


\begin{proof}
Observe that $2\tilde{r}\geqslant d(m,c)=\tilde{s}$.
So, relative error in radius estimation is given by
$\varepsilon_r=\frac{|r-\tilde{r}|}{r}\leqslant\left|1-\frac{\tilde{s}}{2r}\right|$.
\end{proof}

\enlargethispage*{-20pt}

\renewcommand{\tabcolsep}{2pt}

\begin{figure}[!t]\center
\begin{tabular}{@{\,}c@{\,}c@{\,}}
\fbox{\includegraphics[width=.48\textwidth]{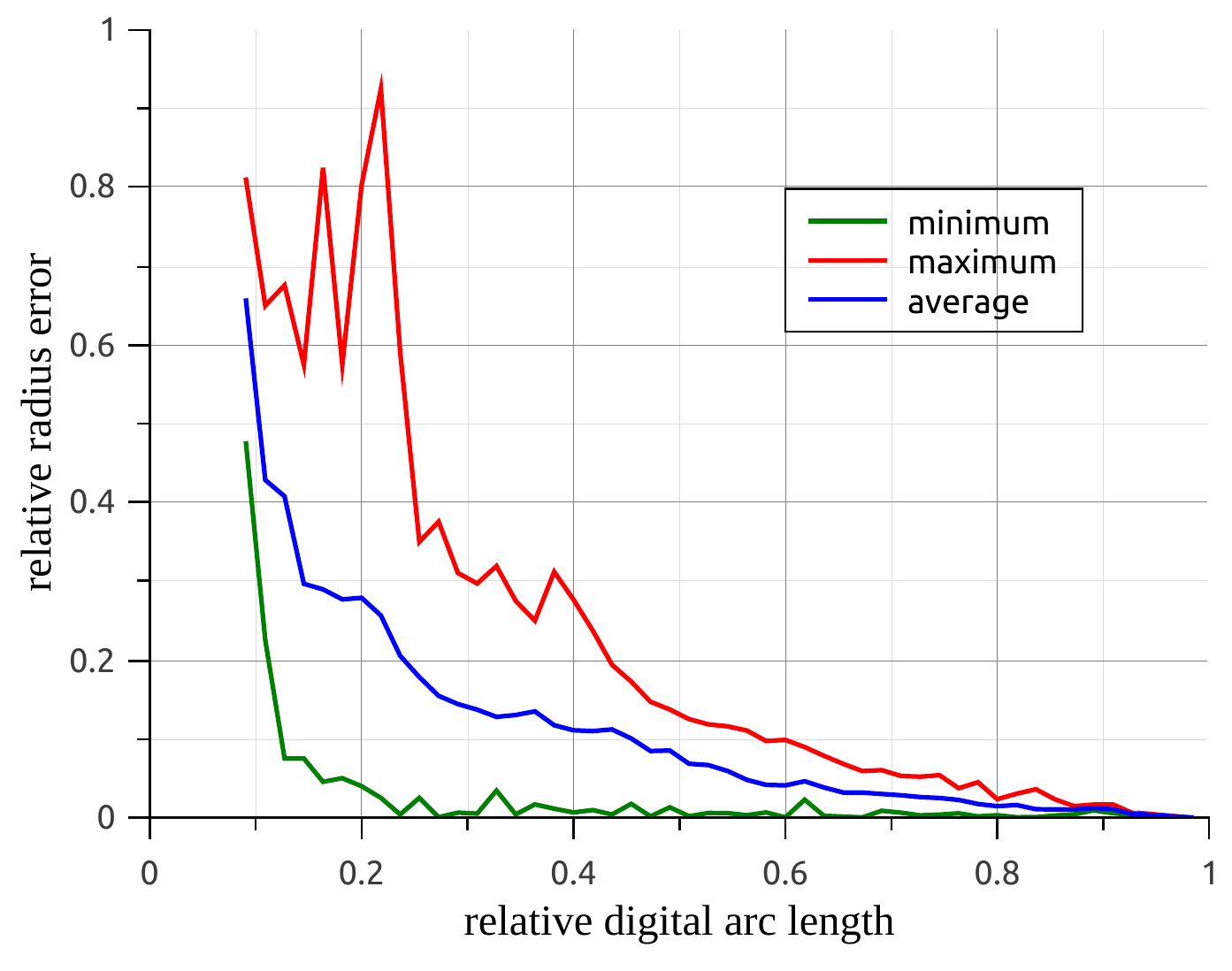}}&
\fbox{\includegraphics[width=.48\textwidth]{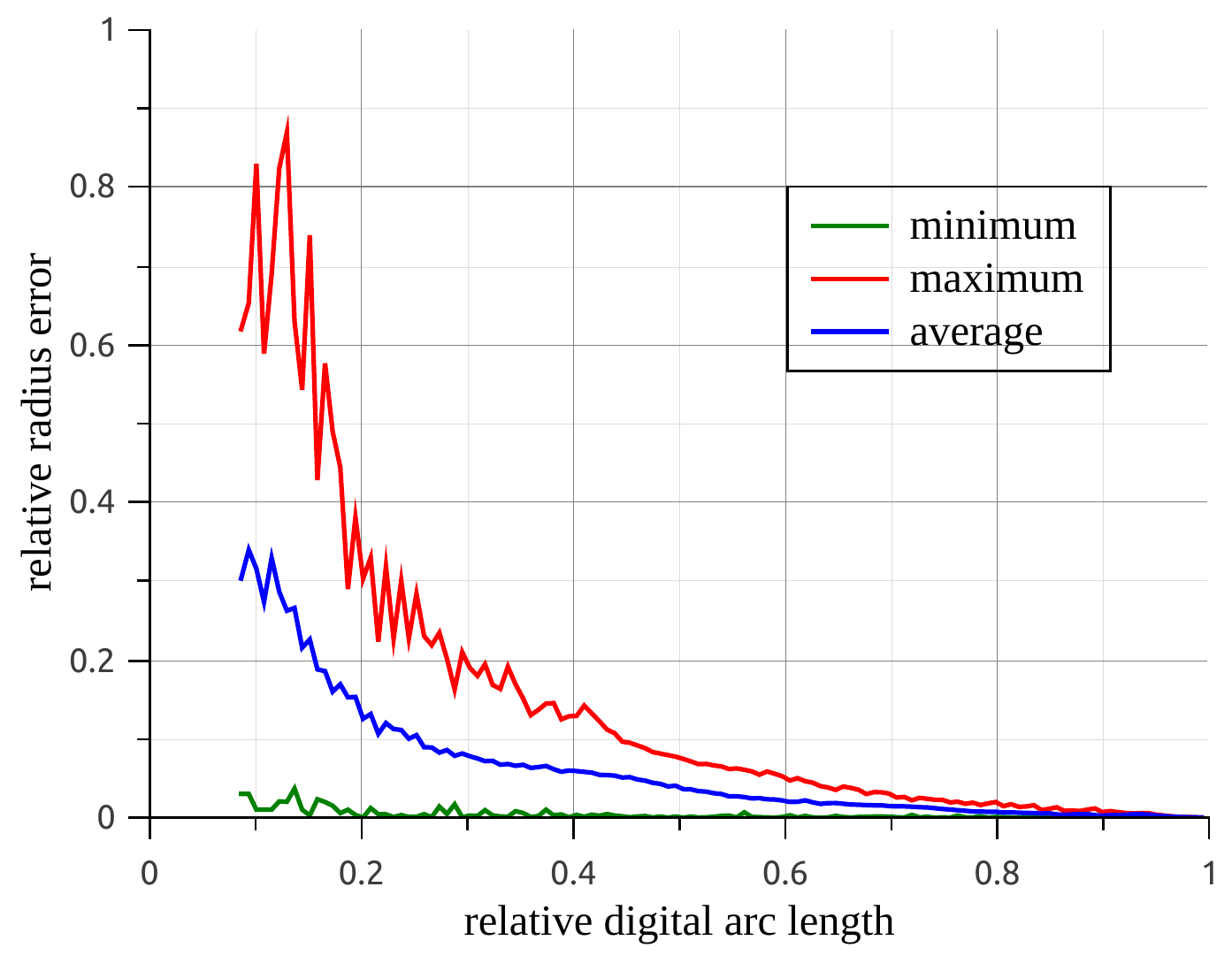}}\\
\fbox{\includegraphics[width=.48\textwidth,height=.398\textwidth]{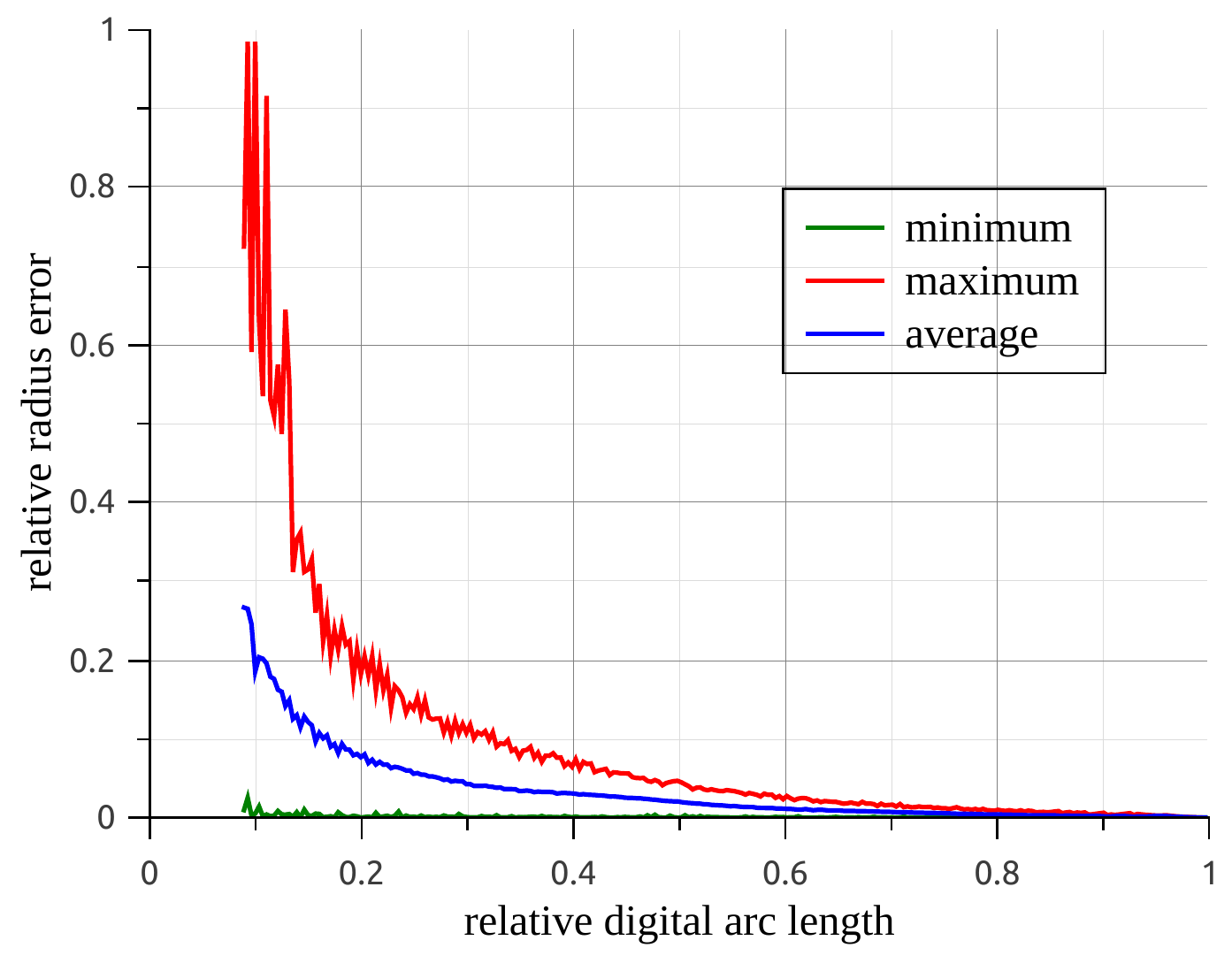}}&
\fbox{\includegraphics[width=.48\textwidth]{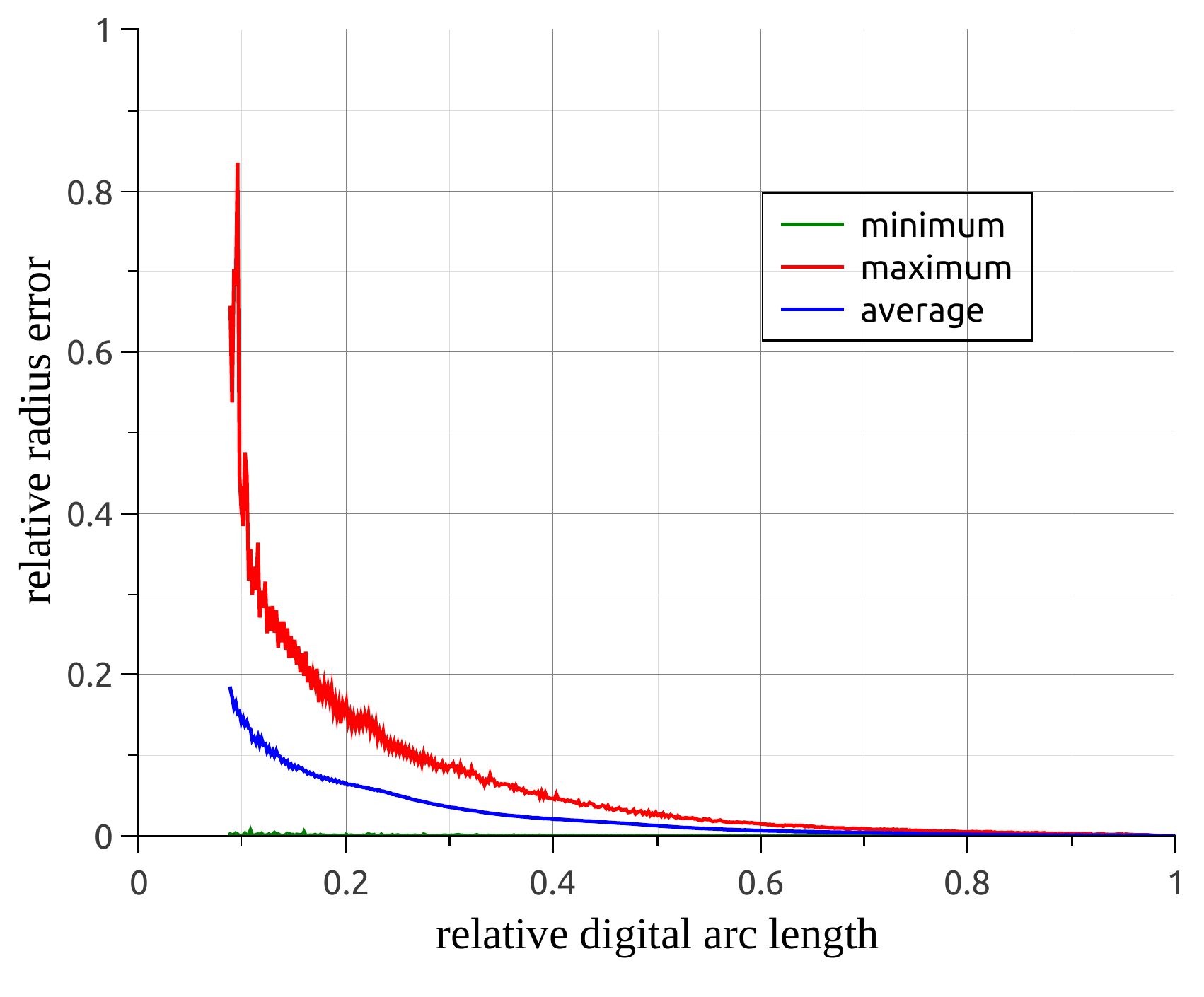}}
\end{tabular}
\caption{Relative radius error $\varepsilon_r$ versus relative digital arc length for $r=20,50,100,200$,
as estimated using sagitta property.}
\label{fig:sagitta:err}
\end{figure}

Theorem~\ref{thm:sagitta} implies that for a fixed value of $r$, as $\tilde{s}$ increases, the relative radius error decreases.
But if $\tilde{s}$ is small relative to $r$, then the relative radius error would be large.
Maximum possible value of $\tilde{s}$ is $r$, and then the relative radius error is $0$.
Figure~\ref{fig:sagitta:err} shows a set of four plots of relative radius error versus relative arc lengths for four different digital circular arcs of radius $20$, $50$, $100$, and $200$.
The {\em relative arc length} is defined as the ratio of the length of the digital arc to its semi-perimeter, measured in number of pixels.
Minimum relative error, maximum relative error, and average relative error are plotted with green, red, and blue colors respectively.
It is clear from these plots that for a fixed radius, as the arc length increases, the corresponding sagitta $\tilde{s}$ increases in length, and so the relative error decreases, which we have proved in Theorem~\ref{thm:sagitta}.
This property is considered in our algorithm to obtain the desired result of circular arc detection.

\section{CSA: Proposed Algorithm by Chord and Sagitta Analysis}
\label{sec:proposed}

Let $\mathcal I$ be the input digital image containing various digital curves like straight line segments, circles, and circular arcs.
Our algorithm checks each curve segment separately for its circularity.
For this, it first extracts all the curve segments.
As  the image $\mathcal I$ may contain thick curves segments, we use thinning~\cite{gonz_01} as preprocessing before applying the algorithm.
The subsequent steps are as follows.

\subsection{\pb{Removing the Straight Segments}}
\label{ssec:straight}
The digital curve segments are first extracted from the thinned image and stored in a list of segments, ${\mathcal L}$.
Each entry in~${\mathcal L}$ contains the coordinates of two endpoints defining the curve segment, and a pointer to the list of curve points.
The center and the radius are also stored in it after their computation.
To identify the circular arc segments, we first remove the digital straight line segments from the list~${\mathcal L}$.
It may be mentioned that there are several techniques to determine digital straightness available in the literature \cite{bhow_07e,klette_04,klette_04a,rosin_97}.
We have used the concept of area deviation~\cite{wall_84}, which is realizable in purely integer domain using a few primitive operations only.
The method is as follows.

Let $S:=\langle a=c_1,c_2,\ldots,c_k=b\rangle$ be a digital curve segment with endpoints $a$ and $b$.
Let $c_i$ $(2\leq i\leq k-1)$ be any point on the segment $S$ other than $a$ and $b$.
Let $h_i$ be the distance of the point $c_i$ from the real straight line segment ${ab}$.
Then $S$ is considered to be a single digital straight line segment starting from $a$ and ending at $b$, provided the following condition is satisfied.

\begin{eqnarray}\label{eqn:straight_line}
\max\limits_{2\leqslant i\leqslant k-1} \left|\triangle\left(a,c_i,b\right)\right|
\leq    \tau_h d_\top\left(a,b\right)
\end{eqnarray}

Here, $\left|\triangle\left(a,c_i,b\right)\right|$ denotes twice the magnitude of area of the triangle with vertices $a:=(x_1,y_1)$, $c_i:=(x_i,y_i)$, and $b:=(x_k,y_k)$, and $d_\top(a,b):=\max(|x_1-x_k|,|y_1-y_k|)$ is the maximum isothetic distance between the points $a$ and $b$, and $\tau_h=2$ in our experiments.
Since all these points are in two-dimensional digital space, the above measures are computable in the integer domain as follows.
\begin{equation}\label{eqn:Delta}
\triangle\left(a,c_i,b\right)=\left|
\begin{array}{ccc}
1 & 1 & 1\\ x_1 & x_i & x_k\\ y_1 & y_i & y_k
\end{array}
\right|
\end{equation}
As $\triangle\left(a,c_i,b\right)$ gives twice the signed area of the triangle with vertices $a$, $c_i$, and $b$,
the digital curve segment $S$ forms a single straight line segment, provided the (maximum) area of the triangle
having $ab$ as the base and the third vertex as the point of $S$ farthest from $ab$, does not exceed the area of the triangle with base-length $d_\top(a,b)$ (isothetic length) and height $\tau_h$.

\subsection{Verifying the Circularity}
\label{ssec:verify_circularity}
\begin{figure}[!t]
\centering
\includegraphics[width=4.30in]{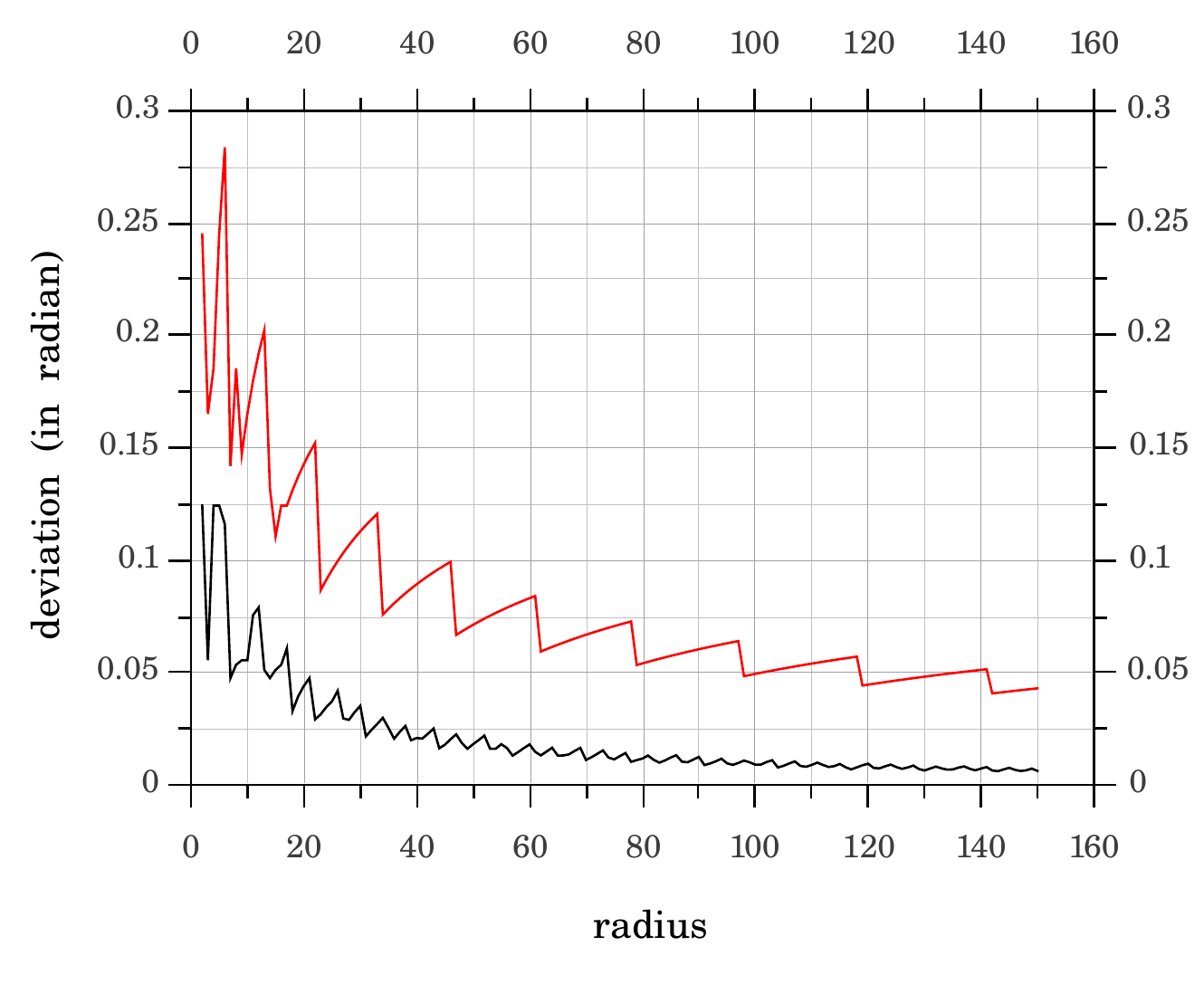}
\caption{Plot of maximum deviation of circumferential angle (in radian) of a digital circle from its real
counterpart. The black curve corresponds to the central region, and red to the remaining part of
the semi-circle excepting its two endpoints.}
\label{fig:dev-angle}
\end{figure}

As explained in Sec.~\ref{ssec:straight}, \pb{after removal of the digitally straight segments,} the remaining segments present in the list ${\mathcal L}$ are not digitally straight.
That is, each segment $S$ in ${\mathcal L}$ is made of \pb{one or more circular segments with or without one or more intervening straight parts}.
So, for each segment $S$, we check its circularity using the chord property in $\zz$, as explained in Sec.~\ref{ssec:chord}.
If the segment $S$ consists of both circular and straight components, then we extract its circular part(s) only from $S$ and discard its straight portion, store these circular segment(s) in the list ${\mathcal L}$ with necessary updates, and remove the original segment $S$ from ${\mathcal L}$.
As explained in Sec.~\ref{ssec:chord}, circumferential angular deviation is too high near the endpoints of an arc.
So, we exclude some pixels from both ends for chord property checking.
We define the {\em central region} of an arc as the sequence of pixels lying in its central one-third portion.
We verify the circularity for the central region of an arc.
The remaining points (i.e., one-third from either end) of $S$ are disregarded from circularity test as they are prone to high deviation of the chord property.
The count of pixels in a semicircle of radius $1$ is $3$, and hence it is trivially accepted.
However, such occurrences are not found as valid circles or circular arcs in a digital drawing. 
The count of pixels in a semicircle of radius $2$ is $7$, and hence the chord property is checked for arcs having length $\tau_c=7$ or more.
After deleting the arcs of length less than $\tau_c$ from the list ${\mathcal L}$, the chord property is checked for each of the remaining arcs.
Let $S:=\langle a=c_1,c_2,\ldots,c_k=b\rangle$ be an arc in the list ${\mathcal L}$.
We verify the circularity for the central region of $S$, namely
$S':=\left\langle c_{\lfloor k/3\rfloor}, c_{\lfloor k/3\rfloor+1},\ldots,c_{m-1},c_m,c_{m+1},
\ldots,c_{\lfloor 2k/3\rfloor-1},c_{\lfloor 2k/3\rfloor} \right\rangle$.
Hence, if $c_m$ ($m=\lfloor k/2\rfloor$) is the midpoint of $S$ and the angle subtended by the chord ${ab}$ at $m$ is estimated to be $\phi_m$, then $S$ is considered to be satisfying the chord property in $\zz$, provided the angle $\phi_c$ subtended by ${ab}$ at each point $c\in S'$ satisfies the following equation.
\begin{eqnarray}
\pb{\max\limits_{c\in S'} \left\{\left|\phi_c-\phi_m\right|\right\} \leq \delta_{\phi}}
\label{eqn:circularity}\end{eqnarray}

As shown in Figure~\ref{fig:dev-angle}, the deviation of a circumferential angle in the central region is less than $\pi/18\approx 0.1745$ radian.
Hence in our experiments, we have taken $\delta_{\phi}=\pi/18$ radian.
If $S$ is not found to be circular, then we divide $S$ into two equal parts and recursively check for digital straightness and the chord property on each part.
The process is continued until a part is smaller than $\tau_c$ in length or it satisfies digital straightness or satisfies chord property.

\subsection{Parameter Estimation}
\label{ssec:combining}
The centers and radii of the detected circular arcs are computed using the sagitta property of the circle, as explained in Sec.~\ref{ssec:sagi}.
While combining the circular arcs, necessary care has to be taken for the inevitable error that creeps~in owing to the usage of sagitta property, which is a property of real circles only.
Since we deal with digital curve segments, the {\em cumulative error} of the effective radius computed for a combined/growing circular arc using the aforesaid sagitta property is very likely to increase with an increase in the number of segments constituting that arc.
Hence, to enhance accuracy, we merge two digital circular segments $S$ and $S'$ into $S'':=S\cup S'$, if (i)\,$S$ and $S'$ have a common endpoint in ${\mathcal L}$ and (ii)\,$S''$ satisfies the {\em chord property}.
Since the node corresponding to each segment in the list ${\mathcal L}$ contains endpoints, center, radius, and a pointer to the list of curve points, the attributes of the segment $S$ are updated by those of $S''$, and the data structure ${\mathcal L}$ is updated accordingly.

\subsection{Parameter Finalization}
\label{ssec:hough}

In spite of the treatments to reduce discretization errors while employing chord property to detect circular arcs and while employing sagitta property to combine two or more circular arcs and to compute the effective radius and center, some error may still be present in the estimated values of the radii and the center.
To remove such errors, we apply a {\em restricted Hough transform} (rHT) on each circular arc $S\in{\mathcal L}$ with a small parameter space~\cite{chen_01b}.
Let $q(x_q,y_q)$ and $r$ be the respective center and radius of $S$ estimated using the sagitta property.
As explained in Sec.~\ref{ssec:sagi}, the relative radius error can have a maximum value of $1$.
Hence, the restricted parameter space is considered as $[x_q-\delta,x_q+\delta]\times[y_q-\delta,y_q+\delta]\times[\tilde{r}-\delta,\tilde{r}+\delta]$, where $\delta=\tilde{r}$.
A 3D integer array, $H$, is taken corresponding to this parameter space, each of whose entry is initialized to zero.
For every three points $c$, $c'$, and $c''$ from $S$, with $c$ lying in its left region, $c'$ lying in its central region, and $c''$ lying in its right region, we estimate the center $q'(x',y')$ and the radius $r'$ of the (real) circle passing through $c$, $c'$, and $c''$.
The corresponding entry in $H$ is incremented accordingly.
Finally, the entry in $H$ corresponding to the maximum frequency provides the final center and radius of $S$.

\subsection{Demonstration of CSA}
\label{sec:demo}

\begin{figure}[!t]\center\footnotesize
\begin{tabular}{@{\,}c@{\,}c@{\,}}
\fbox{\includegraphics[width=.46\textwidth,viewport=40 110 590 550,clip]{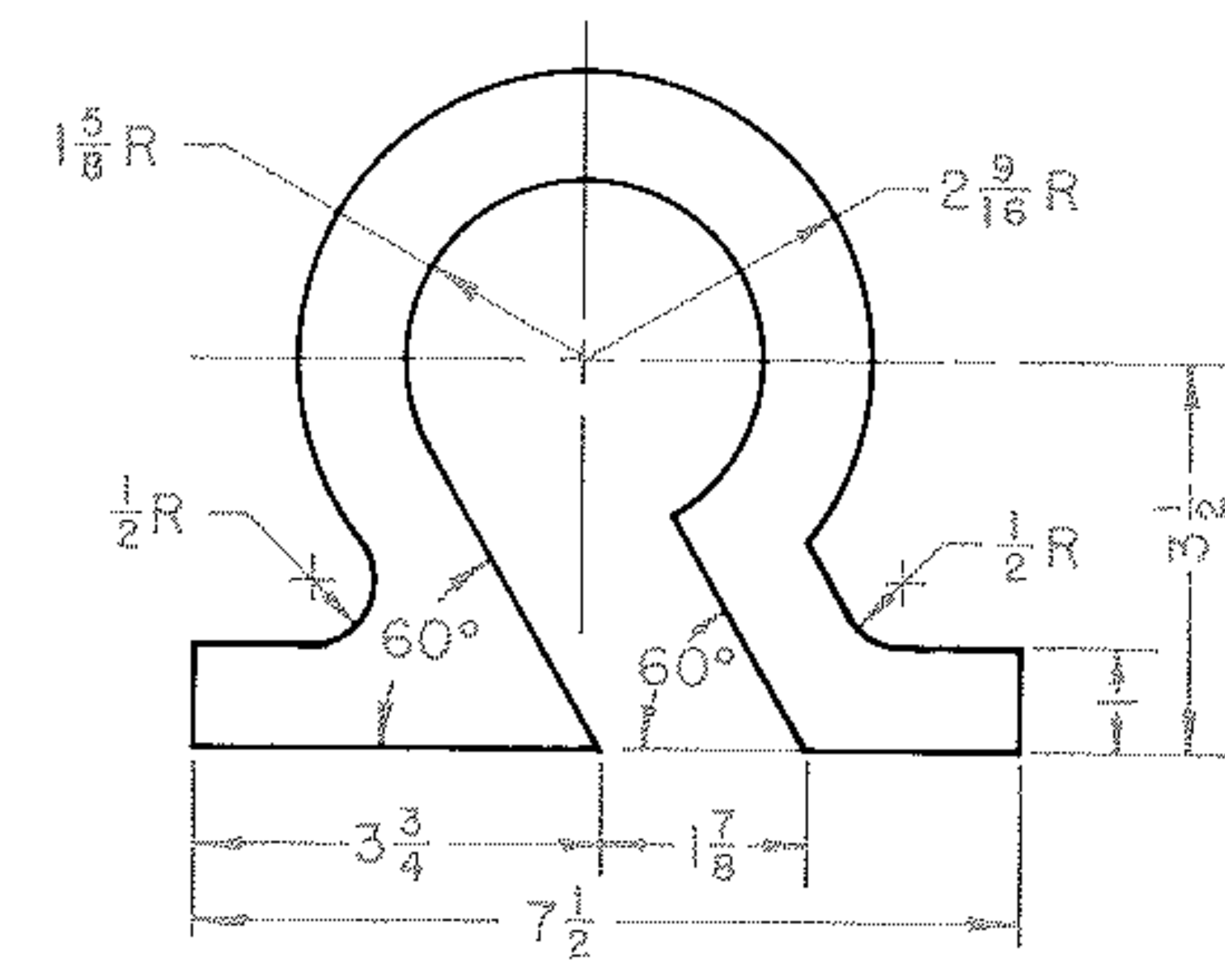}}&
\fbox{\includegraphics[width=.46\textwidth,viewport=40 110 590 550,clip]{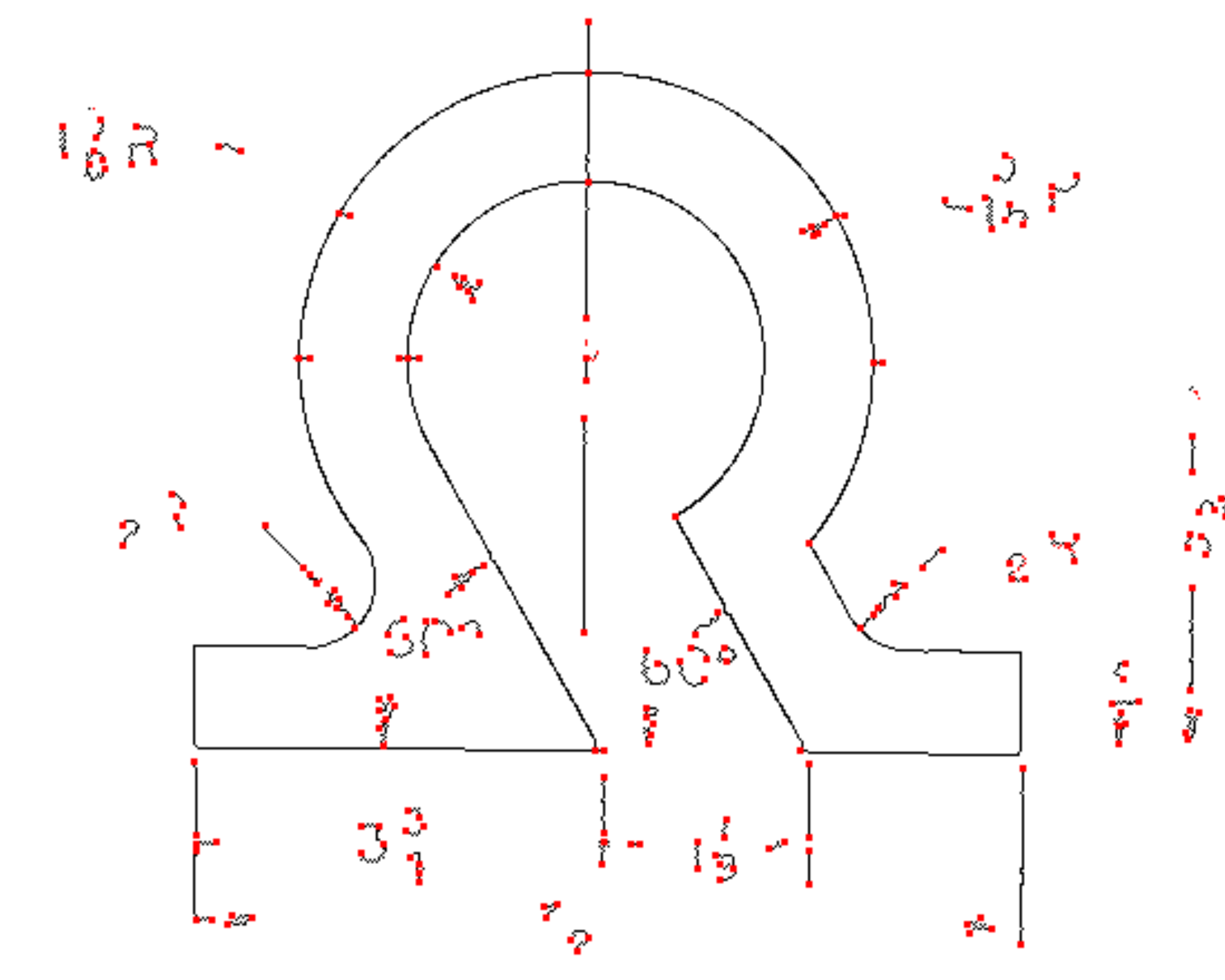}}\\
\parbox[t]{.25\textwidth}{\centering(a)}&\parbox[t]{.25\textwidth}{\centering(b)}\smallskip\\
\fbox{\includegraphics[width=.46\textwidth,viewport=40 110 590 550,clip]{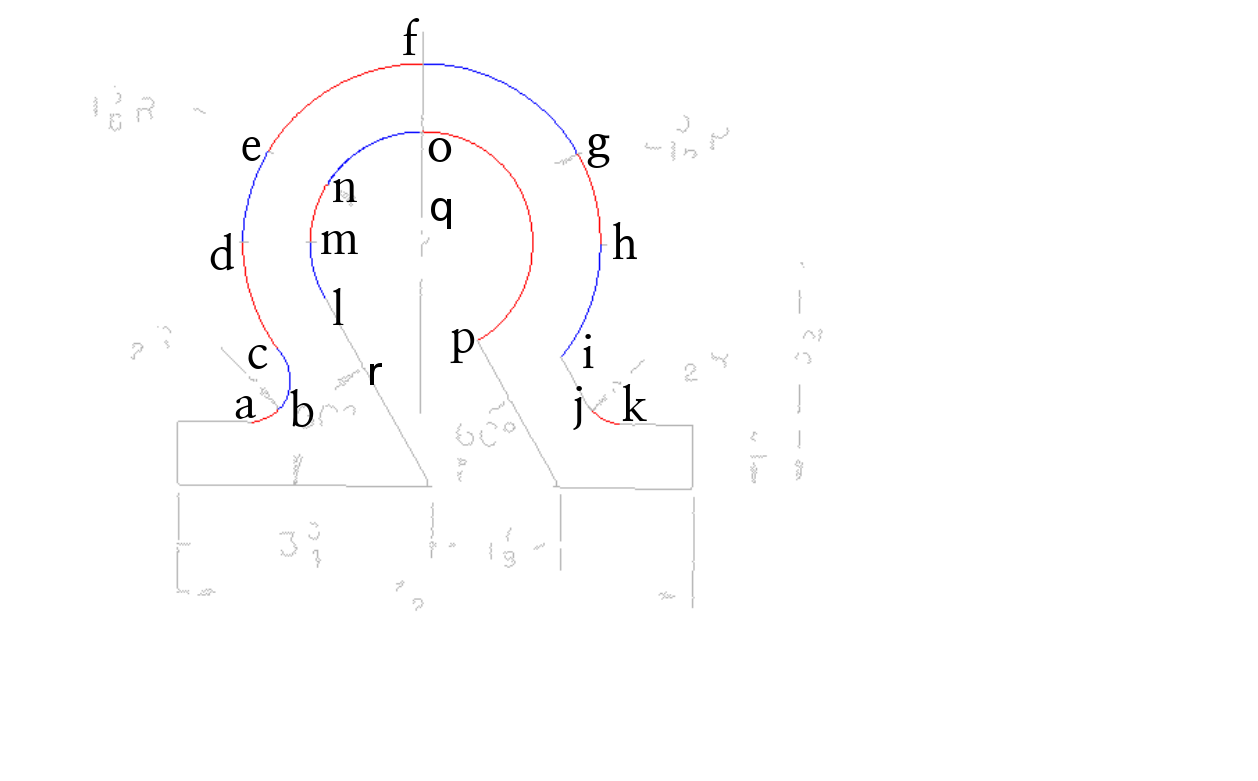}}&
\fbox{\includegraphics[width=.46\textwidth,viewport=40 110 590 550,clip]{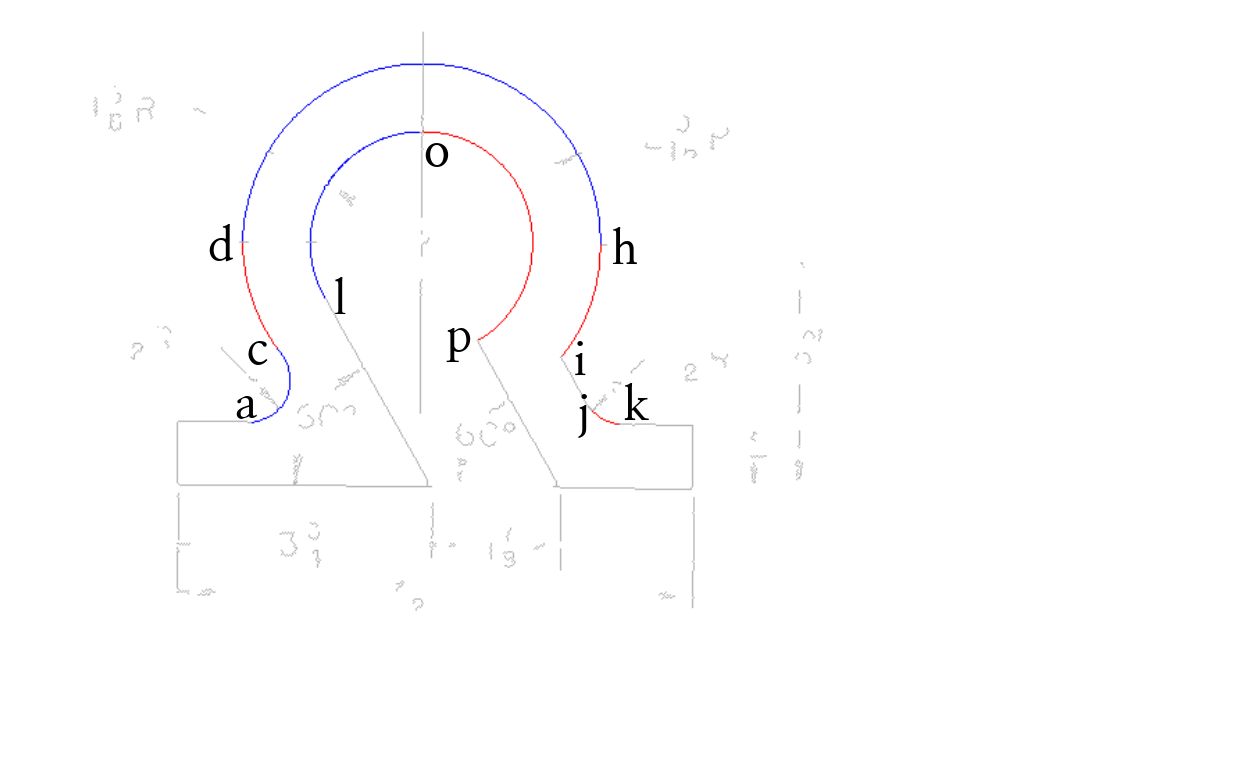}}\\
\parbox[t]{.25\textwidth}{\centering(c)}&\parbox[t]{.25\textwidth}{\centering(d)}\smallskip\\
\fbox{\includegraphics[width=.46\textwidth,viewport=40 110 590 550,clip]{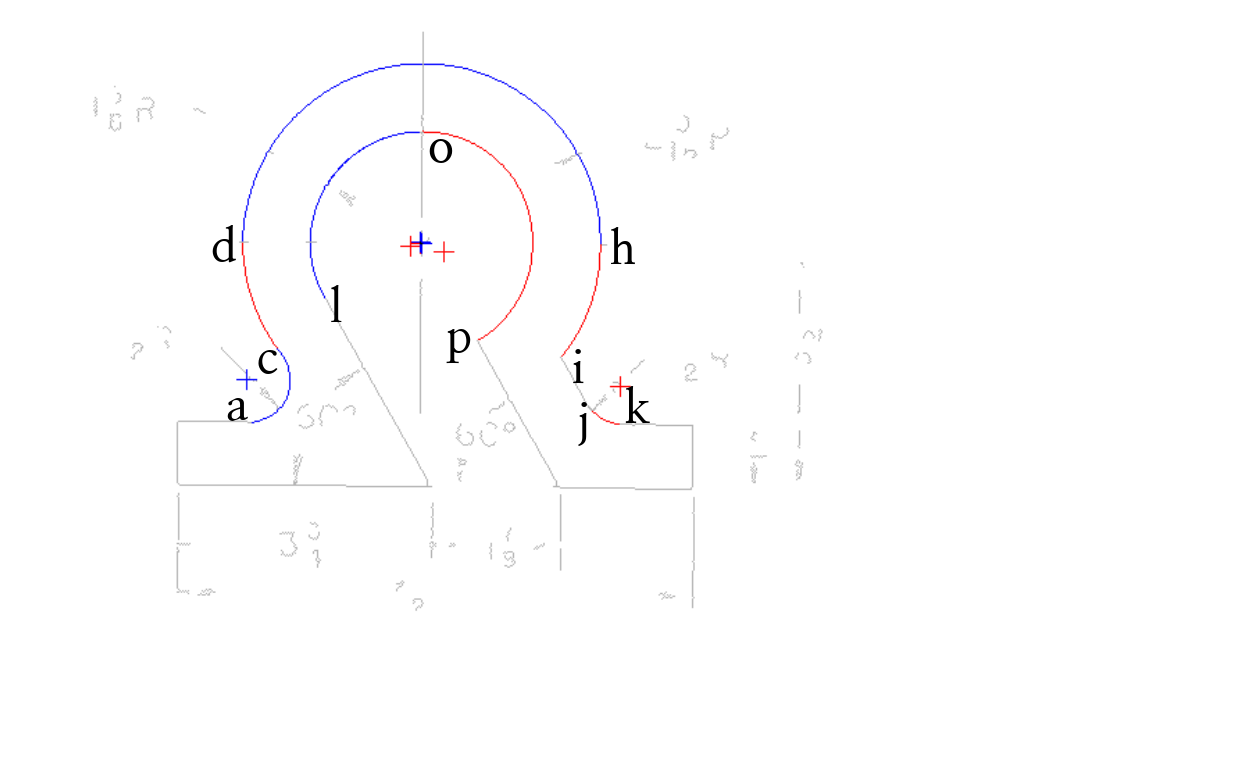}}&
\fbox{\includegraphics[width=.46\textwidth,viewport=40 110 590 550,clip]{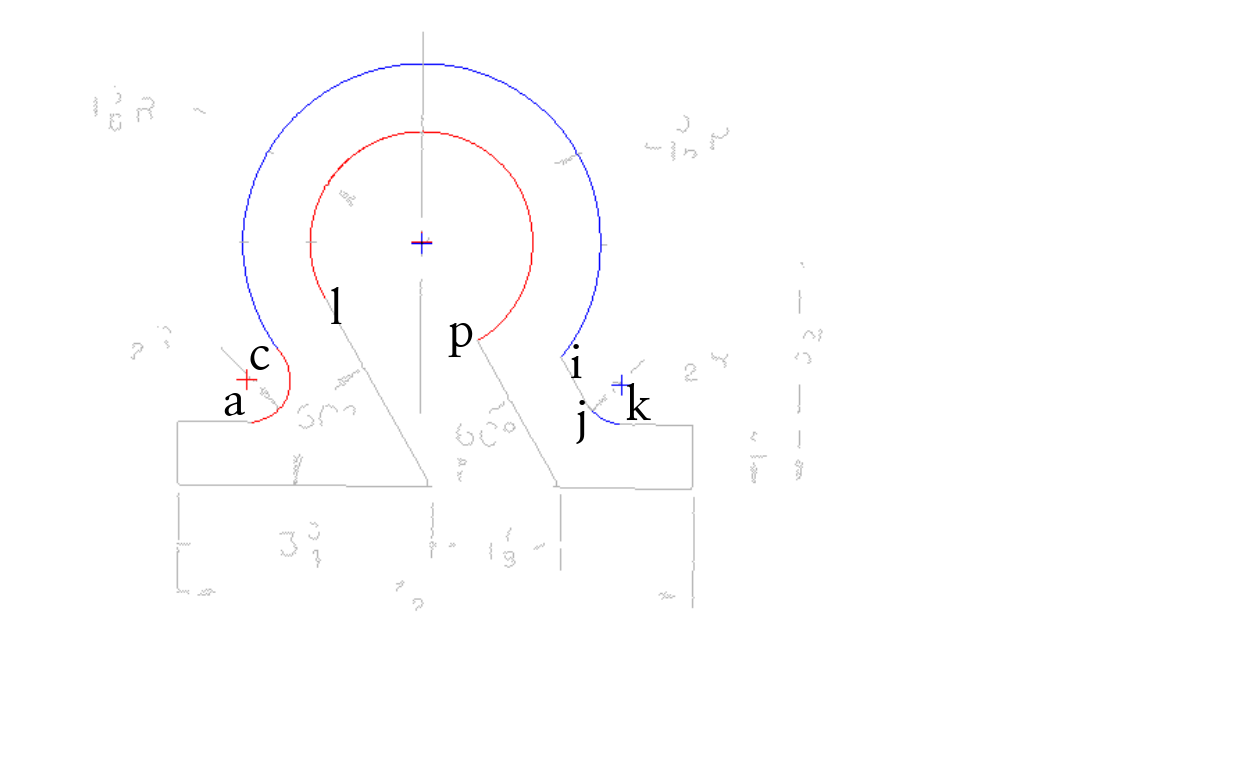}}\\
\parbox[t]{.25\textwidth}{\centering(e)}&
\parbox[t]{.25\textwidth}{\centering(f)}\smallskip\\
\multicolumn{2}{r}{(Continued to next page.)}
\end{tabular}
\label{fig:demo1}
\end{figure}

\begin{figure}[!t]\center\footnotesize
\begin{tabular}{@{\,}c@{\,}c@{\,}}
\fbox{\includegraphics[width=.46\textwidth,viewport=40 110 590 550,clip]{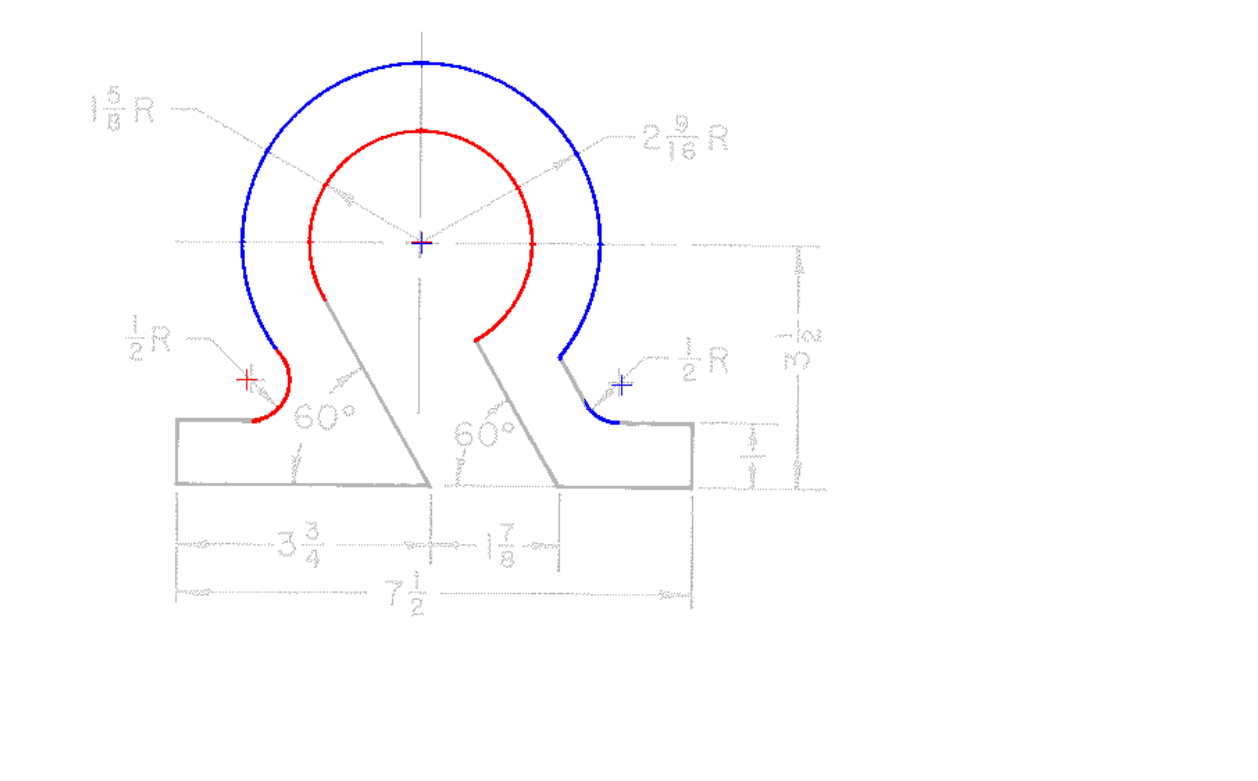}}&
\fbox{\includegraphics[width=.46\textwidth,viewport=40 110 590 550, clip]{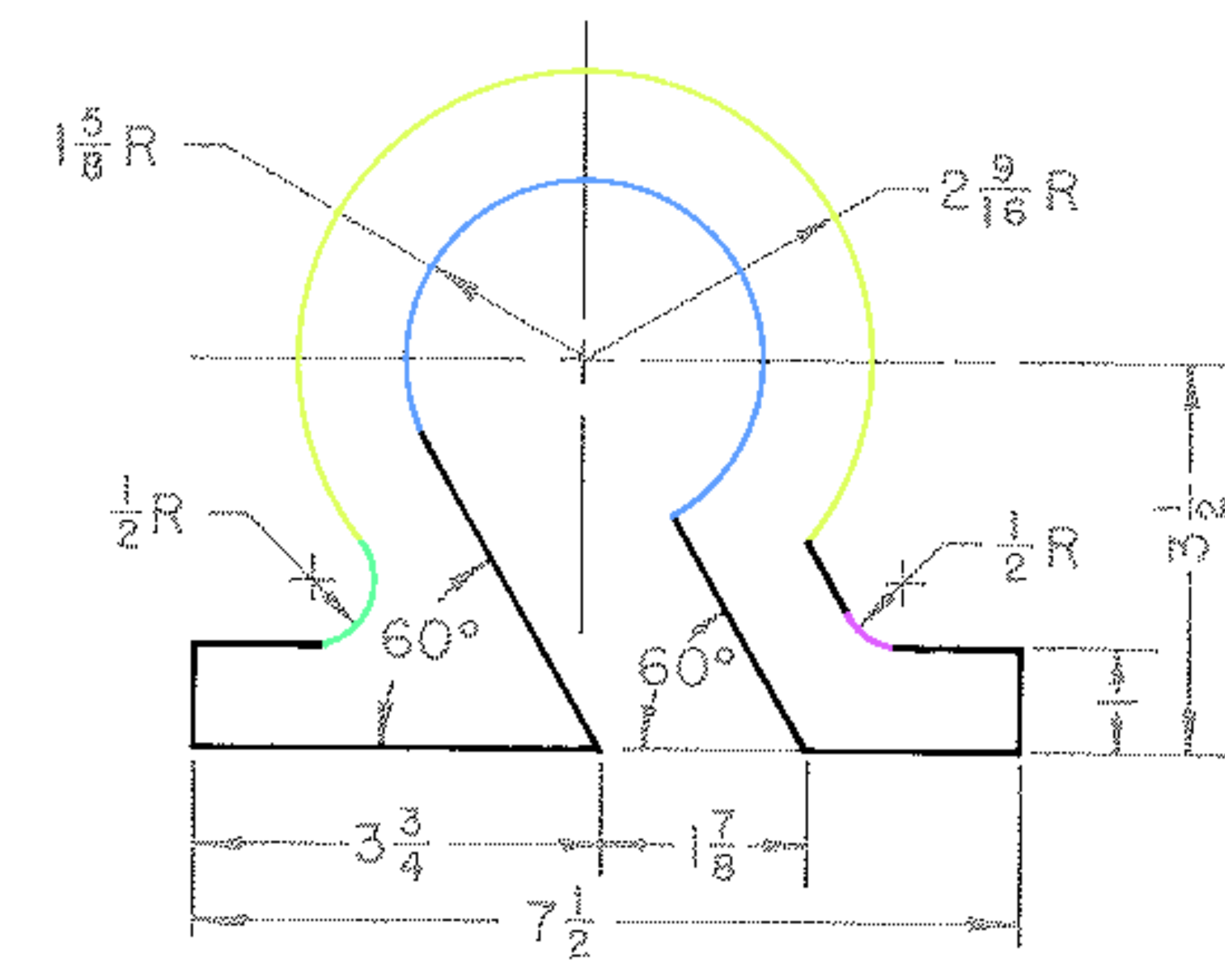}}\\
\parbox[t]{.25\textwidth}{\centering(g)}&
\parbox[t]{.25\textwidth}{\centering(h)}
\end{tabular}
\caption{Step-wise snapshots of the algorithm on {\tt g07-tr6.tif} from GREC2007 dataset~\cite{grec07}: (a)\,input image; (b)\,segments after thinning; (c)\,circular arcs by {\em chord property}; (d)\,after combining adjacent arcs; (e)\,centers by
{\em sagitta property}; (f)\,after applying {\em restricted Hough transform}; (g)\,final result; (h)\,ground-truth;}
\label{fig:demo}
\end{figure}

A demonstration of the proposed algorithm (CSA) on a sample image is shown in Figure~\ref{fig:demo}.
All the digital curve segments in the image are extracted and stored in the list ${\mathcal L}$ (Figure~\ref{fig:demo}(b)).
The straight line segments are removed from ${\mathcal L}$ using the straightness properties.
For example, ${fo}$ and ${oq}$ are two of the straight line segments that are removed (Figure~\ref{fig:demo}(c)).
Then using the {\em chord property}, the circular segments are detected with necessary updates in the list ${\mathcal L}$.
For example, the digital curve segment $db$ consists of two circular segments.
After their extraction, the segment $dc$ is stored in the node of the original segment and the other one, i.e., $cb$, is stored in a newly created node in the list ${\mathcal L}$.
Similarly, for the segment $mr$, the circular arc $ml$ is extracted, inserted in the node of $mr$,  and the straight part ${lr}$ is removed.

Two or more adjacent arcs are combined if they jointly satisfy the {\em chord property} in order to get larger arcs for reducing the computational error in the next step while applying the {\em sagitta property} (Sec.~\ref{ssec:combining}).
After this combining/merging, the number of circular segments gets significantly reduced, as reflected in Figure~\ref{fig:demo}(d).
The radius and  the center of each arc in ${\mathcal L}$ are computed using the {\em sagitta property} and stored in the node of the corresponding arc (Figure~\ref{fig:demo}(e)).
Next, we apply rHT on these arcs (Sec.~\ref{ssec:hough}).
The resultant image is shown in Figure~\ref{fig:demo}(f).
Finally, we consider the detected circular arcs, and for each pixel on a detected
arc, the object pixels in its 8-neighborhood are iteratively marked as pixels of the corresponding circular arc.
Figure~\ref{fig:demo}(g) shows the final circular arcs detected by our algorithm, and the corresponding ground-truth is shown in Figure~\ref{fig:demo}(h).

\section{Experimental Results}
\label{sec:results}

We have implemented our algorithm CSA, and also some existing algorithms (Sec.~\ref{ss:comparison}), in C on the openSUSE{\scriptsize \texttrademark}
OS Release 11.0 HP xw4600 Workstation with Intel{\scriptsize \textregistered} Core{\tiny \texttrademark}2 Duo, 3~GHz processor.
We have performed tests on several datasets including the GREC datasets \cite{grec07,grec13}.
The results of our algorithm on some of the images from these datasets are shown here.

\begin{figure}[!t]
\begin{tabular}{@{\,}c@{\,}c@{\,}c}
\fbox{\includegraphics[width=.31\textwidth]{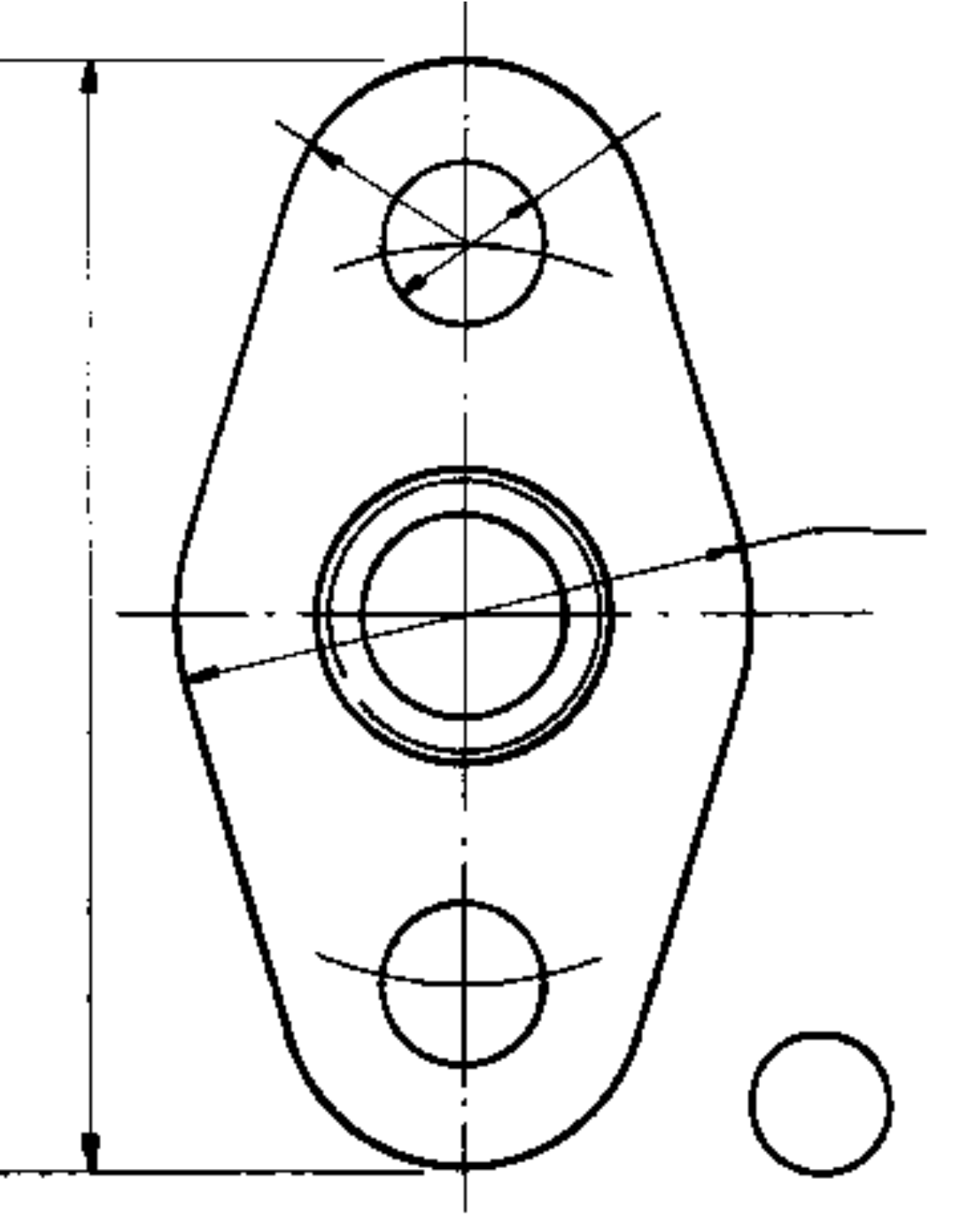}}&
\fbox{\includegraphics[width=.31\textwidth]{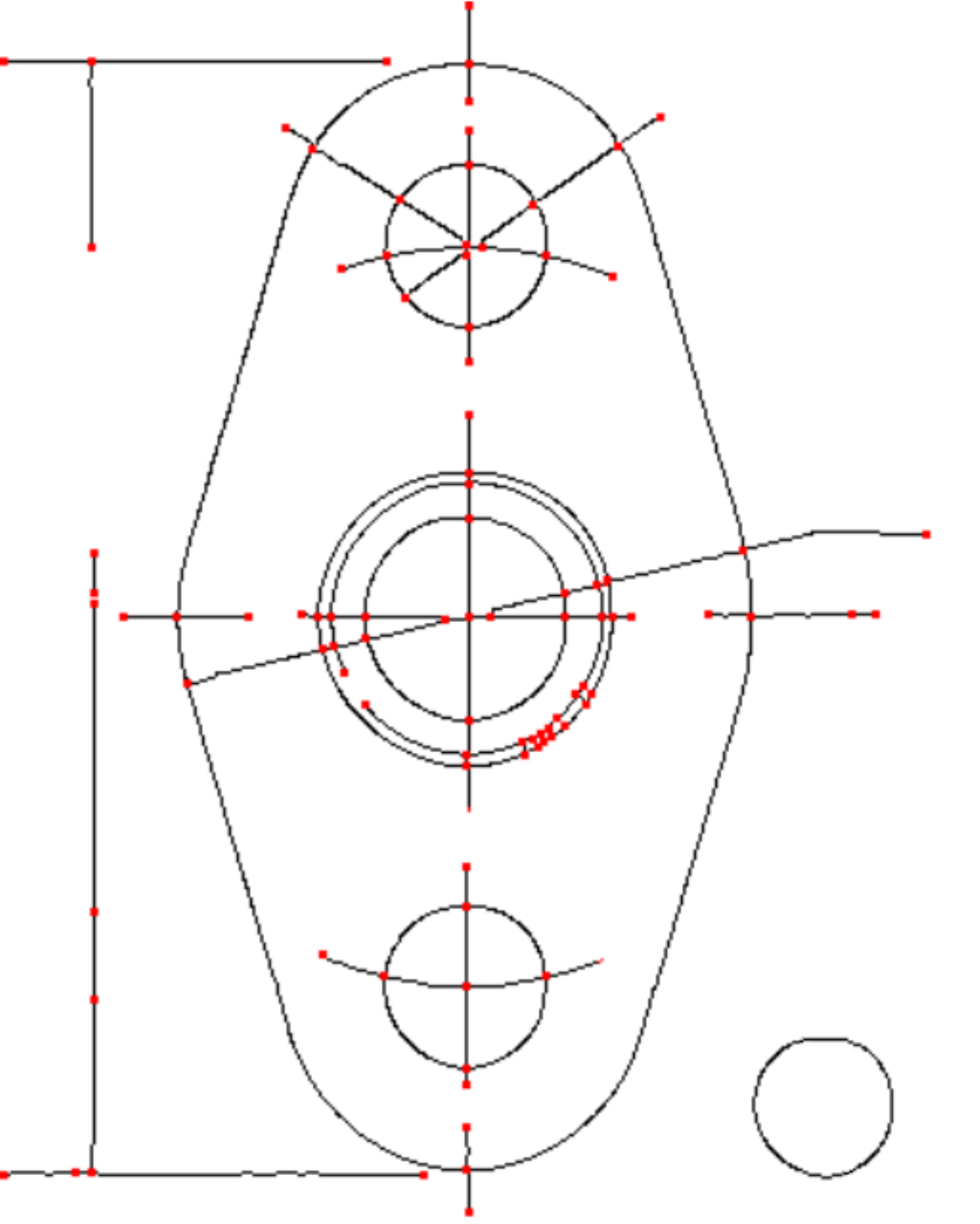}}&
\fbox{\includegraphics[width=.31\textwidth]{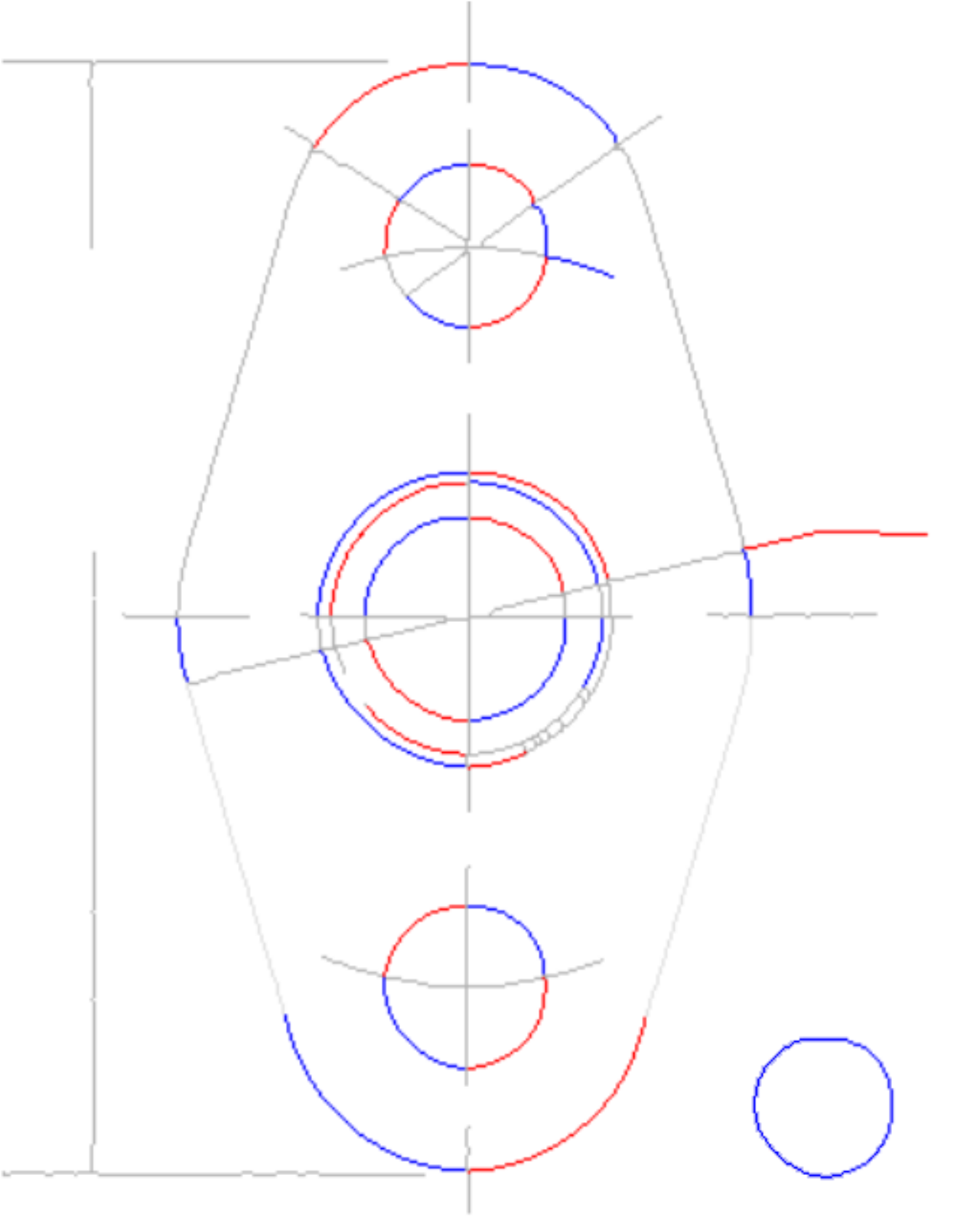}}\\
\parbox[t]{.31\textwidth}{\centering(a)}&
\parbox[t]{.31\textwidth}{\centering(b)}&
\parbox[t]{.31\textwidth}{\centering(c)}
\smallskip\\
\fbox{\includegraphics[width=.31\textwidth]{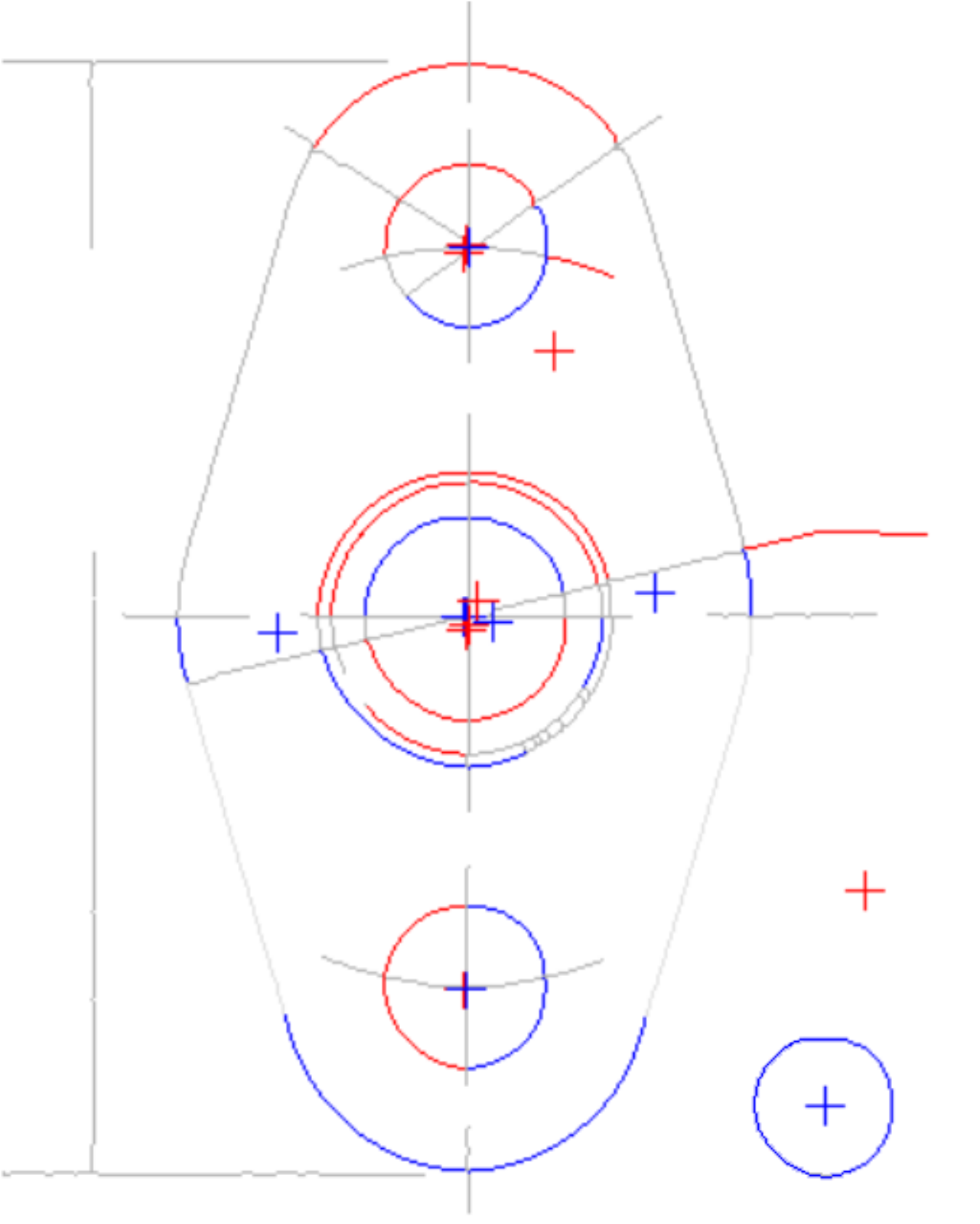}}&
\fbox{\includegraphics[width=.31\textwidth]{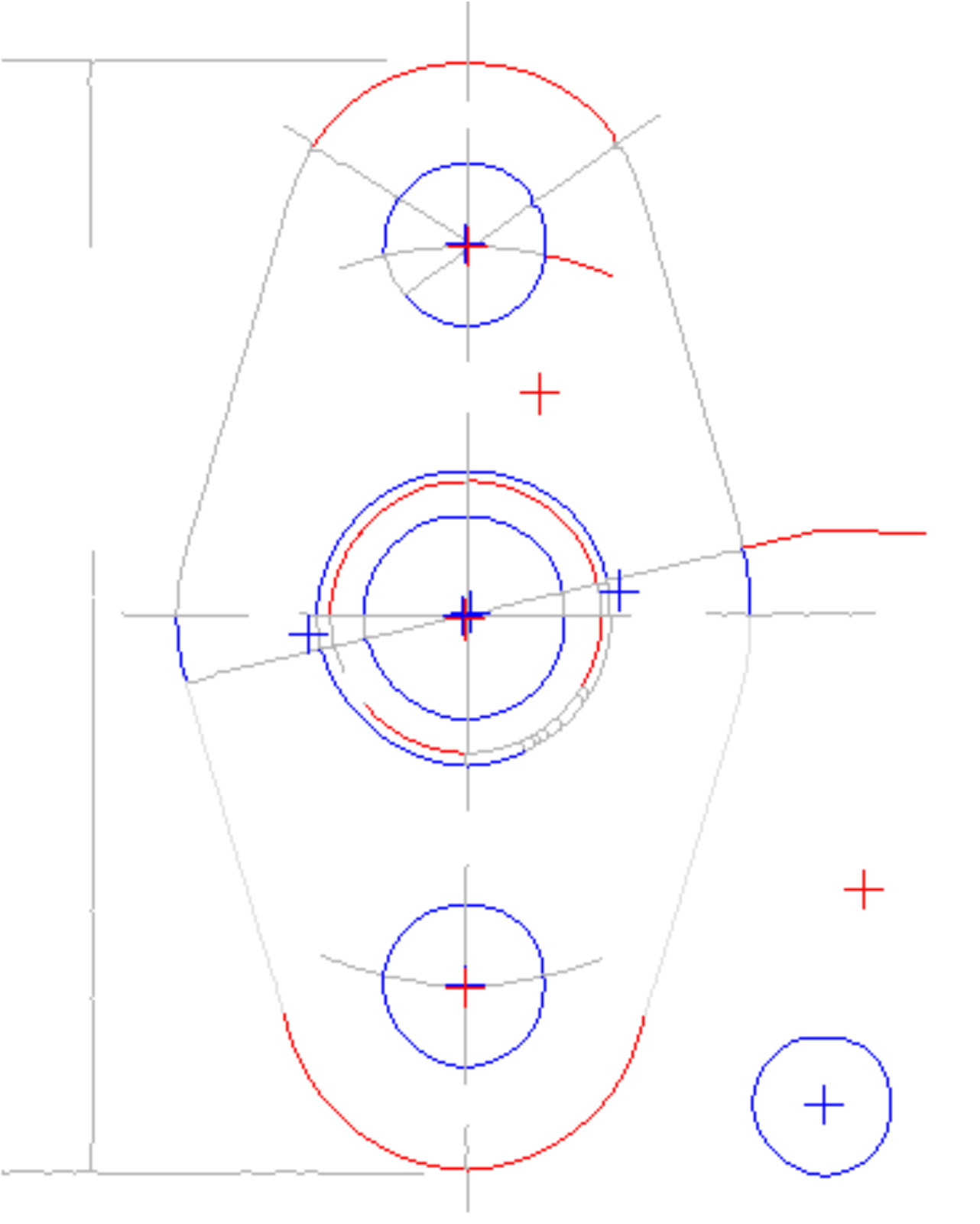}}&
\fbox{\includegraphics[width=.31\textwidth]{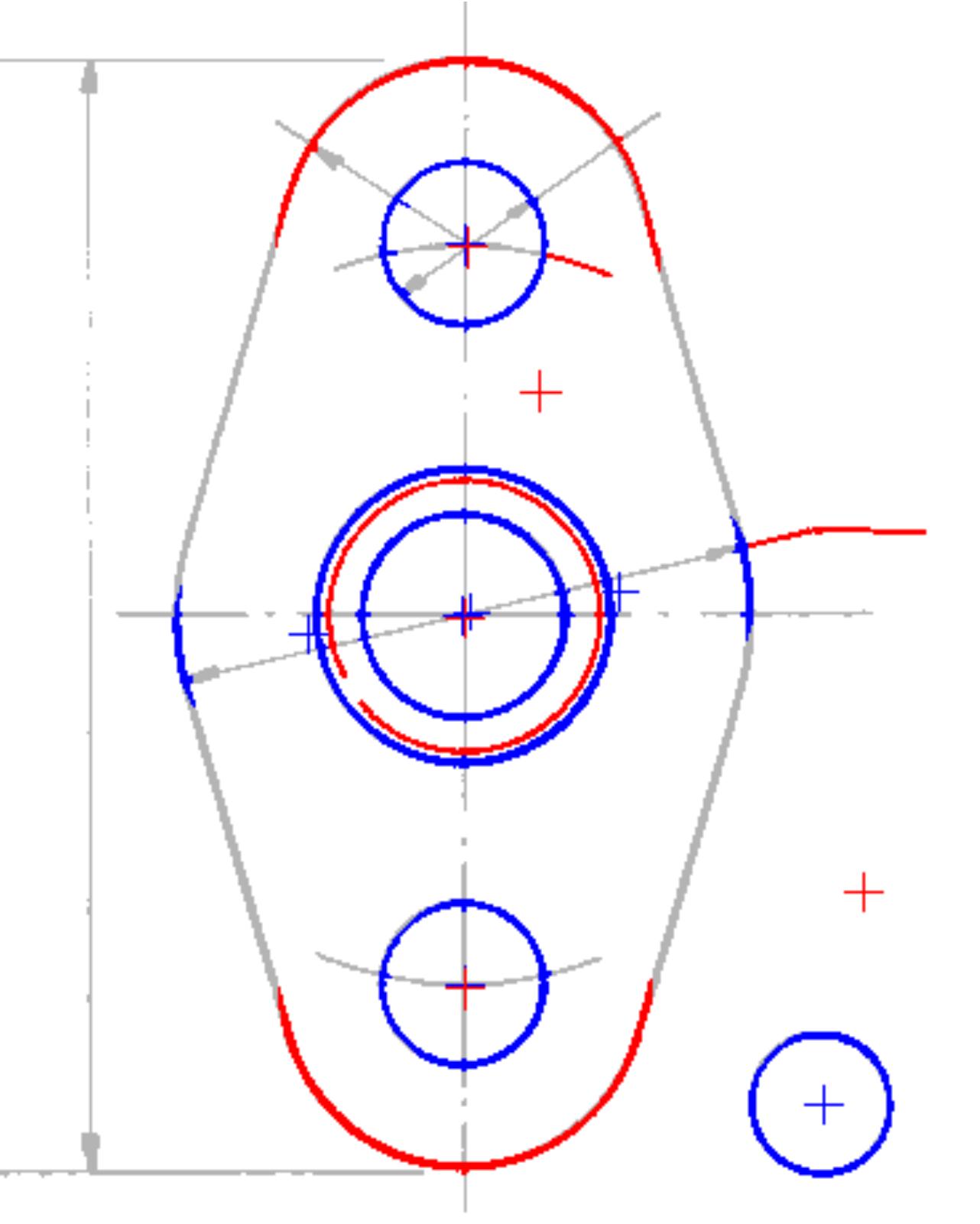}}\\
\parbox[t]{.31\textwidth}{\centering(d)}&
\parbox[t]{.31\textwidth}{\centering(e)}&
\parbox[t]{.31\textwidth}{\centering(f)}
\end{tabular}\medskip\\
\caption{Step-wise snapshots of our experiment on {\tt g07-tr1.tif} from GREC2007 dataset~\cite{grec07}: (a)\,input image; (b)\,segments after thinning; (c)\,circular arcs by {\em chord property} and after combining adjacent arcs; (d)\,centers by
{\em sagitta property}; (e)\,after applying {\em restricted Hough transform}; (f)\,final result.}
\label{fig:g07-tr1}
\end{figure}

{\def\baselinestretch{1.1}
\begin{figure}[!t]\center
\fbox{\includegraphics[width=0.98\textwidth,viewport=0 0 1710 245,clip]{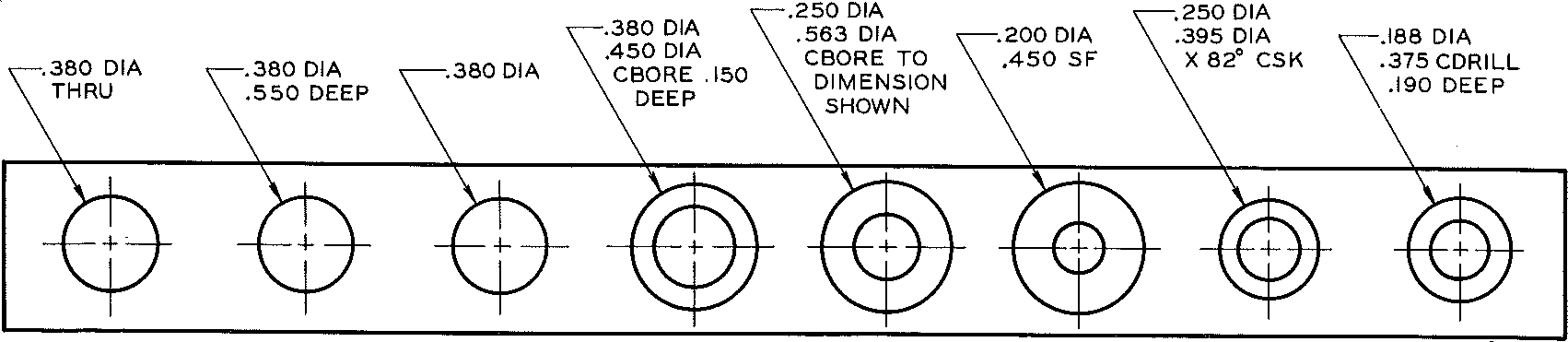}}\\
\,(a)\medskip\\
\fbox{\includegraphics[width=0.98\textwidth,viewport=0 0 1710 245,clip]{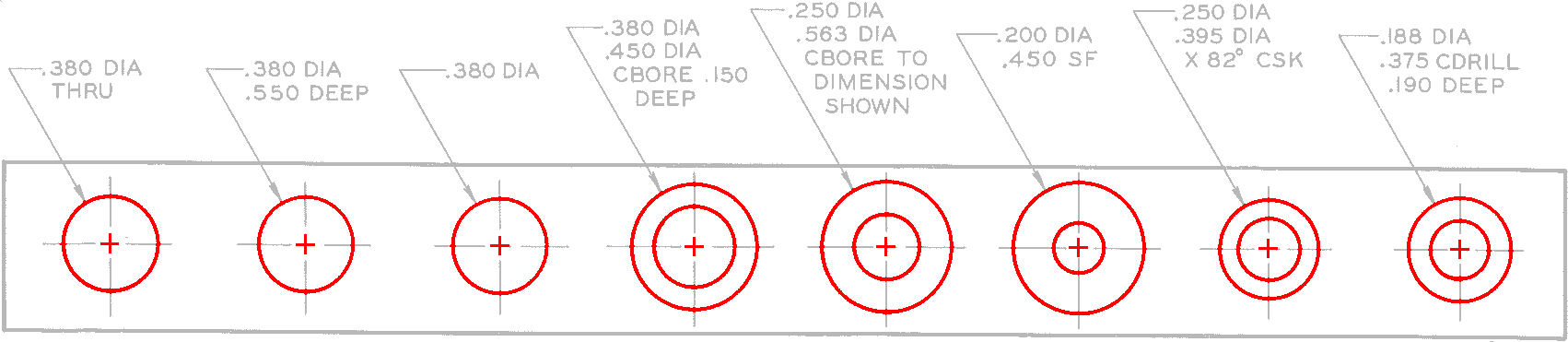}}\\
\,(b)\\
\caption{An example of perfect result by our algorithm on the image cropped from {\tt g07-tr4.tif} of GREC2007 dataset, which contains only full circles.}
\label{fig:g07-tr4}
\end{figure}}

{\def\baselinestretch{1.1}
\begin{figure}[!t]\center
\begin{tabular}{@{}c@{\,}@{\,}c@{}}
\fbox{\includegraphics[width=0.4\textwidth]{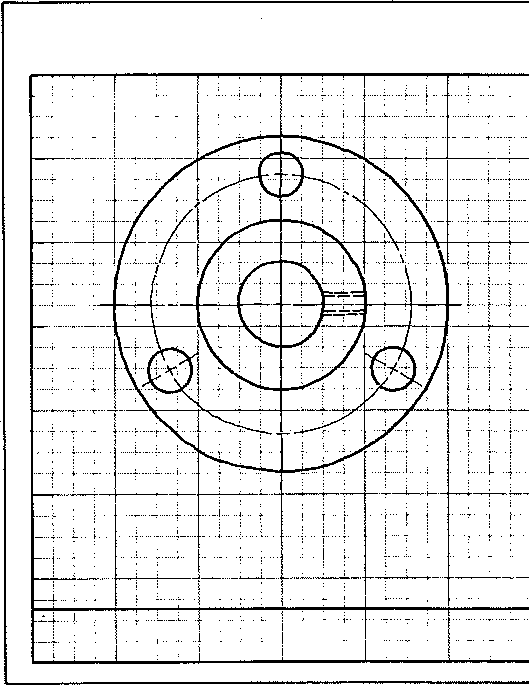}}&
\fbox{\includegraphics[width=0.4\textwidth]{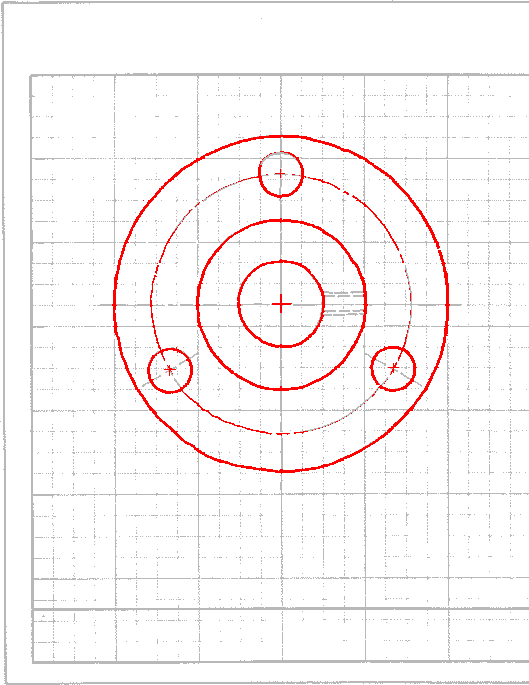}}\\
\,(a)&\,(b)\\
\fbox{\includegraphics[width=0.45\textwidth]{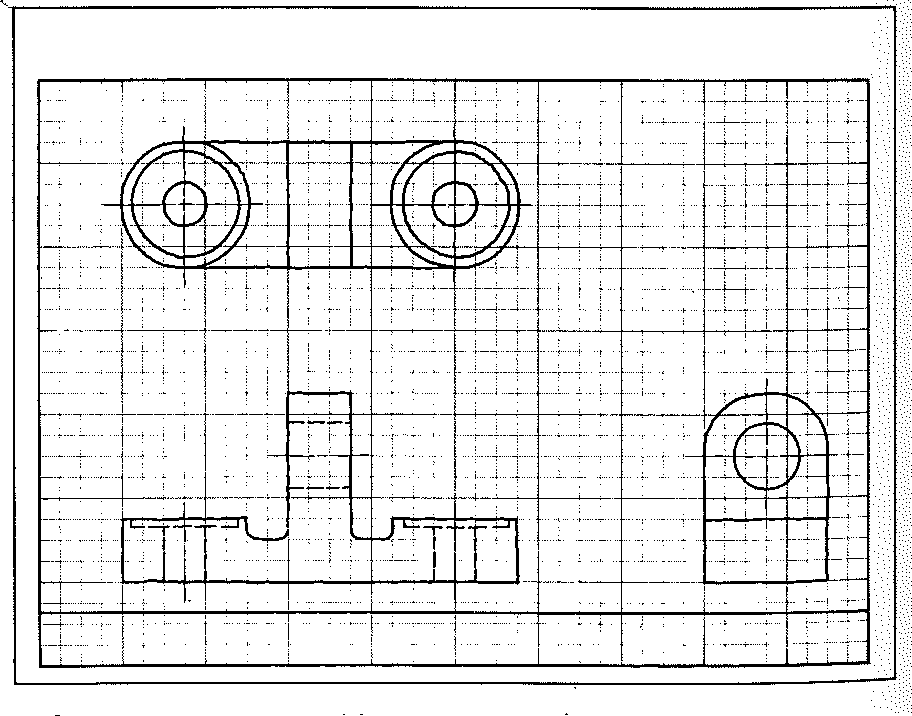}}&
\fbox{\includegraphics[width=0.45\textwidth]{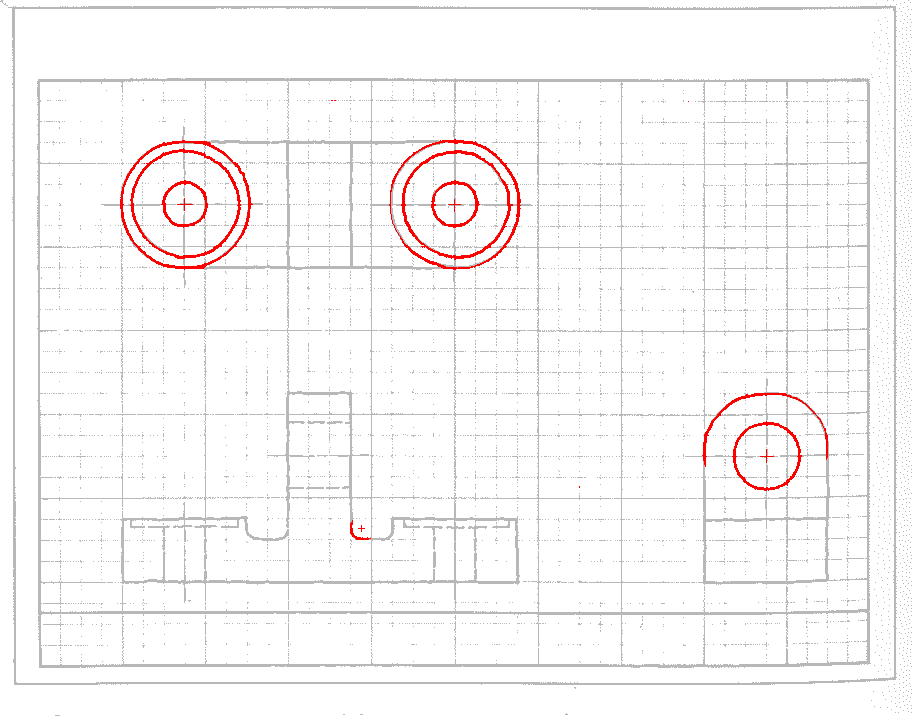}}\\
\,(c)&\,(d)\\
\end{tabular}
\caption{Results by our algorithm on two difficult images from GREC2007 dataset.
{\bf Top:}\,{\tt g07-tr2.tif}. {\bf Bottom:}\,{\tt g07-tr3.tif}. (a) and (c) are the respective inputs, and (b) and (d) are the respective outputs.}
\label{fig:g07-tr23}
\end{figure}}

{\def\baselinestretch{1.1}
\begin{figure}[!t]\center
\begin{tabular}{@{}c@{\,}@{\,}c@{}}
\includegraphics[width=0.38\textwidth]{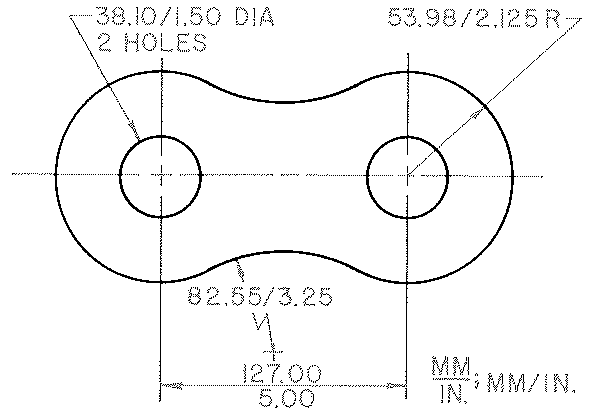}&
\includegraphics[width=0.38\textwidth]{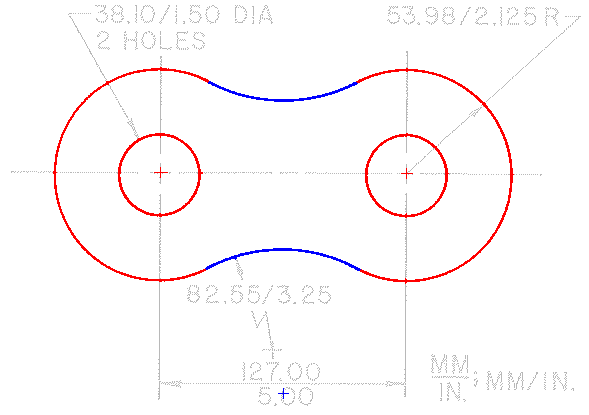}\\
\includegraphics[width=0.45\textwidth]{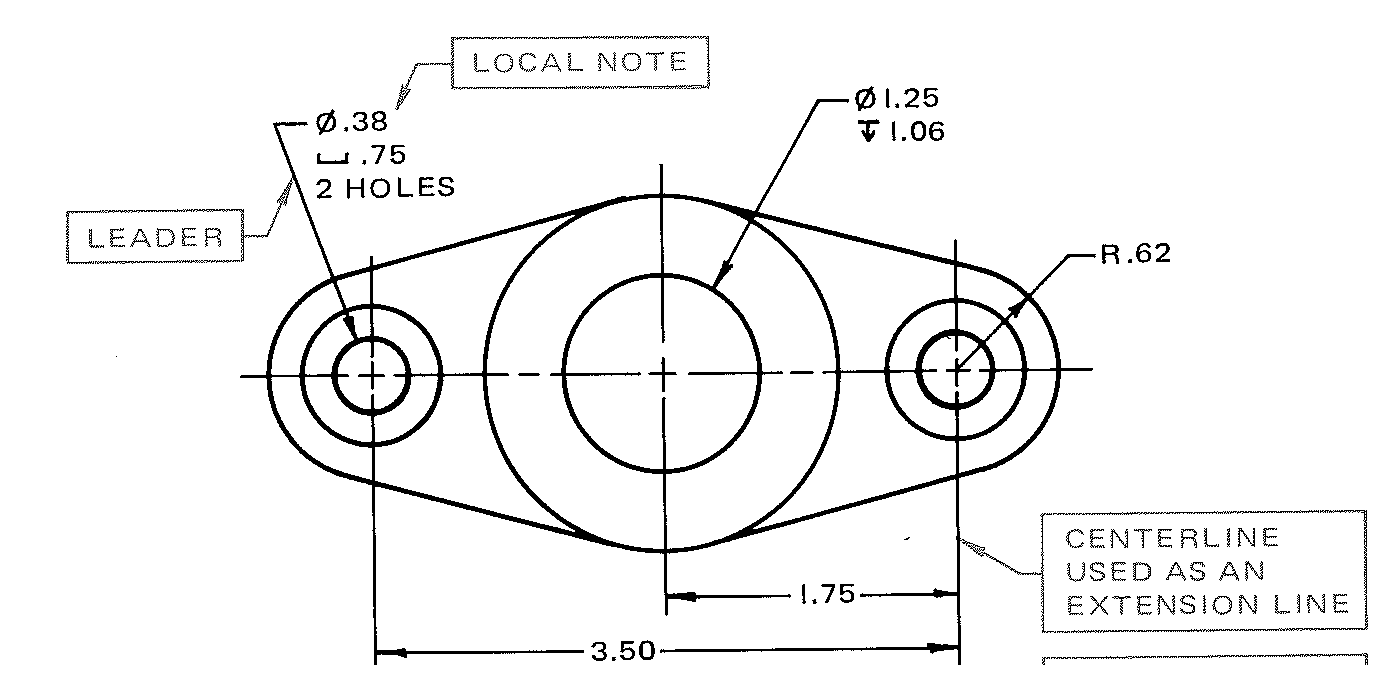}&
\includegraphics[width=0.45\textwidth]{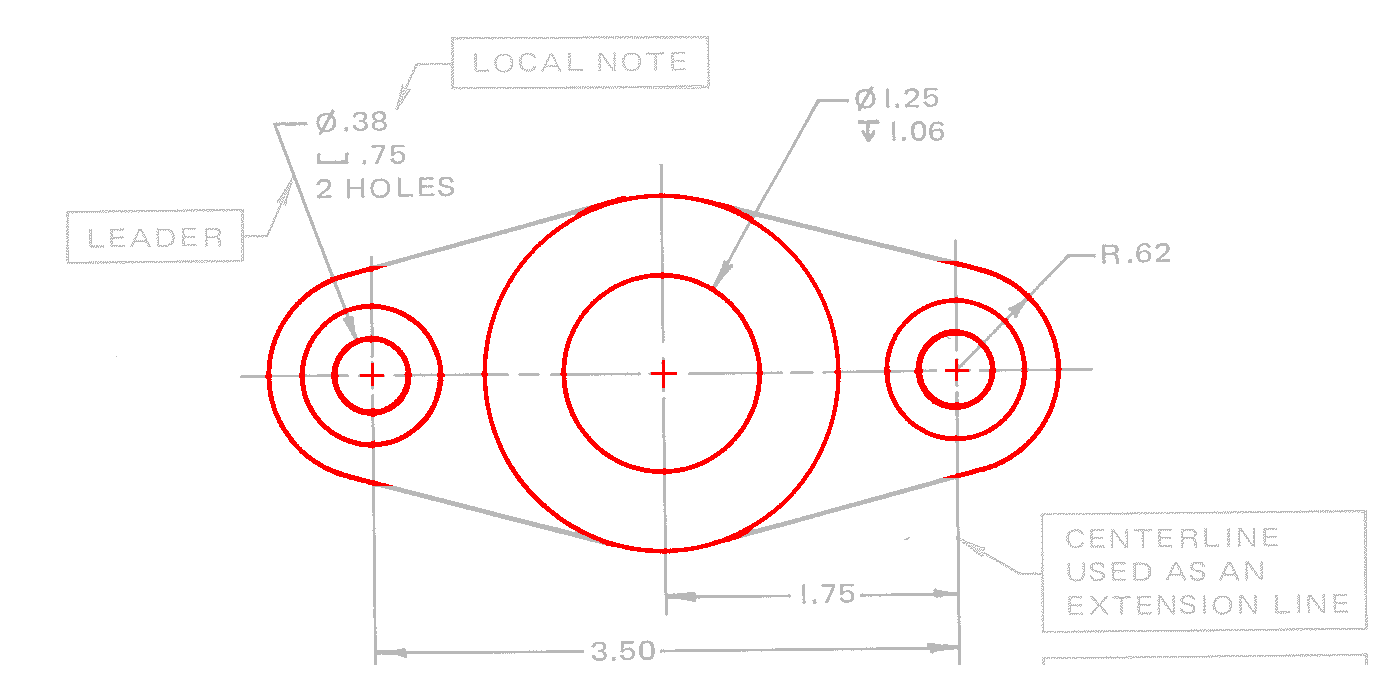}
\end{tabular}
\caption{Results by our algorithm on two images from GREC2007 dataset.
{\bf Top:}\,{\tt g07-tr5.tif}. {\bf Bottom:}\,{\tt g07-tr9.tif}.}
\label{fig:g07-tr59}
\end{figure}}

Figure~\ref{fig:g07-tr1} shows the step-by-step output of our algorithm on the image {\tt g07-tr1.tif} from GREC2007 dataset.
The input image contains closely placed concentric circles and circular arcs, which have been accurately detected by our algorithm.
Another set of results is shown in Figure~\ref{fig:g07-tr4}.
Its input image is {\tt g07-tr4.tif} from GREC2007 dataset, which is a comparatively simpler image containing only full circles, for which our algorithm gives perfect output.
Even for noisy or unclear input images, our algorithm detects the circles and the circular arcs, as evident from the results on two images shown in Figure~\ref{fig:g07-tr23}.
Figure~\ref{fig:g07-tr59} shows the result on another pair of images from GREC2007 dataset.
Although the images contain annotation text, the arcs are correctly detected by our algorithm.
\pb{Some more results on GREC2013 datasets and our own dataset (SMP) are given in Appendix.
We have prepared the dataset SMP by scanning engineering drawing books, e.g.,~\cite{simmons_10}.}

\begin{figure}[!t]
\begin{tabular}{@{\,}c@{\,}c@{\,}c@{\,}}
\fbox{\includegraphics[width=.31\textwidth]{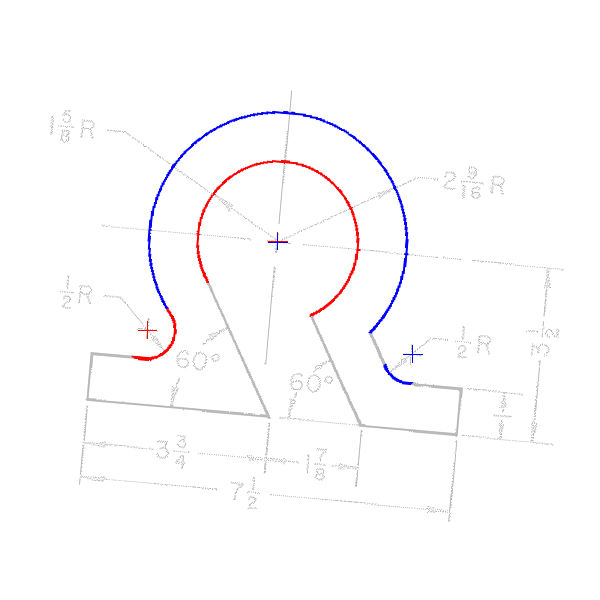}}&
\fbox{\includegraphics[width=.31\textwidth]{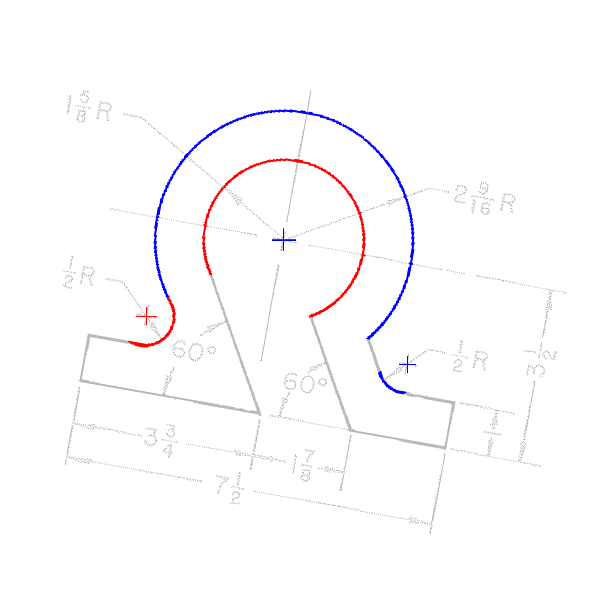}}&
\fbox{\includegraphics[width=.31\textwidth]{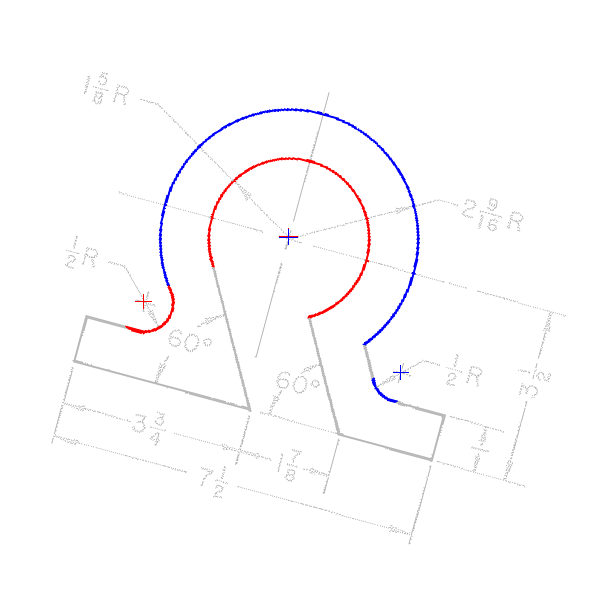}}\\
\fbox{\includegraphics[width=.31\textwidth]{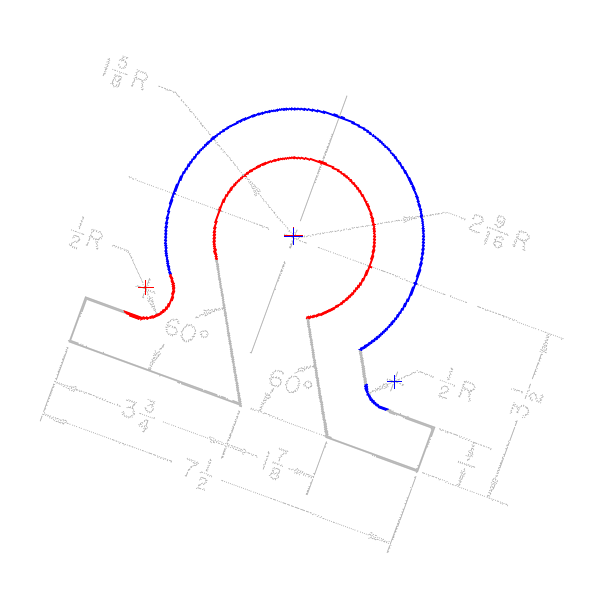}}&
\fbox{\includegraphics[width=.31\textwidth]{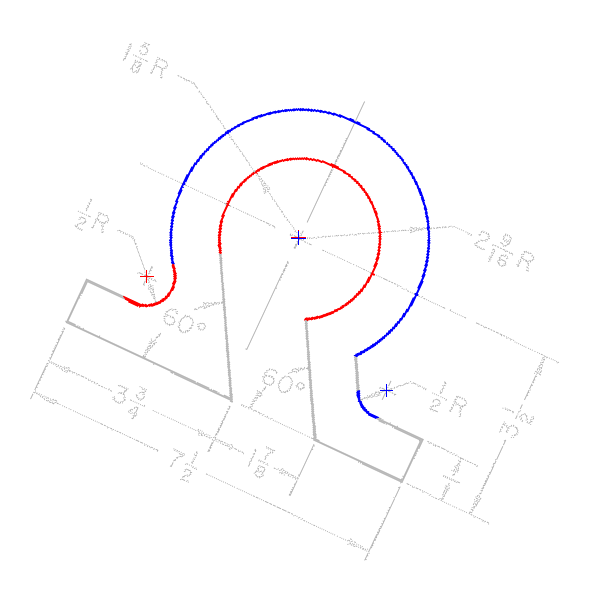}}&
\fbox{\includegraphics[width=.31\textwidth]{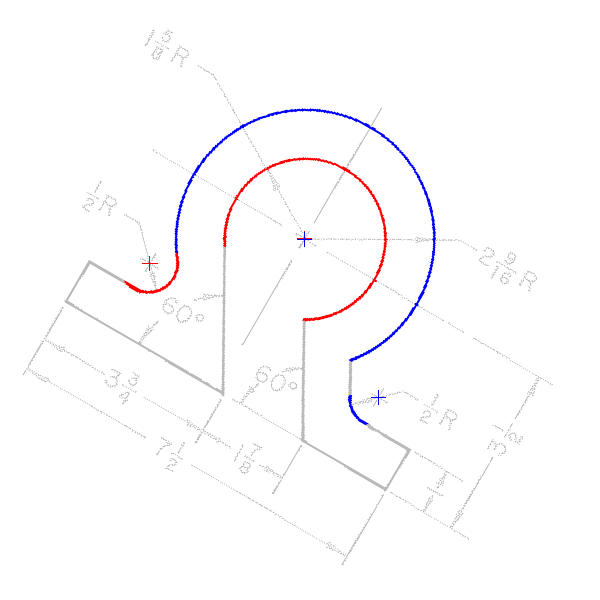}}\\
\fbox{\includegraphics[width=.31\textwidth]{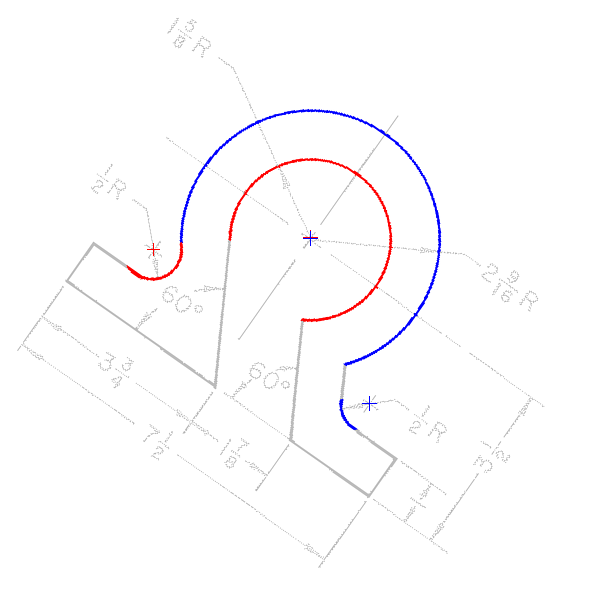}}&
\fbox{\includegraphics[width=.31\textwidth]{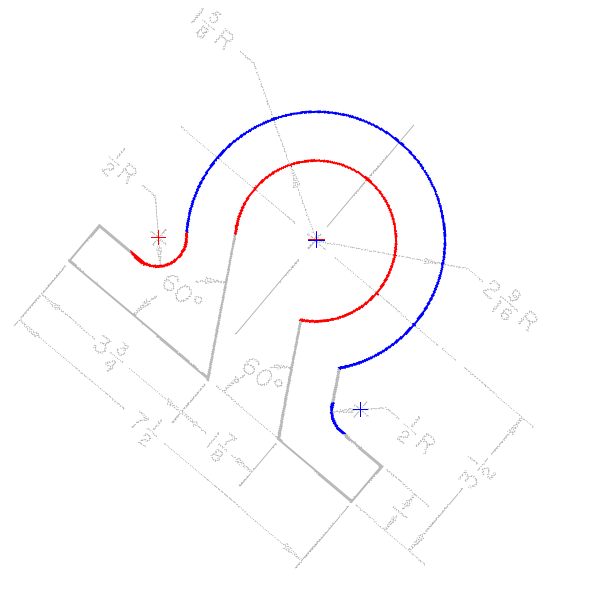}}&
\fbox{\includegraphics[width=.31\textwidth]{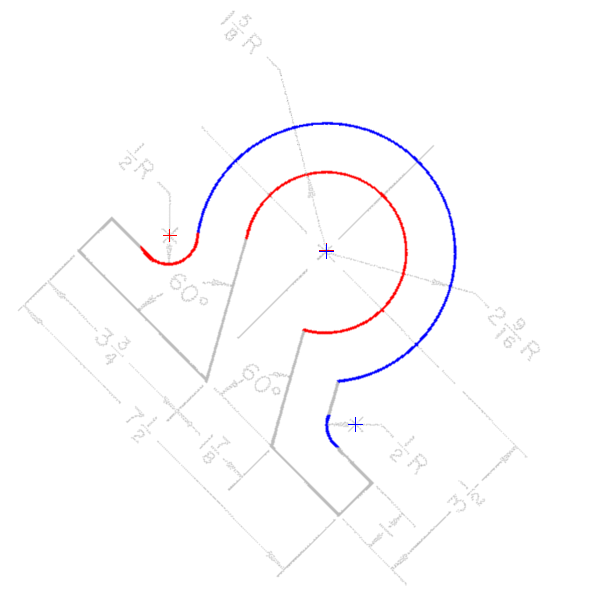}}
\end{tabular}
\caption{Results showing robustness of our algorithm against rotation.
The output images are for {\tt g07-tr6.tif}, rotated clockwise by $5^o$, $10^o$, $15^o,\ldots,45^o$,
shown here in row-major order.}
\label{fig:rotation}
\end{figure}

\begin{figure}[!t]
\begin{tabular}{@{\,}c@{\,}c@{\,}}
\fbox{\includegraphics[width=.46\textwidth,viewport=40 110 590 550,clip]{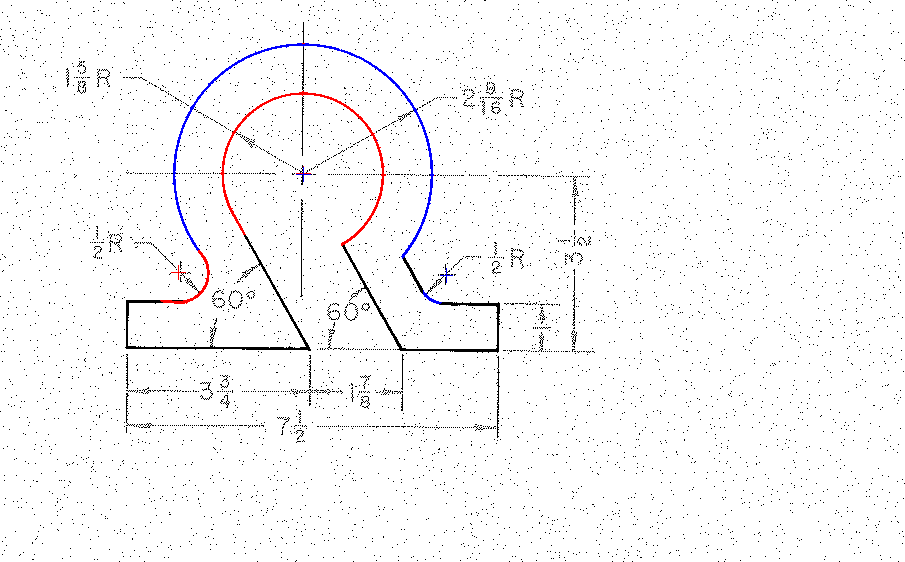}}&
\fbox{\includegraphics[width=.46\textwidth,viewport=40 110 590 550,clip]{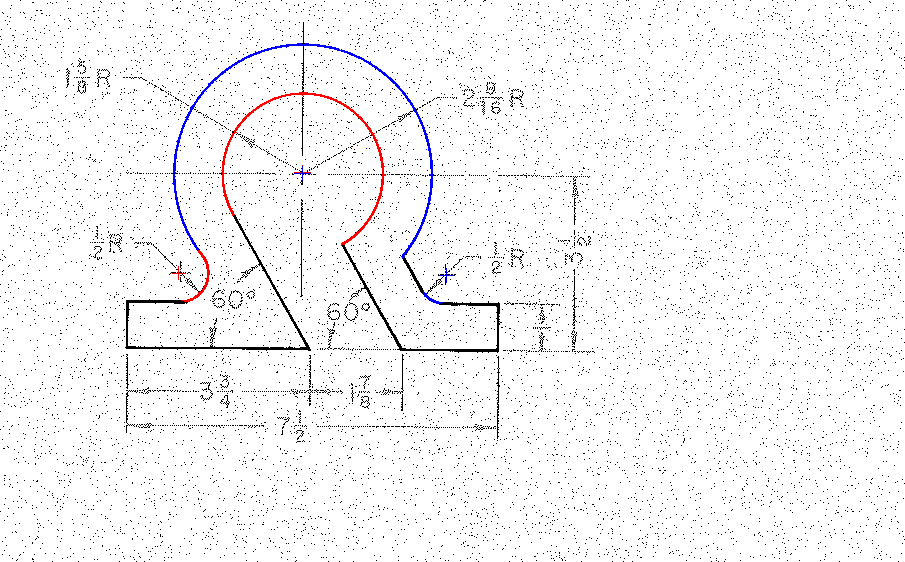}}\\
\fbox{\includegraphics[width=.46\textwidth,viewport=40 110 590 550,clip]{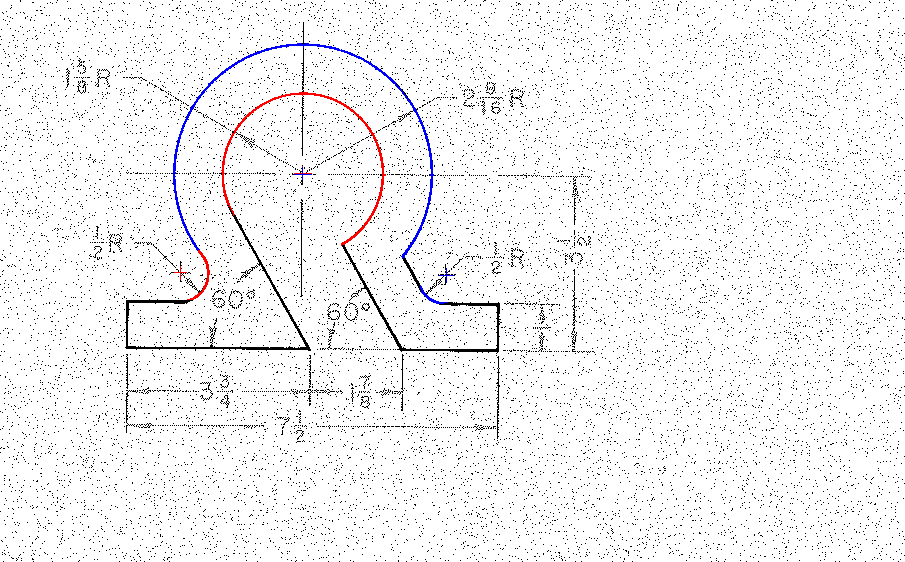}}&
\fbox{\includegraphics[width=.46\textwidth,viewport=40 110 590 550,clip]{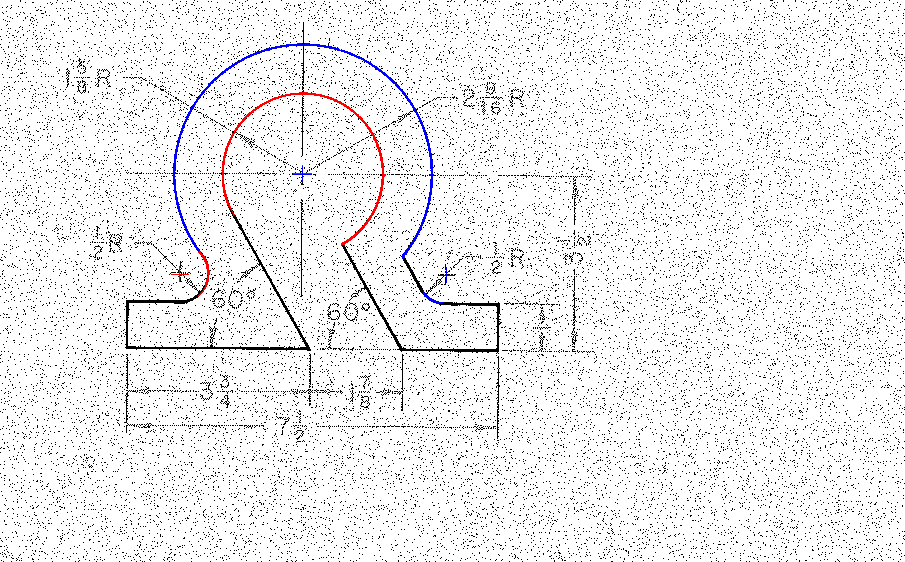}}\\
\fbox{\includegraphics[width=.46\textwidth,viewport=40 110 590 550,clip]{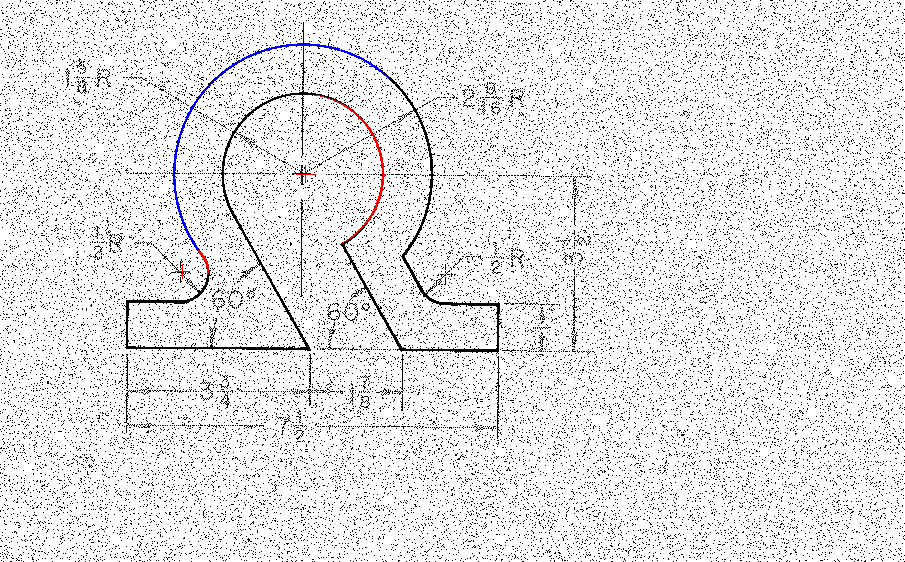}}&
\fbox{\includegraphics[width=.46\textwidth,viewport=40 110 590 550,clip]{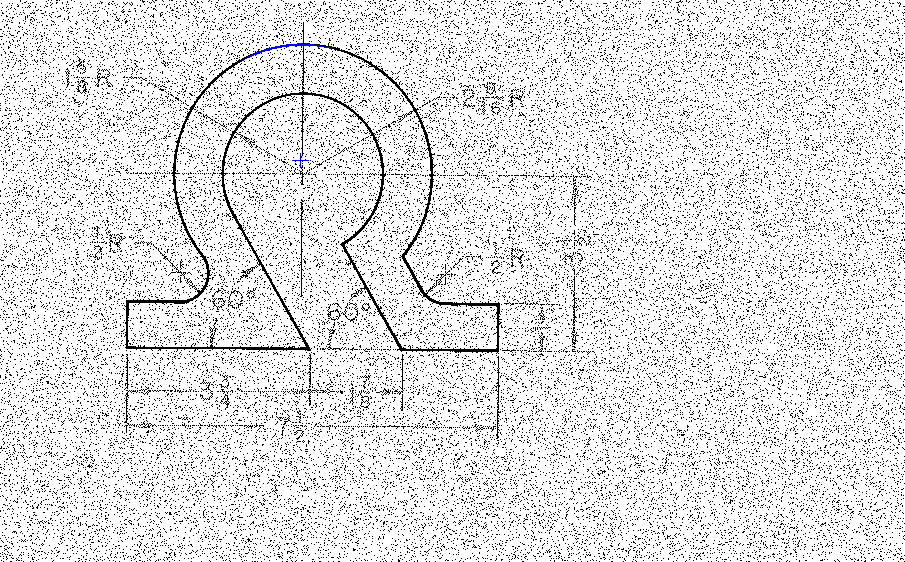}}\\
\end{tabular}
\caption{Results of our algorithm after adding salt-and-paper noise in the image {\tt g07-tr6.tif} 
to different levels (1\%, 2\%, 3\%, 5\%, 8\%, 10\%, shown in row-major order).}
\label{fig:noise}
\end{figure}

We have tested our algorithm to evaluate its robustness against rotation.
Results on the image {\tt g07-tr6.tif} for different angles of rotation are shown in Figure~\ref{fig:rotation}.
These results indicate that the proposed algorithm has a reasonable robustness against rotation.
In order to evaluate its performance on noisy images, we have also run our algorithm on some images after
adding salt-and-pepper noise.
The results on the image {\tt g07-tr6.tif} for different noise levels are shown in Figure~\ref{fig:noise}.
From these results, we can notice that the algorithm gives an acceptable result  up~to 5\% noise.

\begin{table}[!t]\center\footnotesize
\caption{Results for images from GREC2007 dataset for the proposed algorithm.}
\begin{tabular}{@{}@{\ }c@{\ }|@{\ }c@{\ }|@{\ }c@{\ }|@{\ }c|r|r|r|c|c|c|c@{\,}}\hline
Image   &\#rows$\times$ & \multirow{2}{*}{{\centering $N_c$}}&  \multirow{2}{*}{{\centering  $N_g$}}&
\multirow{2}{*}{{\centering $N_p$}} & \multirow{2}{*}{{\centering $N_{fa}$}} &\multirow{2}{*}{{\centering $N_{fn}$}}
& \multirow{2}{*}{{\centering E1}}&\multirow{2}{*}{{\centering E2}} &\multirow{2}{*}{{\centering AD}}&Time\\
{\tt g-07-}&\#columns   &       &  &  &&&&&&(sec.)\\\hline\hline
{\tt 2} & $ 792 \times  662$ & $ 29072$ & $  7820$ & $  7243$ & $  72$ & $ 649$ & $0.921$ & $ 8.299$ & $0.975$ & $0.121$ \\\hline
{\tt 3} & $ 924 \times 1167$ & $ 53899$ & $  6663$ & $  6045$ & $ 282$ & $ 900$ & $4.232$ & $13.507$ & $0.978$ & $0.180$\\\hline
{\tt 4} & $ 638 \times 2046$ & $ 53156$ & $ 16245$ & $ 16398$ & $ 155$ & $   2$ & $0.954$ & $ 0.012$ & $0.997$ & $0.217$\\\hline
{\tt 5} & $ 590 \times  977$ & $  8478$ & $  4326$ & $  4355$ & $  35$ & $   6$ & $0.809$ & $ 0.139$ & $0.995$ & $0.125$\\\hline
{\tt 6} & $ 562 \times  905$ & $  8321$ & $  2639$ & $  2884$ & $ 247$ & $   2$ & $9.360$ & $ 0.076$ & $0.970$ & $0.120$\\\hline
{\tt 7} & $ 779 \times  907$ & $ 14817$ & $ 11435$ & $ 11494$ & $  69$ & $  10$ & $0.603$ & $ 0.088$ & $0.995$ & $0.145$\\\hline
{\tt 8} & $ 982 \times 1064$ & $ 25358$ & $ 12026$ & $ 12845$ & $ 821$ & $   2$ & $6.827$ & $ 0.017$ & $0.968$ & $0.166$\\\hline
{\tt 9} & $ 700 \times 1400$ & $ 44717$ & $ 18650$ & $ 18849$ & $ 347$ & $ 148$ & $1.861$ & $ 0.794$ & $0.989$ & $0.205$\\\hline
{\tt 10}& $ 862 \times  853$ & $ 15189$ & $  9930$ & $  9923$ & $  53$ & $  60$ & $0.534$ & $ 0.604$ & $0.993$ & $0.152$\\\hline
{\tt 11}& $1043 \times  900$ & $ 19182$ & $ 12493$ & $ 12369$ & $  14$ & $ 138$ & $0.112$ & $ 1.104$ & $0.992$ & $0.231$\\\hline
\end{tabular}
\label{tab:compare-sb}
\end{table}

For a quantitative evaluation, we have analyzed its performance using some conventional empirical  measures, namely {\em Type~I error} (E1), {\em Type~II error} (E2), and {\em accuracy of detection} (AD).
These parameters are evaluated in terms of the following variables:
\begin{itemize}
\item $N_c=$ the number of curve pixels in the original image.
\item $N_g=$ the number of pixels on circular arcs in the ground-truth image.
\item $N_p=$ the number of pixels on circular arcs detected by the proposed algorithm.
\item $N_{fa}=$ the number of false-acceptance pixels.
\item $N_{fr}=$ the number of false-rejection pixels.
\end{itemize}

Note that both $N_{fa}$ and $N_{fr}$ represent erroneous output of the proposed algorithm. 
The value of $N_{fa}$ gives the number of pixels detected as part of circular arcs by
the proposed algorithm; these pixels do not correspond to circular arcs in the ground-truth image.
On the other hand, the value of $N_{fr}$ gives the number of pixels that are incorrectly detected as
non-circular arc pixels by the proposed algorithm when they actually lie on circular arcs in the ground-truth
image.
The error estimates are defined using these variables as follows.
\begin{itemize}
\item {\em Type~I error}: $\mbox{E1} = \frac{\mbox{$N_{fa}$}}{\mbox{$N_g$}}\times 100\%$.
\item {\em Type~II error}: $\mbox{E2} = \frac{\mbox{$N_{fr}$}}{\mbox{$N_g$}}\times 100\%$.
\item {\em Accuracy of detection}:\\
\[\begin{array}{ll}
\mbox{AD} &= \frac{\mbox{\# pixels correctly detected on circular and non-circular arcs}} 
{\mbox{\# curve pixels in the original image}}\smallskip\\
&=\frac{\mbox{$N_c$}-(\mbox{$N_{fa}$}+\mbox{$N_{fr}$})}{\mbox{$N_c$}}. \end{array}\]
\end{itemize}
A comparative study of some of the images from the GREC2007 dataset is given in Table~\ref{tab:compare-sb}. 

\subsection{Comparison with Existing Methods}
\label{ss:comparison}

In this section, we compare our work with two existing methods, namely, {\em randomized Hough transform} (RHT)
\cite{xu_93} and the HT-based circular arc detection using {\em effective voting method} (EVM) \cite{chiu_05}.
Both these methods are basically improvisations on HT, which is by far the most popular arc detection
technique.
We first provide short descriptions of these two methods, and then compare them with the proposed method (CSA).

\paragraph{Comparison with RHT \cite{xu_93}}
\label{para:rht}

The improvement of HT in terms of computational time is effected in this method by randomly selecting three
points from the image $I$ at each step and mapping them into a point in the parameter space.
The parameter space is implicitly represented by a set $P$, each of whose elements contains a real-valued vector (circle parameter values) and an integer score.
At each step, the parameter space is updated  by checking whether the mapped point already exists in $P$; if so, then its score is incremented by one;
and if not, then a new element is added to $P$, with its score set to one.
In the process of updating, if an element obtains a score equal to the threshold $n_t$ ($=2$ or $3$), then it is considered as a candidate for a true circle.
If the number of points of $I$ lying on this circle is greater than a predefined threshold $T_r$, then it is reported and all the concerned points are removed from $I$.
The above process is repeated until the stopping criterion is satisfied.

\paragraph{Comparison with EVM \cite{chiu_05}}
\label{para:chiu-liaw}

In this method, point triplets are selectively chosen by considering the point pairs from the sampled object points, $M$.
While obtaining a triplet, all point pairs in $M$ are considered; and for each pair $(p, q) \in M^2$, the
triplet $(p, q, r)$ is chosen such that $r$ is an object point in the image $I$ and ${pq} =
{qr}$.
For each selected triplet, the entire image $I$ is searched for all the object points lying on the circle $C$ represented by it, and the {\em existing rate} of $C$ is determined.
The existing rate of $C$ is the ratio of the number of object points lying on $C$ to the circumference of $C$.
Based on the existing rate values, each object point in $I$ votes for only that circle which has the highest existing rate among all the circles passing through it.
Finally, all the circles that have existing rate values higher than the given threshold value $T_e$ are reported.

{\def\baselinestretch{1.1}
\begin{figure}[!t]\center
\begin{tabular}{@{}c@{\,}|@{\,}c@{}|@{\,}c@{}}\hline
\includegraphics[width=0.3\textwidth]{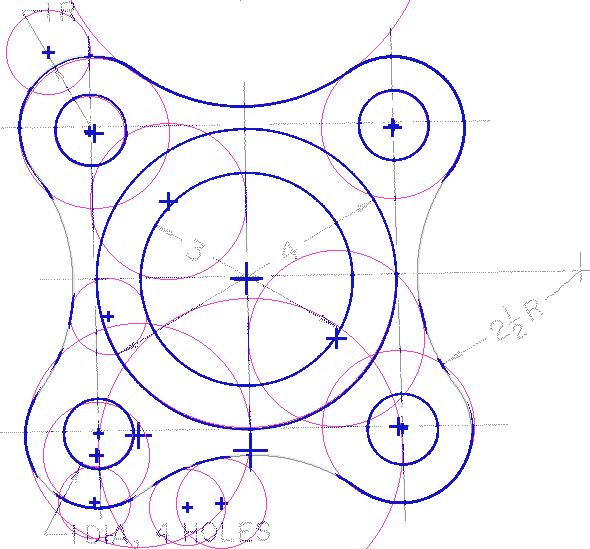}&
\includegraphics[width=0.3\textwidth]{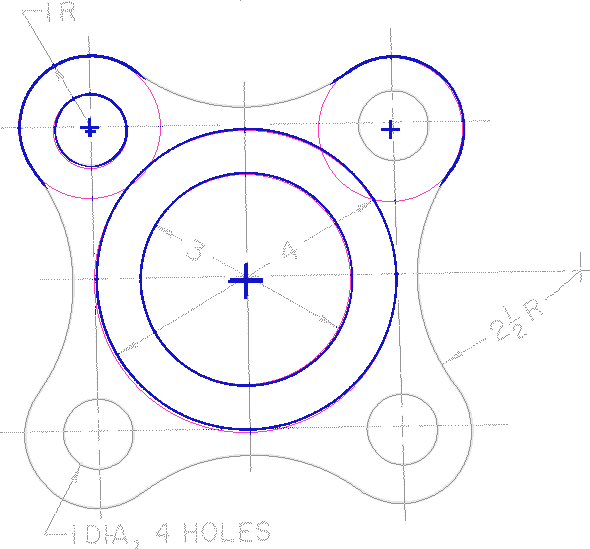}&
\includegraphics[width=0.3\textwidth]{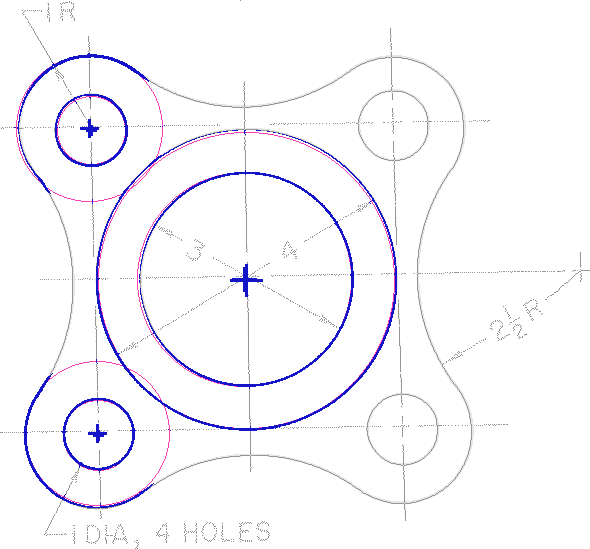}\\\hline
\,(a)&\,(b)&\,(c)\\\hline
\includegraphics[width=0.3\textwidth]{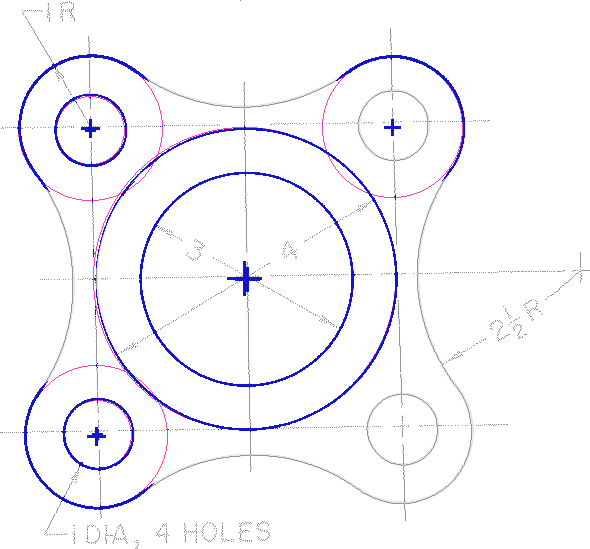}&
\includegraphics[width=0.3\textwidth]{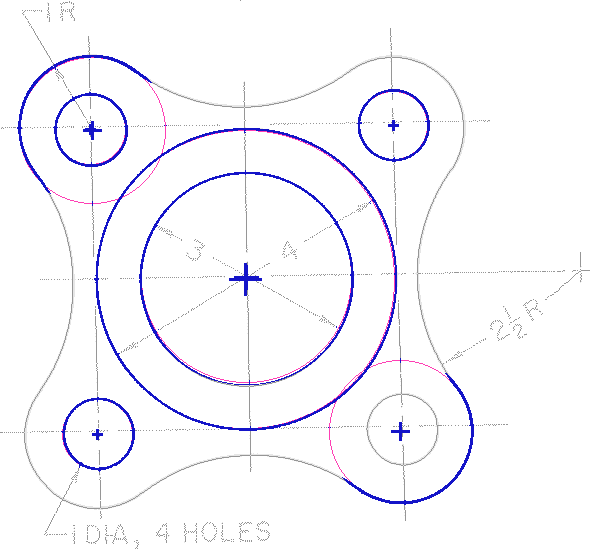}&
\includegraphics[width=0.3\textwidth]{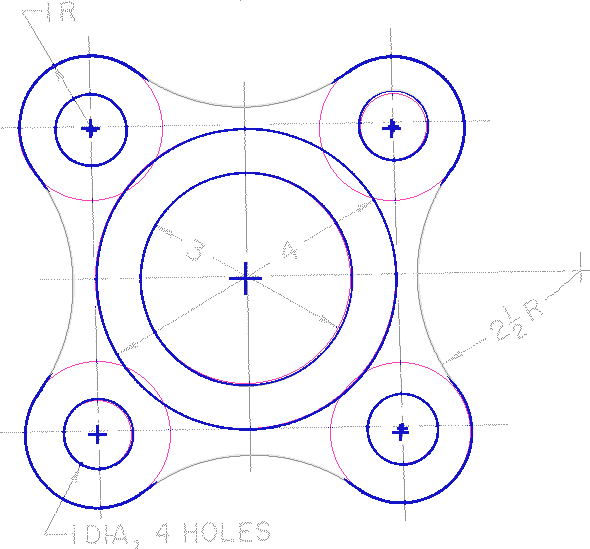}\\\hline
\,(d)&\,(e)&\,(f)\\\hline
\end{tabular}
\caption{Results of different (randomized) runs of RHT for (a)~$T_r=0.23$ and (b--f)~$T_r=0.46$ on the image {\tt g07-tr7.tif}. 
All the different runs for $T_r=0.46$ produce different outputs due to randomization. 
See Table~\ref{tab:g07-tr7-rht} for statistical details.}
\label{fig:g07-tr7-rht}
\end{figure}

{\def\baselinestretch{1.1}
\begin{figure}[!t]\center\footnotesize
\begin{tabular}{@{}c@{\,}|@{\,}c@{}|@{\,}c@{}}\hline
\includegraphics[width=0.3\textwidth]{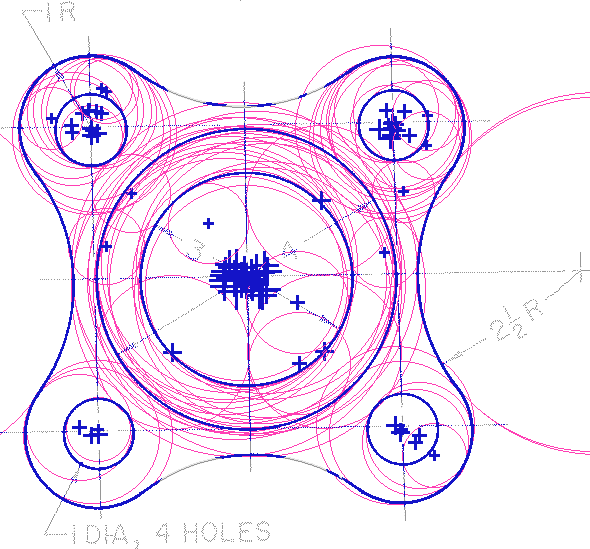}&
\includegraphics[width=0.3\textwidth]{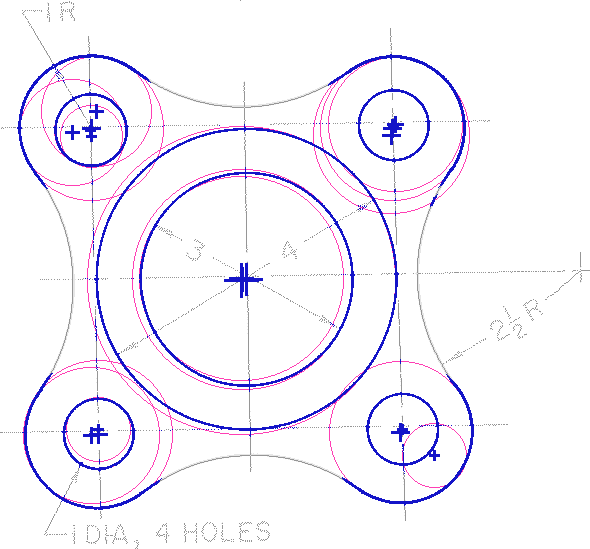}&
\includegraphics[width=0.3\textwidth]{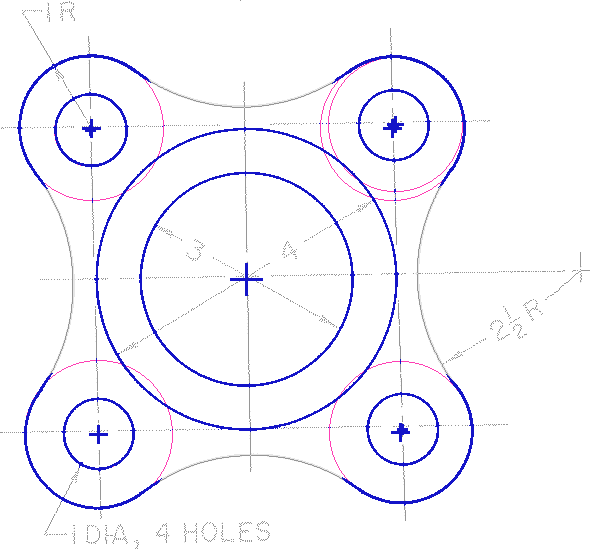}\\\hline
\,(a)\,$T_e = 0.20$&\,(b)\,$T_e = 0.25$&\,(c)\,$T_e = 0.30$\\\hline
\includegraphics[width=0.3\textwidth]{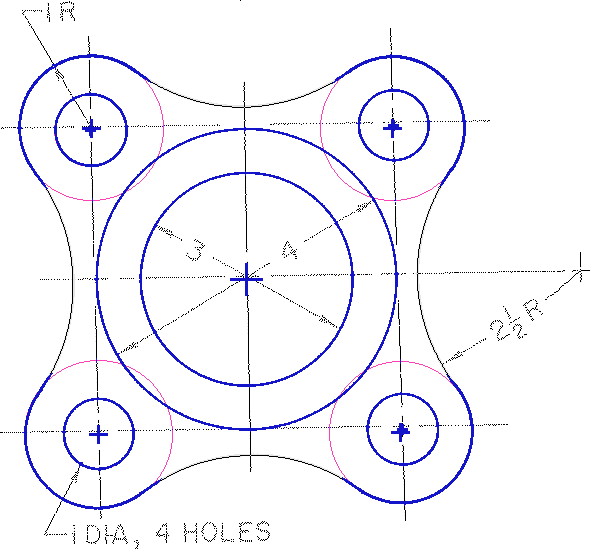}&
\includegraphics[width=0.3\textwidth]{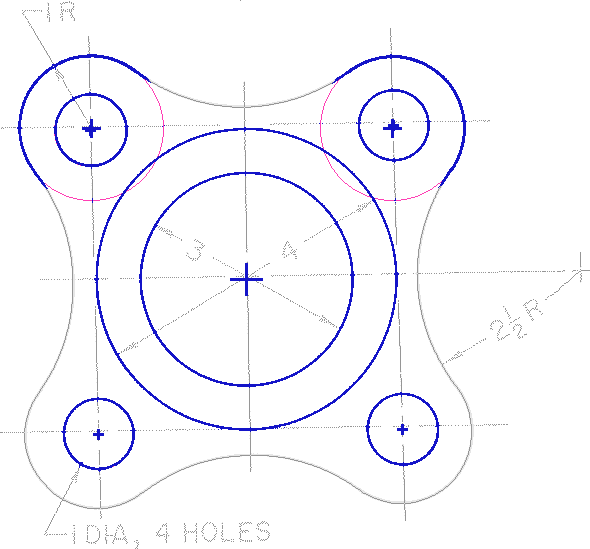}&
\includegraphics[width=0.3\textwidth]{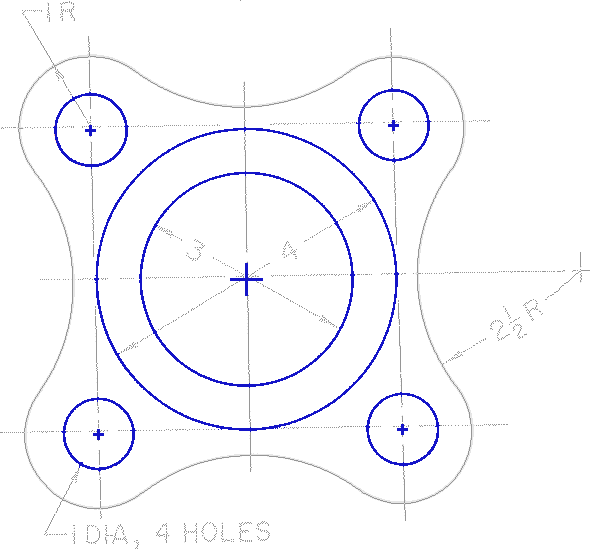}\\\hline
\,(d)\,$T_e = 0.35$&\,(e)\,$T_e = 0.45$&\,(f)\,$T_e = 0.50$\\\hline
\end{tabular}
\caption{Results of running the algorithm EVM for different values of $T_e$ on the image {\tt g07-tr7.tif}.
Notice that for low values of $T_e$, many extraneous arcs are reported along with the correct arcs; and for higher values of $T_e$, the output image cannot detect all the circular arcs accurately.
See Table~\ref{tab:g07-tr7-evm} for statistical details.}
\label{fig:g07-tr7-evm}
\end{figure}}

{\def\baselinestretch{1.1}
\begin{figure}[!t]\center\footnotesize
\begin{tabular}{@{}c@{\,}|@{\,}c@{}|@{\,}c@{}}\hline
\includegraphics[width=0.3\textwidth]{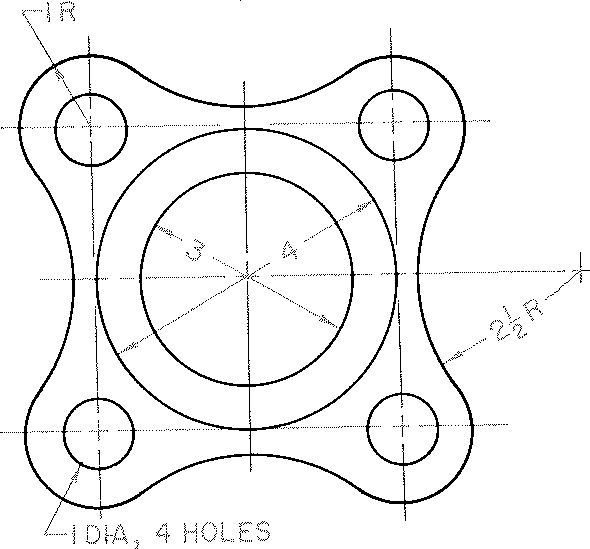}&
\includegraphics[width=0.3\textwidth]{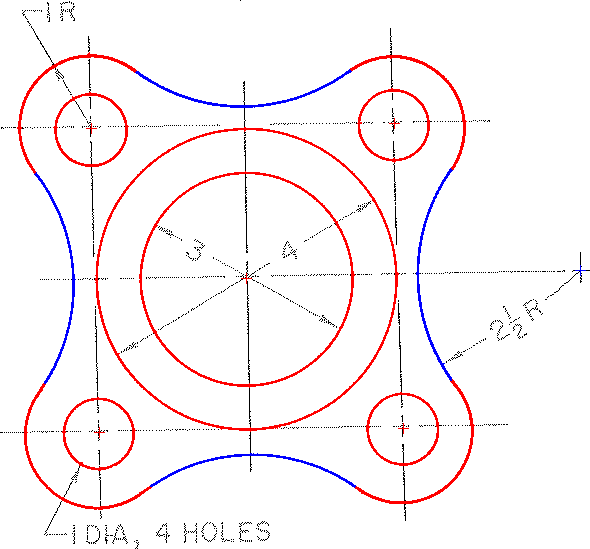}&
\includegraphics[width=0.3\textwidth]{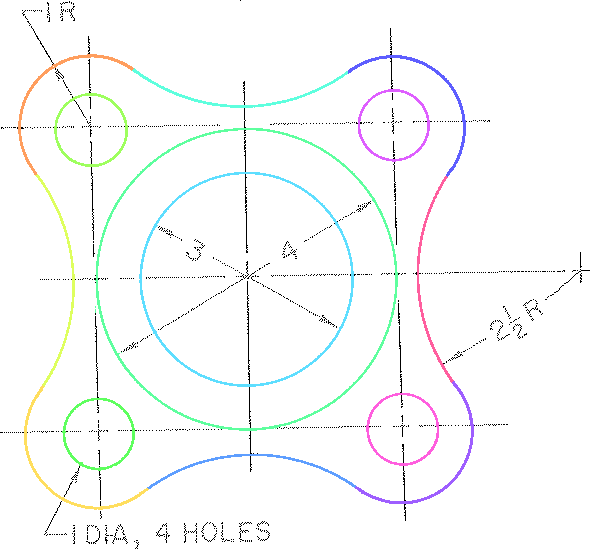}\\\hline
\,(a)&\,(b)&\,(c)\\\hline
\end{tabular}
\caption{Result by our algorithm on (a)\,{\tt g07-tr7.tif} from GREC2007 dataset.
Unlike EVM and RHT, our algorithm does not require any parameter tuning, and (b)\,its result is close to
(c)\,the bench-mark result.}
\label{fig:g07-tr7-sb}
\end{figure}}

{\def\baselinestretch{1.1}
\begin{table}[!t]\scriptsize
\caption{Results of running EVM for different values of $T_e$ on the image {\tt g07-tr7.tif}. 
The best results are obtained for $T_e = 0.32$--$0.43$ when ten circles (circular arcs) are detected
out of the fourteen circles/circular arcs present in the input image (Figure~\ref{fig:g07-tr7-sb}(b)).}
\centering
\begin{tabular}{c|c|c|r|r|r|r|r|r|r}\hline
\multirow{3}{*}{$T_e$}&\#circles&\#correct circles&& \multicolumn{4} {c|}{ Error} &&Time\\\cline{5-8}
&\& arcs&\& arcs&$N_{p}$&\multirow{2}{*}{$N_{fa}$ }&\multirow{2}{*}{$N_{fr}$ }&
\multirow{2}{*}{E1 (\%)}&\multirow{2}{*}{E2 (\%)}&AD&(sec.)\\
&detected& detected $/$ $14$ & &&&&&&  \\
\hline
 $ 0.20$ & $77$ &$12$& $ 12368$ & $1541$ & $ 608$ & $13.476$ & $5.317$ & $0.855$ & $54.272$\\\hline
$0.25$ & $21$ &$10$& $  9787$ & $ 399$ & $2047$ & $3.489$ & $17.901$ & $0.835$ &$54.272$\\\hline
$0.30$ &$11$ &$10$ & $  9478$ & $ 138$ & $2095$ & $1.207$ & $18.321$ & $0.849$ & $54.272$\\\hline
 $ 0.32$--$0.43$ & $10$ &$10$& $  9467$ & $ 127$ & $2095$ & $1.111$ & $18.321$ & $0.850$ & $54.272$\\\hline
$ 0.45$ & $8$ &$8$& $  7981$ & $ 106$ & $3560$ & $0.927$ & $31.132$ & $0.753$ & $54.272$\\\hline
$ 0.50$--$0.85$ & $6$ &$6$& $  6457$ & $  79$ & $5057$ & $0.691$ & $44.224$ & $0.653$ & $54.272$\\\hline
 $ 0.90$ & $1$ &$1$& $   588$ & $  10$ & $10857$ & $0.088$ & $94.945$ & $0.267$ & $54.272$\\\hline
 $ 0.95$ & $0$ &$0$& $     0$ & $   0$ & $11435$ & $0.000$ & $100.000$ & $0.228$ & $54.272$\\\hline
\end{tabular}
\label{tab:g07-tr7-evm}
\end{table}}

{\def\baselinestretch{1.1}
\begin{table}[!t]\scriptsize
\caption{Results of different (randomized) runs of RHT for $T_r=0.46$ on the image {\tt g07-tr7.tif}.}
\centering
\begin{tabular}{c|c|c|c|r|r|r|r|r|r|r}\hline
\multirow{3}{*}{$T_r$}&&\#circles&\#correct circles&& \multicolumn{4} {c|}{ Error} &&Time\\\cline{6-9}
&Run \#&\& arcs&\& arcs&$N_{p}$&\multirow{2}{*}{$N_{fa}$ }&\multirow{2}{*}{$N_{fr}$ }&
\multirow{2}{*}{E1 (\%)}&\multirow{2}{*}{E2 (\%)}&AD&(sec.)\\
&&detected& detected $/$ $14$ & &&&&&&  \\
\hline
 \multirow{10}{*}{$0.46$}&$1$ & $7$ &$7$ & $  7021$ & $  97$ & $4511$ & $0.848$ & $39.449$ & $0.689$ & $ 7.749$\\\cline{2-11}
& $2$ & $7$ &$7$ & $  7089$ & $  93$ & $4439$ & $0.813$ & $38.819$ & $0.694$ & $9.709$\\\cline{2-11}
& $3$ & $8$ &$8$ & $  7875$ & $  98$ & $3658$ & $0.857$ & $31.989$ & $0.746$ & $6.471$\\\cline{2-11}
& $4$ & $7$ &$7$ & $  7251$ & $  95$ & $4279$ & $0.831$ & $37.420$ & $0.705$ & $1.273$\\\cline{2-11}
& $5$ & $5$ &$5$ & $  5995$ & $  91$ & $5531$ & $0.796$ & $48.369$ & $0.621$ & $1.637$\\\cline{2-11}
& $6$ & $10$ &$10$ & $  9084$ & $ 121$ & $2472$ & $1.058$ & $21.618$ & $0.825$ & $1.638$\\\cline{2-11}
& $7$ & $7$ &$7$ & $  7267$ & $ 100$ & $4268$ & $0.874$ & $37.324$ & $0.705$ & $6.223$\\\cline{2-11}
& $8$ & $6$ &$6$ & $  5763$ & $  84$ & $5756$ & $0.735$ & $50.337$ & $0.606$ & $6.992$\\\cline{2-11}
& $9$ & $6$ &$6$ & $  5876$ & $  85$ & $5644$ & $0.743$ & $49.357$ & $0.613$ & $7.463$\\\cline{2-11}
& $10$ & $10$ &$10$ & $  9072$ & $ 120$ & $2483$ & $1.049$ & $21.714$ & $0.824$ & $11.123$\\\hline
\end{tabular}
\label{tab:g07-tr7-rht}
\end{table}}

Thus, we observe that the above methods require a predefined threshold value ($T_r$ for RHT and $T_e$ for EVM)
for detecting the true circles in the image.
If only complete circles are present in the image, then high threshold values suffice in both the cases. 
However, for the images, which also contain (partial) circular arcs, the thresholds have to be lowered
sufficiently to get proper result.
So, for best results, the optimum threshold value for each image should be set individually depending on the
nature of the circular arcs present in the image.  
But obtaining an optimal threshold value is quite difficult, and hence this issue becomes a major drawback of both EVM and RHT methods.

Our method, on the contrary, suffers from no threshold-related weakness, as it requires no threshold value and outputs an optimal or near-optimal result.
This is evident from Figs.~\ref{fig:g07-tr7-rht}, \ref{fig:g07-tr7-evm}, and \ref{fig:g07-tr7-sb}, showing the output results for RHT, EVM, and our algorithm (CSA) respectively, for the image {\tt g07-tr7.tif} that actually contains $6$ circles and $8$ semicircular arcs, as shown in its ground-truth image (Figure~\ref{fig:g07-tr7-sb}(b)).
Hence, for each individual image, depending on the extent of noise and the distortion of circular arcs and circles, the threshold value has to be properly set for RHT and EVM.
From Figure~\ref{fig:g07-tr7-evm} and Table~\ref{tab:g07-tr7-evm}, we can see that the best possible result for the image {\tt g07-tr7.tif} is obtained when $T_e$ lies in the range $[0.32, 0.43]$; but even in this range, the result is
not perfect, as four circular arcs still remain undetected.
In case of RHT, the situation is even worse, as each run of the algorithm under the same condition does not always produce the same result (due to randomization) when the value of $T_r$ is low.
A summary of the statistical details for ten different (randomized) runs of the algorithm with $T_r = 0.46$ is given in Table~\ref{tab:g07-tr7-rht},
and a sample set of output images for five of these runs is given in Figure~\ref{fig:g07-tr7-rht}.

\begin{figure}[!t]\center\scriptsize
\begin{minipage}{.5\textwidth}\hspace*{-0mm}
        \includegraphics[width=\textwidth]{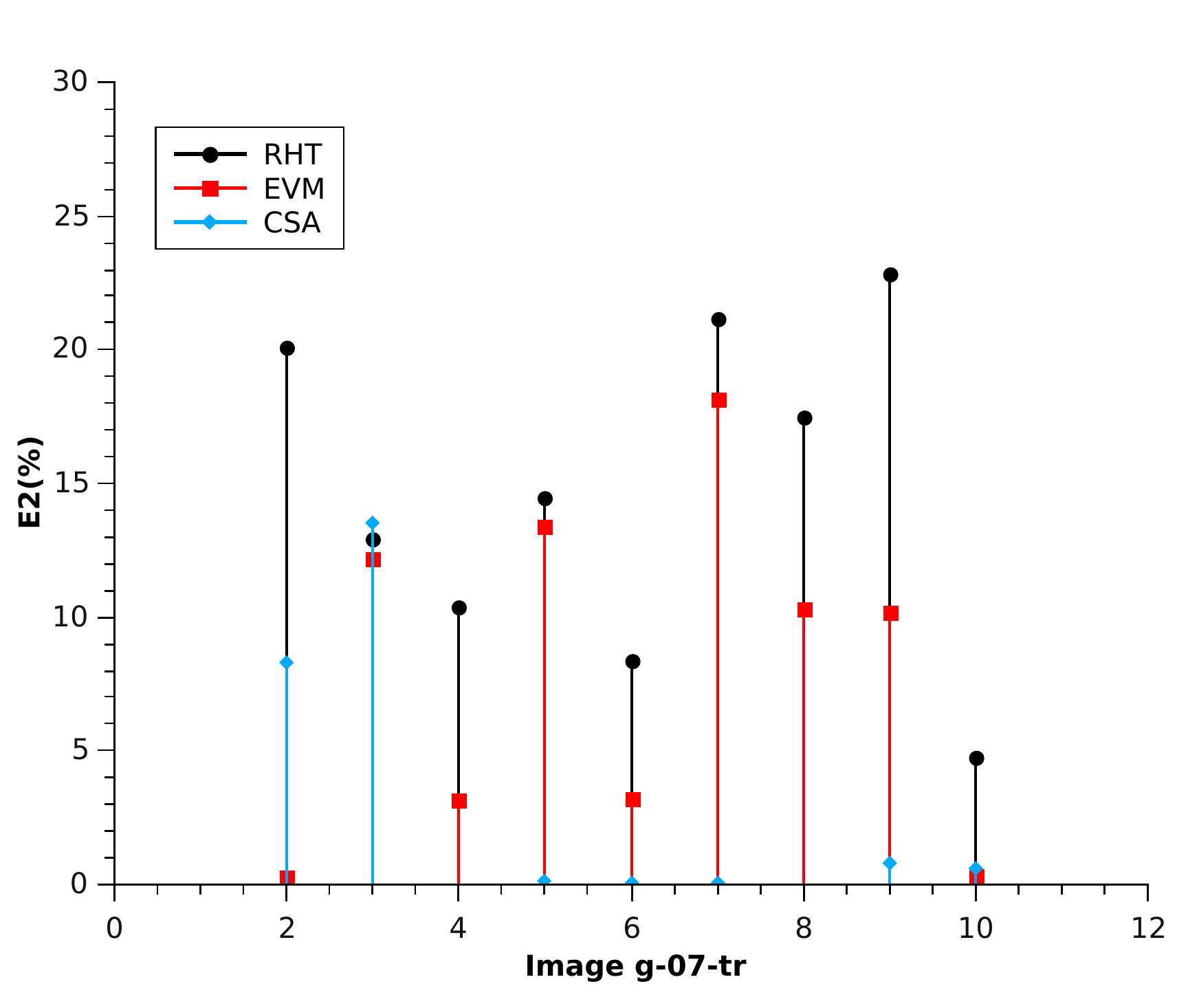}
\end{minipage}
\begin{minipage}{.47\textwidth}\scriptsize
    \begin{tabular}{c|rrr|rrr}\toprule
        Image    & \multicolumn{6}{c}{Error}\\\cline{2-7}
        {\tt g-07}&\multicolumn{3}{c|}{E1\,(\%)}&\multicolumn{3}{c}{E2\,(\%)}\\\cline{2-7}
        {\tt -tr}& EVM     & RHT     & CSA       & EVM      & RHT    & CSA   \\\toprule
        {\tt 2}  & $3.171$ & $ 3.453$& $0.921$  &  $0.294$ & $20.115$&  $8.299$\\\hline
        {\tt 3}  & $3.542$ & $ 7.969$& $4.232$  & $12.202$ & $12.937$& $13.507$\\\hline
        {\tt 4}  & $0.721$ & $ 0.936$& $0.954$  &  $3.158$ & $10.372$&  $0.012$\\\hline
        {\tt 5}  & $1.179$ & $ 1.271$& $0.809$  & $13.385$ & $14.494$&  $0.139$\\\hline
        {\tt 6}  & $7.579$ & $ 4.775$& $9.360$  &  $3.221$ &  $8.375$&  $0.076$\\\hline
        {\tt 7}  & $0.848$ & $ 0.892$& $0.603$  & $18.120$ & $21.189$&  $0.874$\\\hline
        {\tt 8}  & $3.318$ & $ 4.798$& $6.827$  & $10.294$ & $17.446$&  $0.017$\\\hline
        {\tt 9}  & $2.461$ & $12.000$& $1.861$  & $10.182$ & $22.804$&  $0.794$\\\hline
        {\tt 10} & $2.236$ & $ 2.064$& $0.534$  &  $0.303$ &  $4.743$&  $0.604$\\\hline
        {\tt 11} & $0.240$ & $ 0.472$& $0.112$  & $21.556$ & $24.918$&  $1.105$\\\toprule
    \end{tabular}
\end{minipage}
\caption{{Plots of E2 for RHT, EVM, and CSA (proposed algorithm) for images from GREC2007 dataset.}}
\label{fig:comp-E2}
\end{figure}

\begin{figure}[!t]\center\scriptsize
\begin{minipage}{.6\textwidth}\hspace*{-5mm}
        \includegraphics[width=\textwidth]{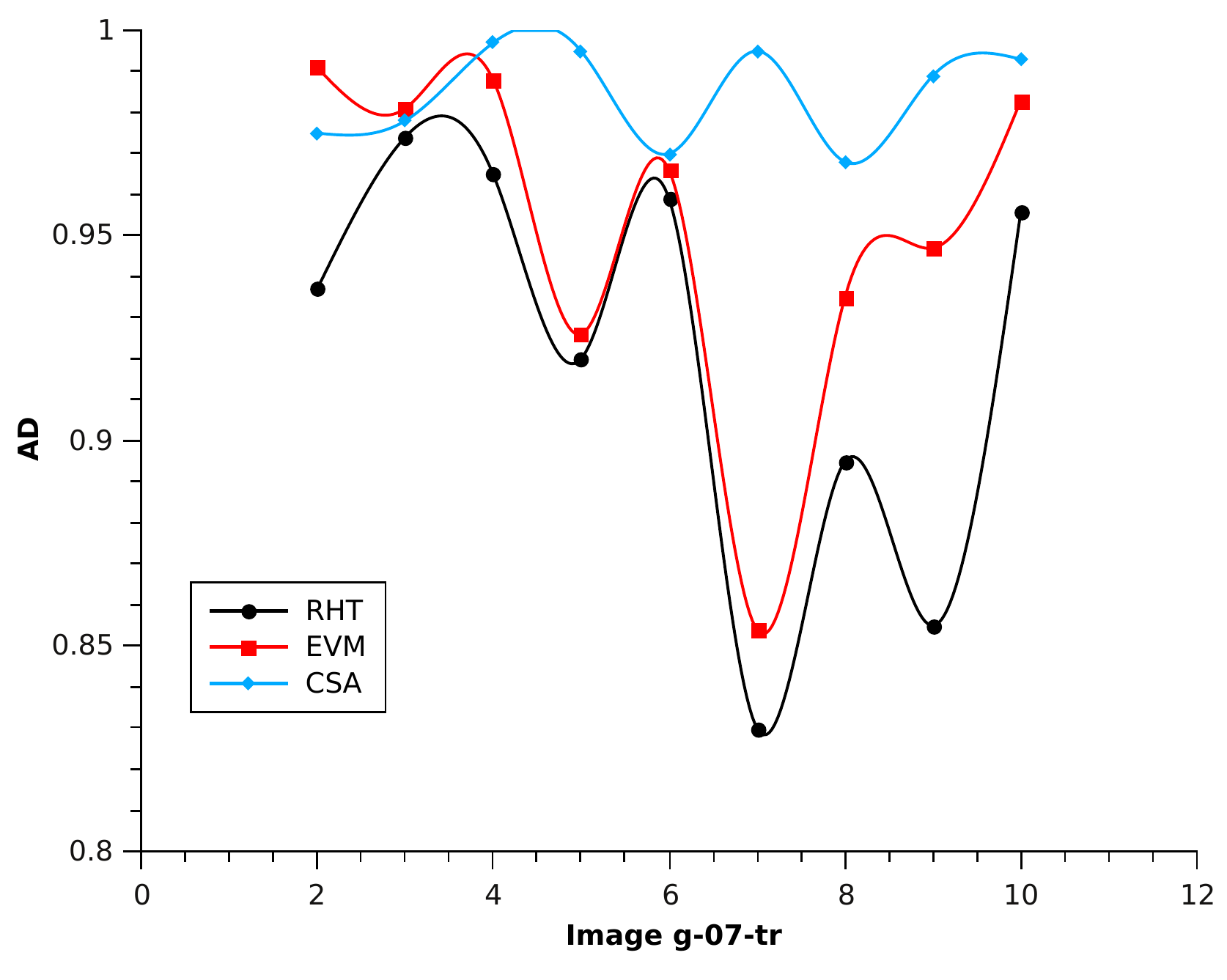}
\end{minipage}
\begin{minipage}{.35\textwidth}\scriptsize
    \begin{tabular}{crrr}\toprule
        Image   &\multicolumn{3}{c}{AD}\\\cline{2-4}
        {\tt go7-tr}&EVM   &   RHT  & CSA \\\toprule
        {\tt 2}  & $0.991$ & $0.937$& $0.975$\\\hline
        {\tt 3}  & $0.981$ & $0.974$& $0.978$\\\hline
        {\tt 4}  & $0.988$ & $0.965$& $0.997$\\\hline
        {\tt 5}  & $0.926$ & $0.920$& $0.995$\\\hline
        {\tt 6}  & $0.966$ & $0.959$& $0.970$\\\hline
        {\tt 7}  & $0.854$ & $0.830$& $0.995$\\\hline
        {\tt 8}  & $0.935$ & $0.895$& $0.968$\\\hline
        {\tt 9}  & $0.947$ & $0.855$& $0.989$\\\hline
        {\tt 10} & $0.983$ & $0.956$& $0.993$\\\hline
        {\tt 11} & $0.858$ & $0.835$& $0.992$\\\toprule
    \end{tabular}
\end{minipage}
\caption{{Plots of AD for RHT, EVM, and CSA (proposed algorithm) for images from GREC2007 dataset.}}
\label{fig:comp-AD}
\end{figure}

\begin{figure}[!t]\center\scriptsize
\begin{minipage}{.6\textwidth}\hspace*{-5mm}
        \includegraphics[width=\textwidth]{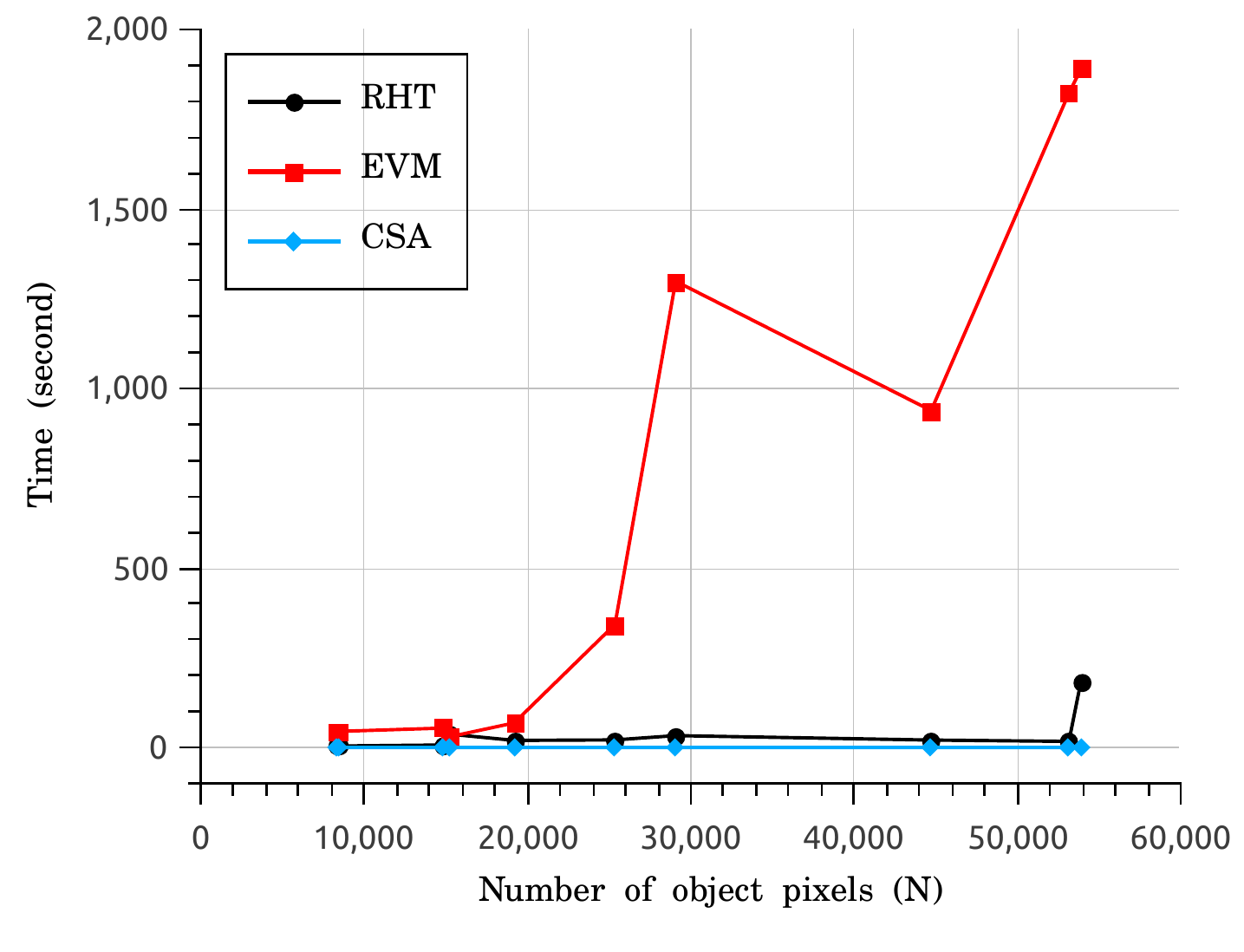}
\end{minipage}
\begin{minipage}{.35\textwidth}\scriptsize
    \begin{tabular}{rcrrr}\toprule
        $N$\ \ &Image   &\multicolumn{3}{c}{CPU time (sec.)}\\\cline{3-5}
           &    {\tt go7-tr}&     EVM   &   RHT   & CSA \\\toprule
         8,321 &     6      &    46.892 &   3.764 & 0.120\\\hline
         8,478 &     5      &    45.104 &   4.389 & 0.125\\\hline
        14,817 &     7      &    54.272 &   7.228 & 0.145\\\hline
        15,189 &    10      &    29.730 &  37.527 & 0.152\\\hline
        19,182 &    11      &    68.625 &  20.021 & 0.231\\\hline
        25,358 &     8      &   343.077 &  21.350 & 0.166\\\hline
        29,072 &     2      & 1,298.658 &  33.309 & 0.121\\\hline
        44,717 &     9      &   938.428 &  20.903 & 0.205\\\hline
        53,156 &     4      & 1,822.414 &  17.229 & 0.217\\\hline
        53,899 &     3      & 1,890.005 & 185.968 & 0.180\\\toprule
    \end{tabular}
\end{minipage}
\caption{CPU time comparison of RHT, EVM, and CSA (proposed algorithm) for images from GREC2007 dataset.}
\label{fig:comp-time}
\end{figure}

For a better understanding, three plots are given in Figure~\ref{fig:comp-E2}, Figure~\ref{fig:comp-AD}, and Figure~\ref{fig:comp-time}, comparing the values of E2, AD, and CPU time for the three algorithms.
In the plots of Figure~\ref{fig:comp-E2} and Figure~\ref{fig:comp-AD}, the $x$-axis represents the image number in GREC2007 dataset.
Against the image number, E2 is plotted using sticks of black, red, and blue (Figure~\ref{fig:comp-E2}), corresponding to RHT, EVM, and CSA, respectively.
In Figure~\ref{fig:comp-AD}, the AD value is plotted for the three algorithms with the same color code as in Figure~\ref{fig:comp-E2}.
A high value of E2 signifies that not all the circles and circular arcs have been detected, or even if they are detected, their centers and radii have not been estimated accurately.
From these plots, it may be noticed that RHT and EVM algorithms are not efficient in detecting circular arcs.
For our algorithm, in most of the cases, the E2 values are significantly lower (less than~$9$, except for the image {\tt g07-tr3.tif}) compared to those for RHT and EVM.
For statistical details, see the adjoining tables given with the respective plots.
In the plots of Figure~\ref{fig:comp-AD}, barring a couple of cases, the AD values corresponding to our algorithm are either at~par or better than those corresponding to RHT.
Although the EVM algorithm is occasionally better than our algorithm in terms of AD, it has a poorer performance compared to our algorithm in terms of CPU time.
This is clearly evident from the plots in Figure~\ref{fig:comp-time} where the CPU time (in seconds) for each algorithm is plotted against the number of object pixels~($N$) processed. 
The reason for such a better performance of our algorithm in terms of CPU time can be attributed to the fact that it works only on arc pixels and not on all pixels of the input image.

\section{Conclusion}
\label{sec:conclu}

We have shown how the {\em chord-and-sagitta property}, with their appropriate adaptations on the digital
plane, can be used for identifying digital circles and circular arcs in a binary image.
A restricted Hough transform is finally used to further improve the accuracy of centers and radii.
We have tested our algorithm on various datasets and the related results demonstrate the efficiency and
robustness of the technique and the strength of the proposed approach in detecting circles and arcs on the
digital plane.

Since for each circle or circular arc, we have computed the center and radius before applying the Hough
transform, the size of the Hough space is very small.
Hence, the requirement of accumulator memory and the computation time are reduced significantly.
This is indeed reflected in the runtime comparison between the proposed algorithm and other existing
algorithms.

\pb{The work discussed in this paper is focused mainly on the detection of (digitally) circular arcs after removal of (digitally) straight pieces from a given binary image. 
It has a broader perspective in the sense that when we have to decompose a given digital curve into a minimum number of line segments and circular arcs, taken in totality, then the problem is quite challenging.
Although there has been some work in recent time related to this problem, e.g., \cite{Kole12,Kole14,Nguyen11}, 
none of these, however, has claimed or proved the minimality of the output.
It may be mentioned in this context that decomposing a digital curve into a sequence of minimum number of 
straight segments can be done in linear time \cite{Feschet05}, although a similar problem of decomposing 
a 3D digital surface into a minimum number of digital plane segments is known to be $NP$-hard~\cite{Sivignon29}.}

\bibliographystyle{abbrv}
\bibliography{arc-seg}

\vfill
\newpage
\section*{\Large Appendix}
\begin{center}

{\bf Results on GREC2013}\medskip

\begin{tabular}{c@{\hspace*{10mm}}c}
\includegraphics[width=.30\textwidth]{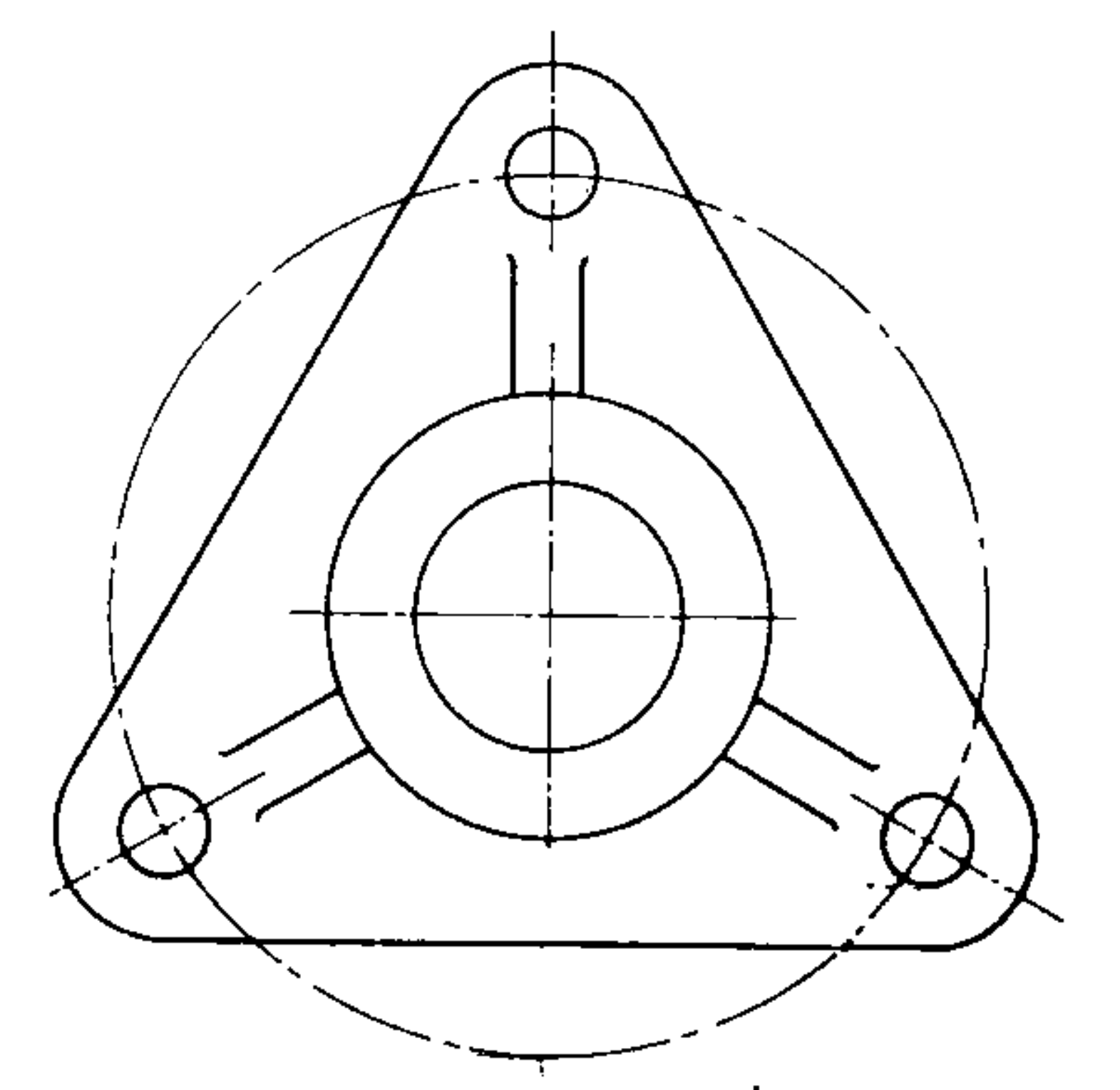}&
\includegraphics[width=.30\textwidth]{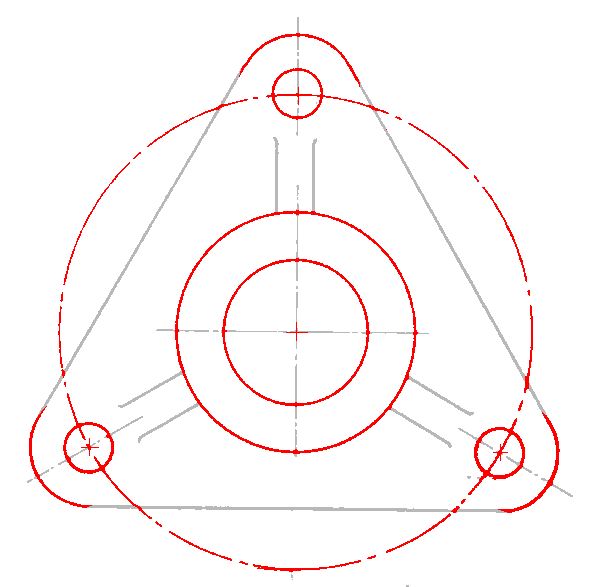}\\
\includegraphics[width=.30\textwidth]{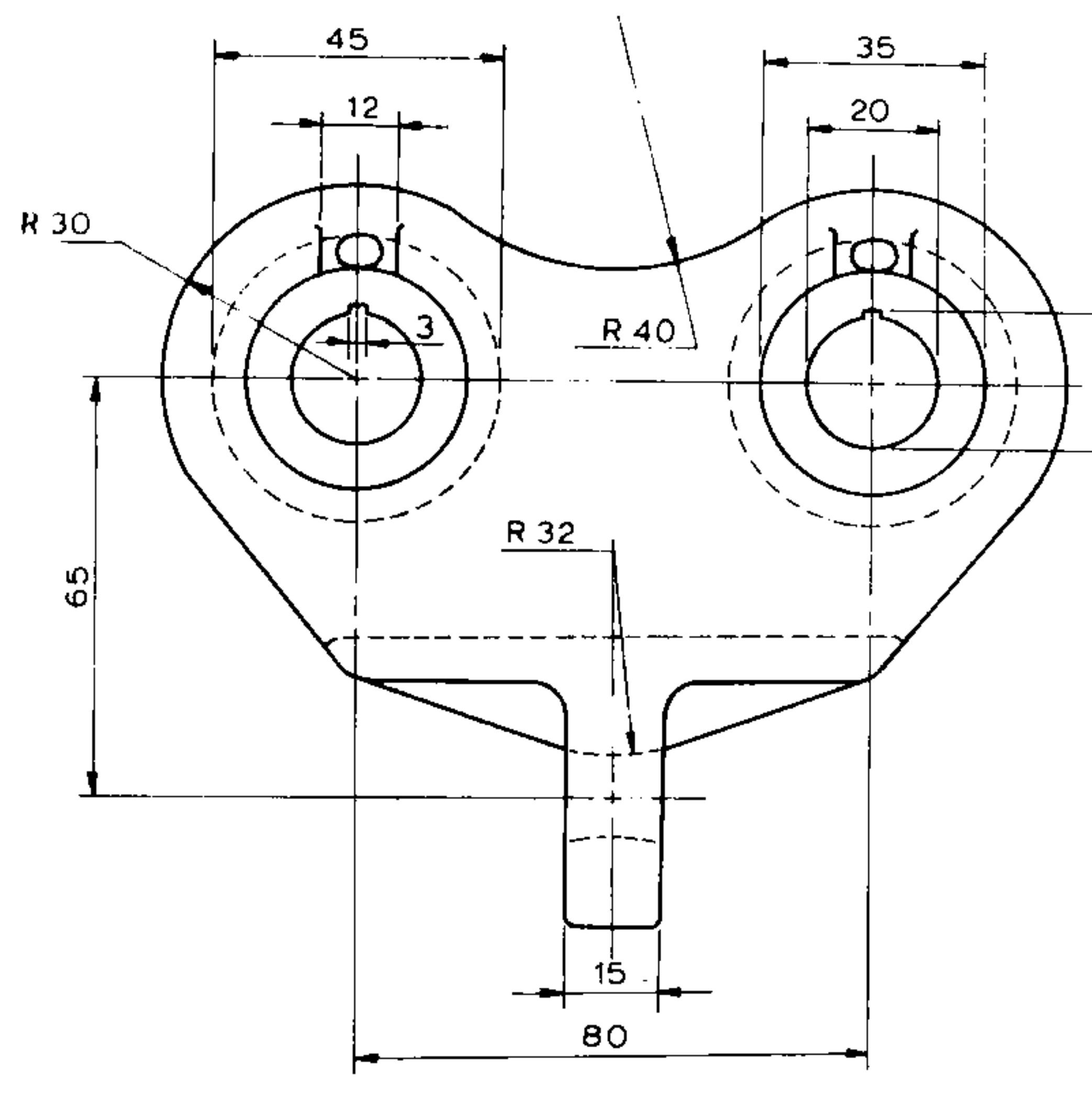}&
\includegraphics[width=.30\textwidth]{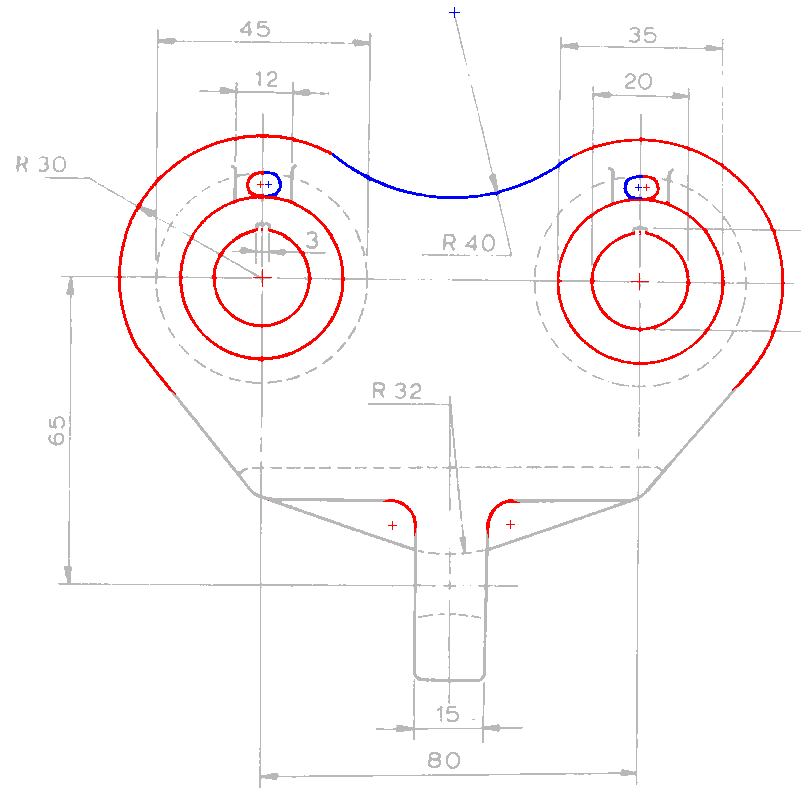}\\
\includegraphics[width=.30\textwidth]{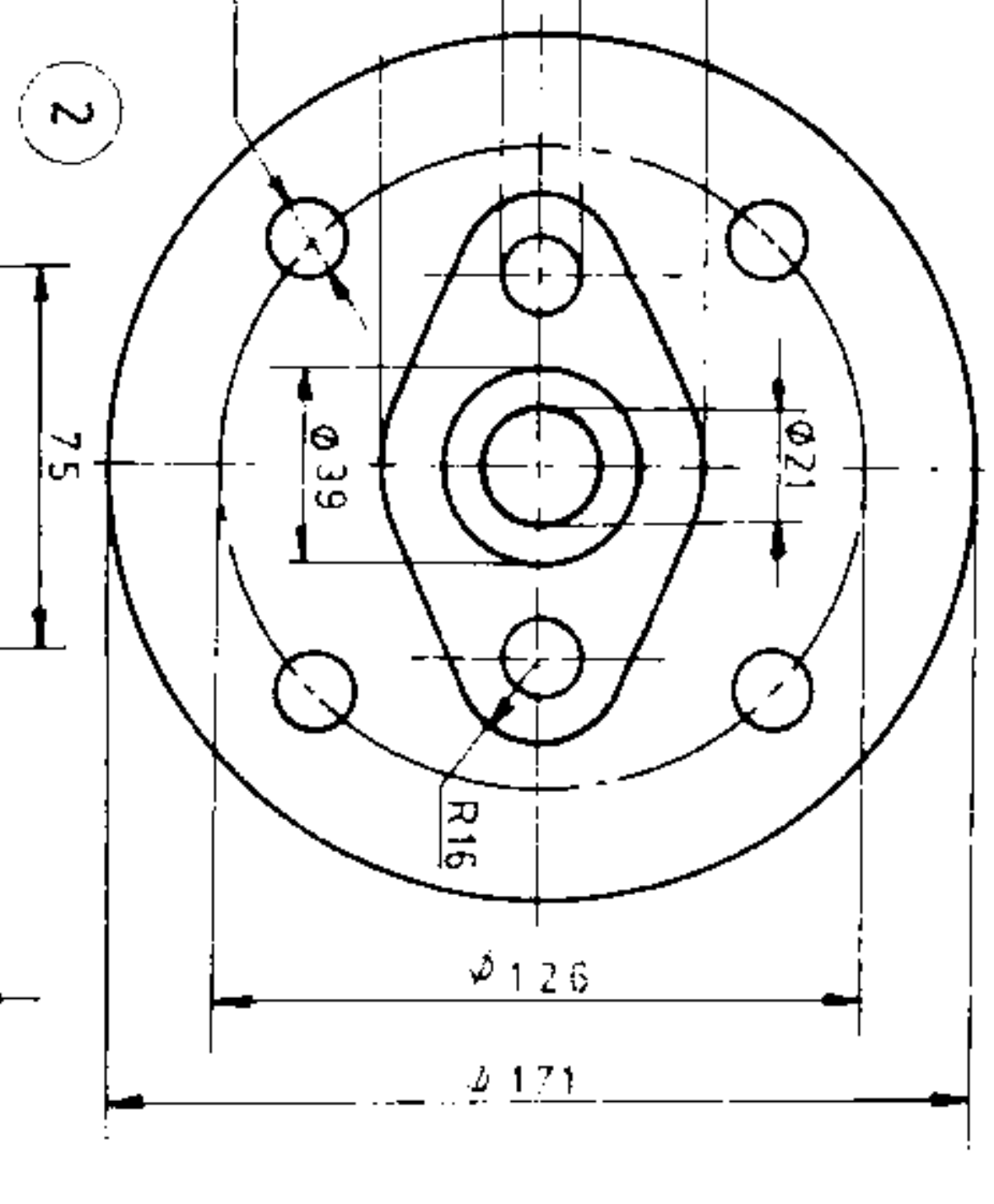}&
\includegraphics[width=.30\textwidth]{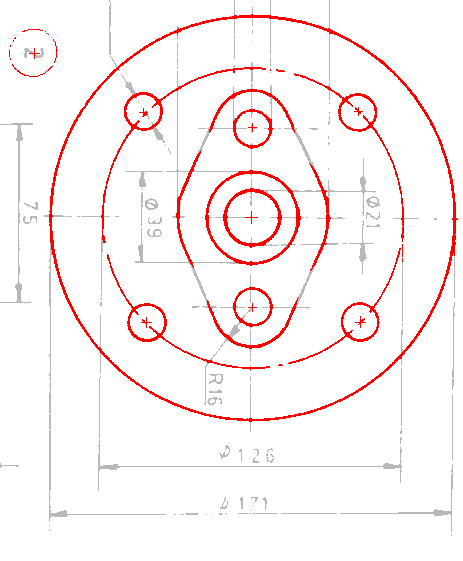}\\
\includegraphics[width=.30\textwidth]{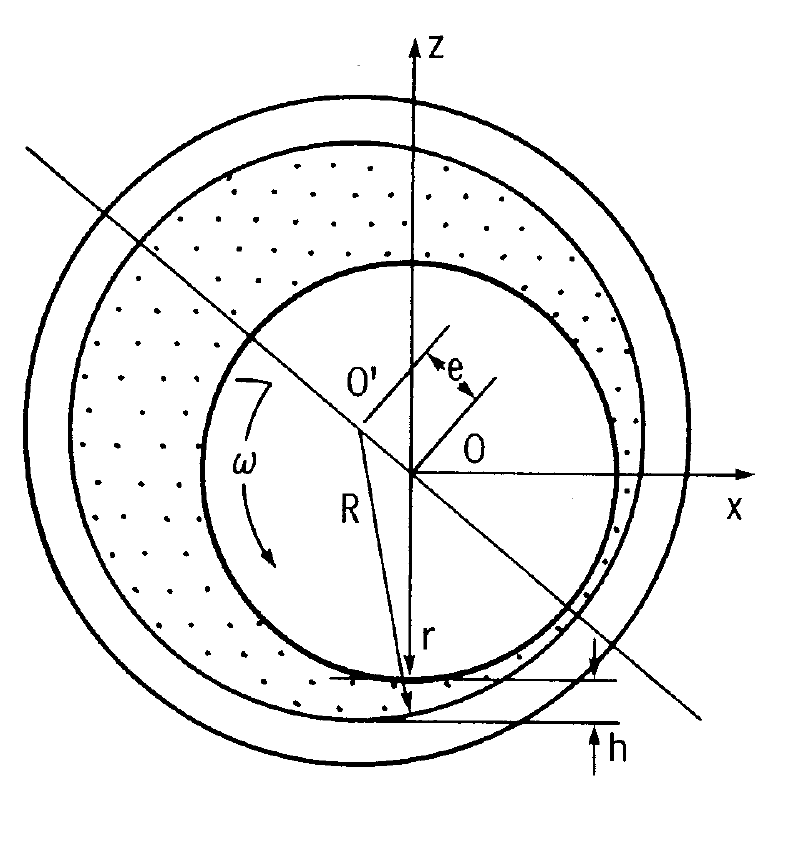}&
\includegraphics[width=.30\textwidth]{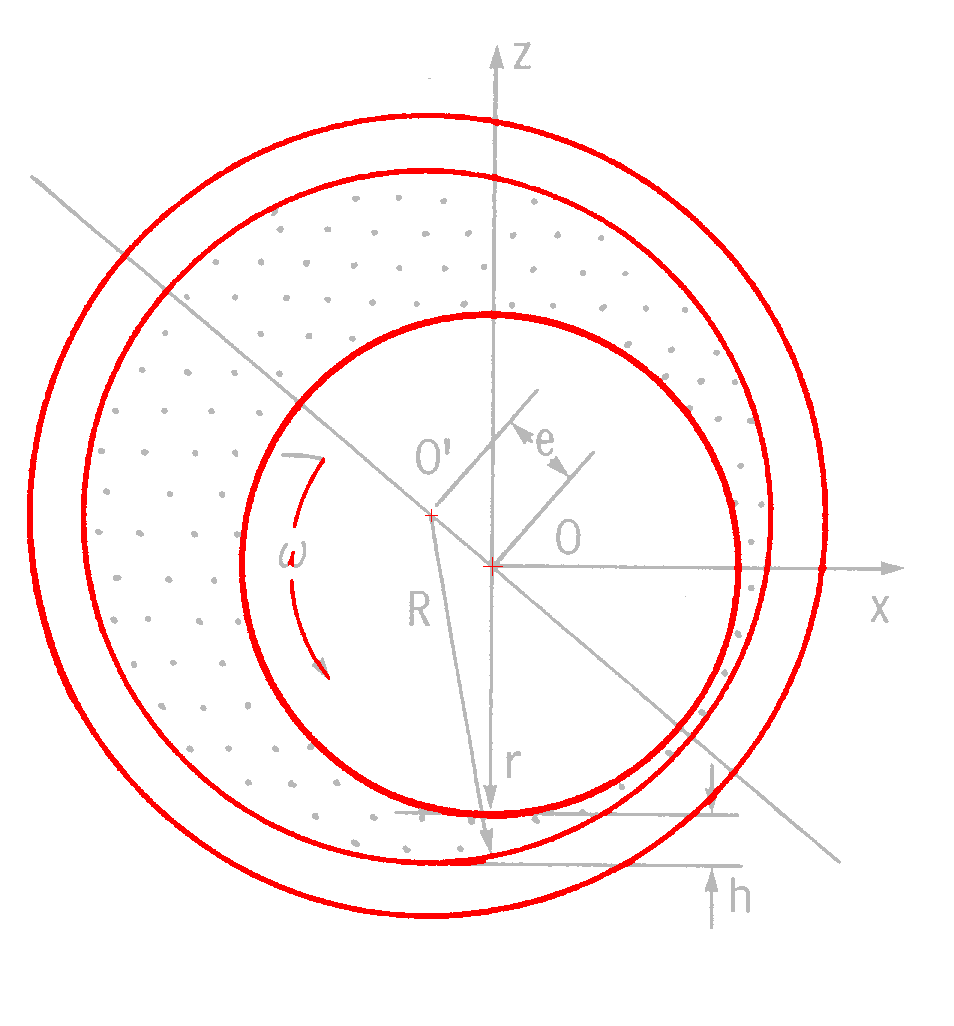}
\end{tabular}

\vfill
\newpage
{\bf Results on SMP}\medskip

\begin{tabular}{c@{\hspace*{10mm}}c}
\includegraphics[width=.40\textwidth]{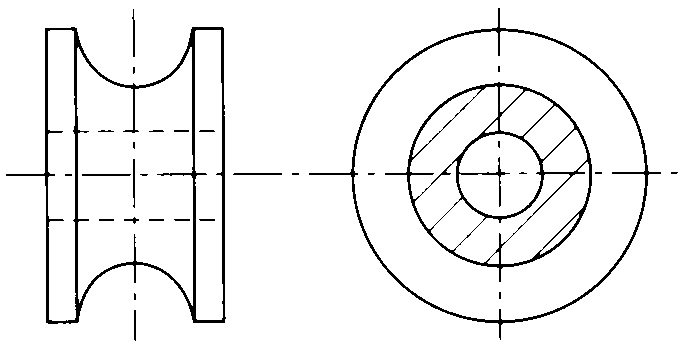}&
\includegraphics[width=.40\textwidth]{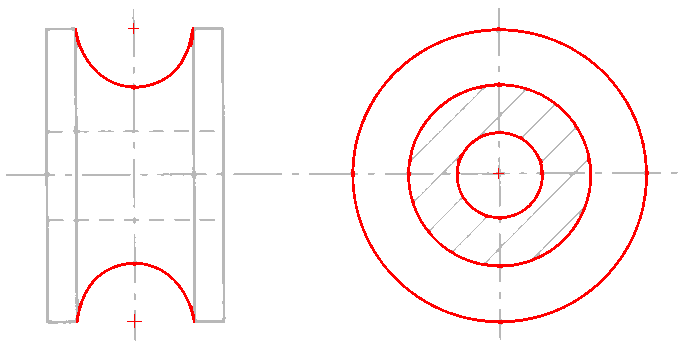}\\
\includegraphics[width=.40\textwidth]{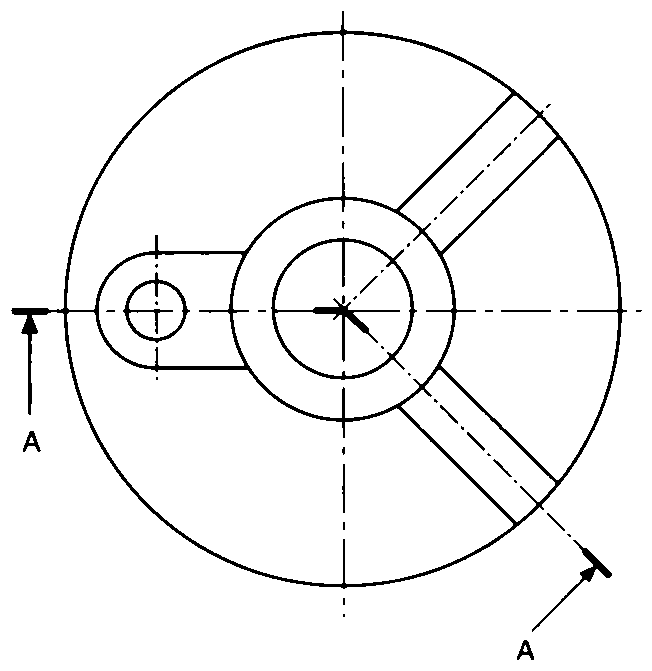}&
\includegraphics[width=.40\textwidth]{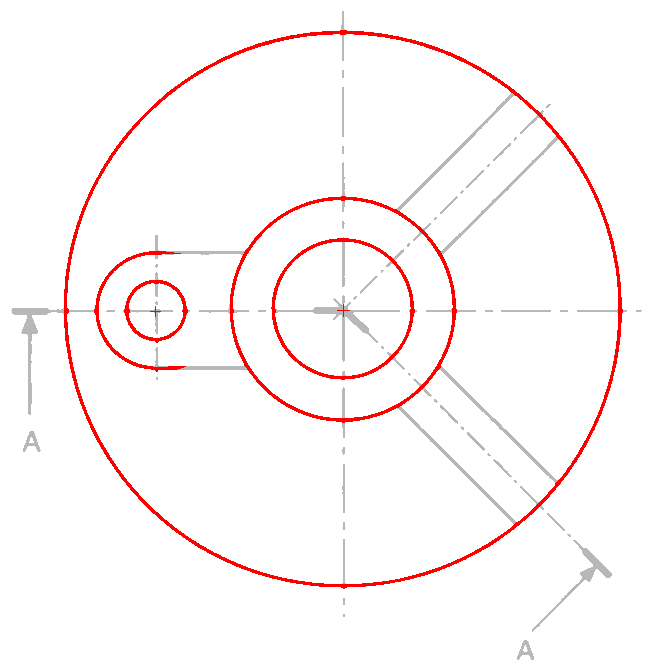}\\
\includegraphics[width=.50\textwidth]{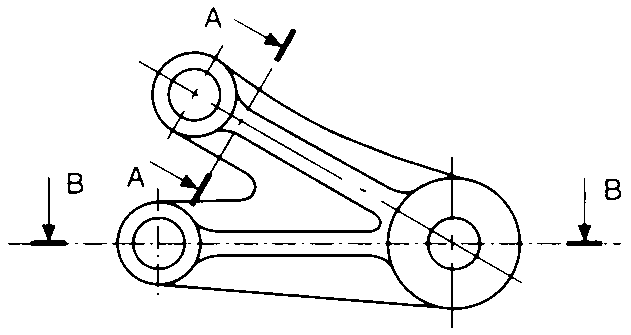}&
\includegraphics[width=.50\textwidth]{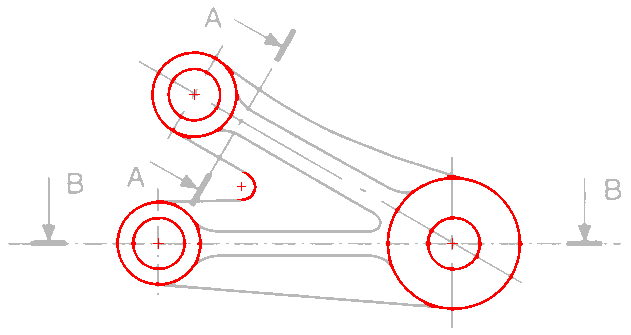}\\
\includegraphics[width=.40\textwidth]{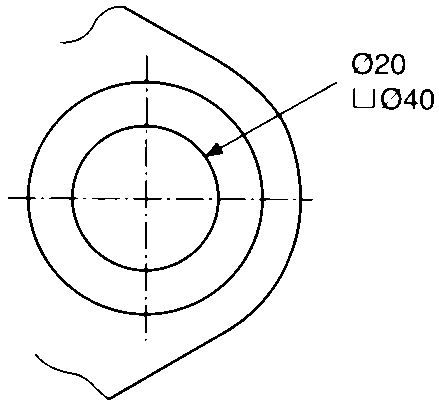}&
\includegraphics[width=.40\textwidth]{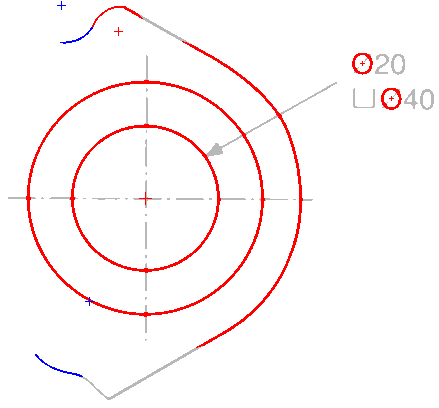}
\end{tabular}


\begin{tabular}{c@{\hspace*{10mm}}c}
\multicolumn{2}{c}{{\bf Results on SMP (continued)}\medskip}\\
\includegraphics[width=.36\textwidth]{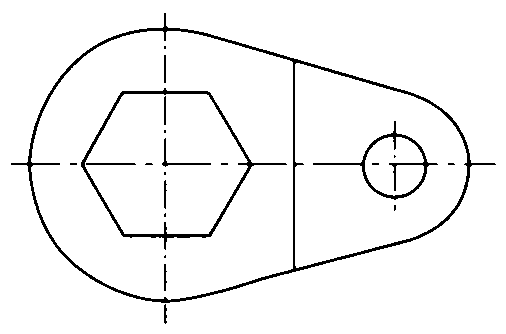}&
\includegraphics[width=.36\textwidth]{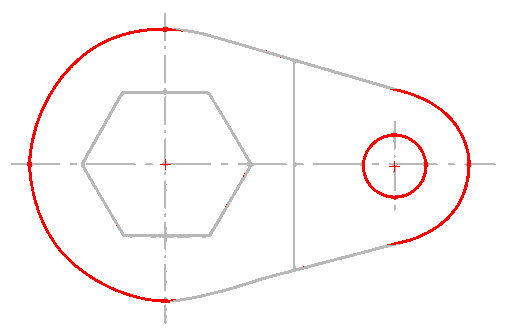}\\
\includegraphics[width=.36\textwidth]{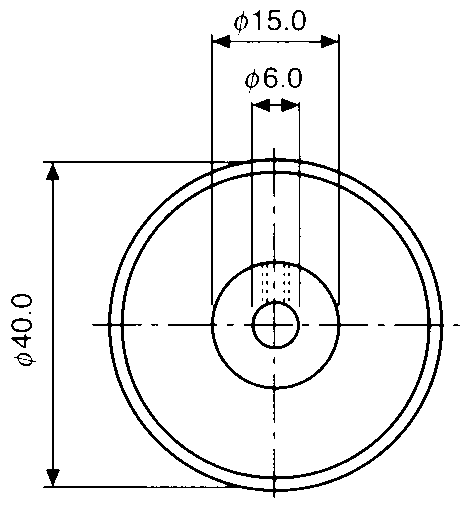}&
\includegraphics[width=.36\textwidth]{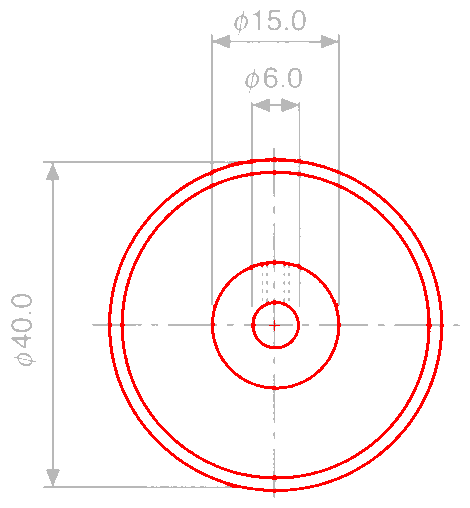}\\
\includegraphics[width=.36\textwidth]{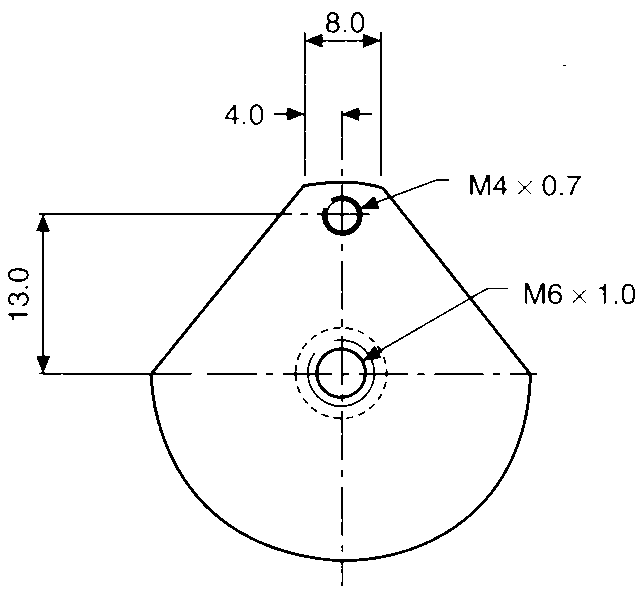}&
\includegraphics[width=.36\textwidth]{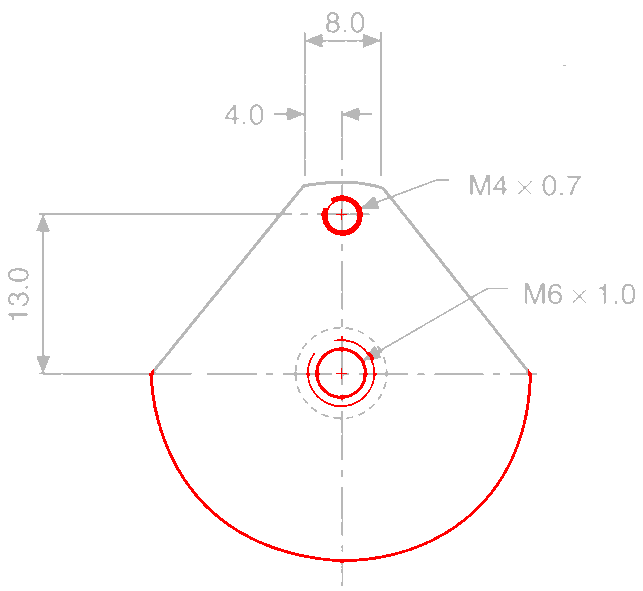}\\
\includegraphics[width=.36\textwidth]{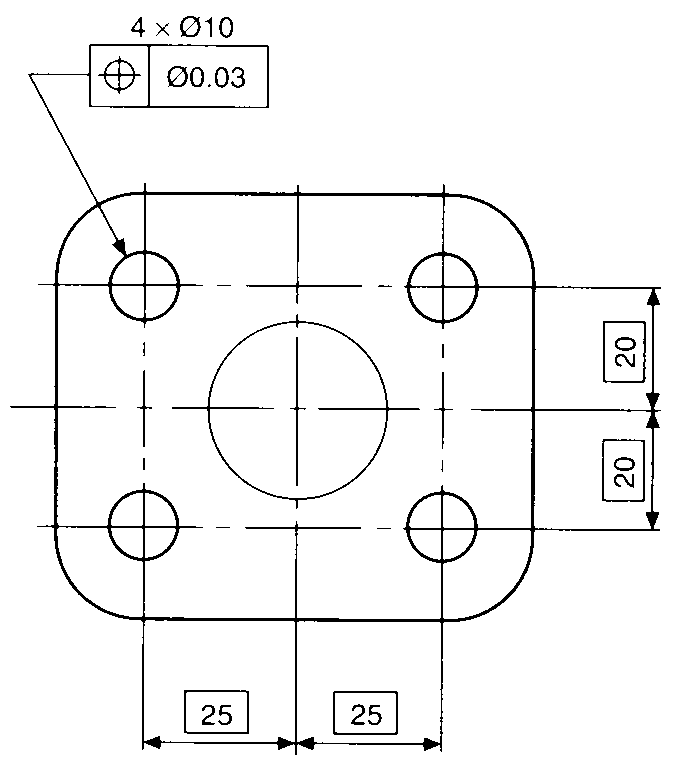}&
\includegraphics[width=.36\textwidth]{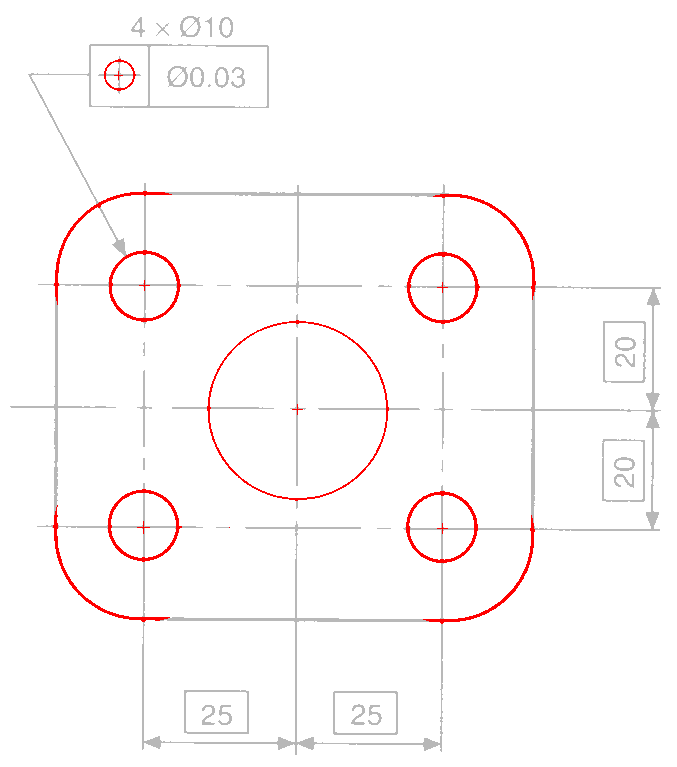}
\end{tabular}

\end{center}

\end{document}